\documentclass[journal,twoside,web]{ieeecolor}
\usepackage{silence}
\usepackage[dvipsnames]{xcolor}
\definecolor{subsectioncolor}{rgb}{0,0,0}
\usepackage{generic}
\def\BibTeX{{\rm B\kern-.05em{\sc i\kern-.025em b}\kern-.08em
    T\kern-.1667em\lower.7ex\hbox{E}\kern-.125emX}}
\usepackage{amsmath,amssymb,amsfonts}
\usepackage{algorithmic}
\usepackage{algorithm,algorithmic}
\usepackage{textcomp}
\usepackage{bbold}
\usepackage{url}
\usepackage{cite}
\allowdisplaybreaks
\usepackage[hidelinks]{hyperref}
\usepackage[noabbrev,capitalise,nameinlink]{cleveref}
\usepackage{nicefrac, xfrac}
\usepackage[draft]{changes}
\usepackage{enumerate}

\usepackage{graphicx}
\graphicspath{ {./Figures/} }
\usepackage{subcaption}

\usepackage{siunitx}
\DeclareSIUnit{\pu}{p.u.}
\DeclareSIUnit\year{yr}
\NewDocumentCommand\SIpi{}{\text{\ensuremath{\pi}}}
\usepackage{aliascnt}

\newaliascnt{proposition}{theorem}
\newtheorem{proposition}[proposition]{Proposition}
\aliascntresetthe{proposition}
\crefname{proposition}{Proposition}{Propositions}

\newaliascnt{corollary}{theorem}
\newtheorem{corollary}[corollary]{Corollary}
\aliascntresetthe{corollary}
\crefname{corollary}{Corollary}{Corollaries}

\newaliascnt{lemma}{theorem}
\newtheorem{lemma}[lemma]{Lemma}
\aliascntresetthe{lemma}
\crefname{lemma}{Lemma}{Lemmas}

\newtheorem{definition}{Definition}
\newtheorem{remark}{Remark}
\newtheorem{assumption}{Assumption}
\crefname{assumption}{Assumption}{Assumptions}

\usepackage[short]{optidef}

\usepackage{tikz}
\usetikzlibrary{positioning}
\usetikzlibrary{matrix}
\usetikzlibrary {arrows.meta}
\usetikzlibrary{shapes}
\usetikzlibrary{calc}
\usetikzlibrary{math}
\usetikzlibrary{decorations.markings}
\usetikzlibrary{angles,quotes}
\usepackage[european,straightvoltages]{circuitikz}

\usepackage{xr}
\newcommand{\lhb}[1]{\textcolor{black}{#1}} 
\newcommand{\lhc}[1]{\textcolor{black}{#1}} 
\newcommand{\lhg}[1]{\textcolor{black}{#1}} 
\newcommand{\lhh}[1]{\textcolor{black}{#1}} 
\newcommand{\lhi}[1]{\textcolor{black}{#1}} 

\title{\
Decentralised Plug-and-Play Stability Conditions for AC Grids---Part II: Unstable Subsystems}

\author{Liam Hallinan and Ioannis Lestas %
    \thanks{L. Hallinan and I. Lestas are with the Department of Engineering, University of Cambridge, Trumpington Street, Cambridge, CB2 1PZ, United Kingdom. Emails:
        {\tt\small <lh706, icl20>@cam.ac.uk}.}%
}

\begin{document}
\maketitle

\begin{abstract}
    Part I of this paper presented a decentralised framework for certifying small-signal stability in AC grids using frequency-domain quadratic constraints on individual grid subsystems. Part II extends the framework \lhg{to a broader class of networks that contain} unstable subsystems. 
    \lhh{In particular, we show that such unstable subsystems arise in many common scenarios, even when the aggregate system is stable and well behaved.}
    \lhg{Such subsystems must be stabilised by the closed-loop network interconnection, which complicates decentralised stability analysis}.
    A stable \textit{hybrid} \lhh{representation resembling the \lhi{power} ($PQ$) model at low frequencies and the \lhi{impedance} ($IV$) model at high frequencies} 
    is then defined, to which the stability framework of Part I can be applied, 
    \lhi{allowing plug-and-play compatible conditions to be formulated.}
    \lhi{Furthermore, we show that at low frequencies, the characteristic loci in the Nyquist plot of the return ratio in this hybrid representation decouple}
    into unbounded and bounded branches along the classical $P$--$\delta$ / $Q$--$V$ separation, and give sufficient conditions on each branch \lhi{for ensuring system stability}. 
    The results are validated on the Kundur two-area system, where \lhh{stability is certified, with} a decentralised, plug-and-play compatible grid code \lhh{covering} the frequency ranges in which electromagnetic interactions arise. 
\end{abstract}

\section{Introduction}
\IEEEPARstart{T}{he} increasingly distributed nature of AC power systems means that classical small-signal stability techniques --- such as eigenvalue analysis of monolithic state-space models --- are no longer scalable. In response, Part I of this paper proposed a framework that certifies stability by imposing quadratic constraints on the frequency-domain input--output properties of local subsystems, which collectively imply the multivariable Nyquist criterion for the entire grid.
The framework unifies many prior results and offers increased generality, including conditions on extended subsystems that enable plug-and-play functionality.

The case study in Part I demonstrated this reduced conservatism, but exposed two limitations. First, despite the grid being stable in closed loop, two of the bus impedances contained unstable poles, meaning the bus admittance representation had to be used for decentralised analysis. As the corresponding network system is not sparse \lhi{when an admittance representation is used}, this representation does not admit conditions on extended subsystems, forfeiting the plug-and-play benefits they offer. Second, decentralised conditions were infeasible at low frequencies unless a stiff voltage source was included in the network, so a centralised check was required in this region.

Part II of this paper addresses both limitations. We first present an equivalent power representation of the grid, here termed the $PQ$ model, that maps active and reactive powers to electrical phase angle and voltage magnitude. This representation typically appears in the power systems literature with simplified device dynamics \cite{devane_PrimaryFrequency_17, watson_ControlInterlinking_21, haberle_DecentralizedParametric_25, dey_PassivityBasedDecentralized_23, cifelli_DecentralizedSmall_25}, but here we allow for full-order device models. We then show that the $PQ$ model is related to the impedance representation of Part I (here referred to as the $IV$ model) by a static feedback transformation. 

Unstable bus impedances --- or, equivalently, non-minimum-phase bus admittances --- have been identified previously \cite{wen_InverseNyquist_17, liao_ImpedanceBasedStability_20, chen_LimitationsUsing_24}, but their origin has not been systematically characterised. By applying the Nyquist theorem to the loop linking the $IV$ and $PQ$ models, we show that such instabilities arise generically: droop-controlled buses typically exhibit low-frequency unstable $IV$ poles when the equilibrium reactive power is positive, while $PQ$ subsystems may exhibit high-frequency unstable poles when the bus contains a series passive impedance, such as a transformer or an inverter filter.
\lhg{The presence of these unstable poles greatly complicates the application of decentralised stability techniques as they must be stabilised by the network interconnection, which introduces centralised aspects into the analysis.}

\lhg{Because the mechanisms causing subsystem instabilities in the $IV$ and $PQ$ models act on well-separated timescales, we} propose a stable \textit{hybrid} subsystem \lhi{representation} that leverages a dynamic loop transformation so that each bus possesses the properties of the $PQ$ model at low frequencies and the $IV$ model at high frequencies. 
Similar frequency-dependent loop transformations have been proposed previously to recover passivity \cite{dey_PassivityBasedDecentralized_23} or phase sectoriality \cite{cifelli_DecentralizedSmall_25} at low frequencies, but here the bus transformation is constructed in feedback to remove open-loop instabilities, while the network system is modified additively and so remains stable and sparse. 

\lhg{The quadratic constraint-based framework of Part I is then applied to the hybrid model with stable subsystems,}
\lhi{which allows the use of} the extended-subsystem conditions and their plug-and-play capabilities. \lhg{The transformation to the hybrid model alters the network, however, so multipliers written for the untransformed network must be re-expressed for the new system. This changes how the homotopy parameter enters, which disrupts the property that allowed Part I to check conditions at a single homotopy point. We show that the Part I conditions nonetheless apply with a modified multiplier, and that the bus-level constraints need only be checked at the two endpoints of the homotopy.}

We next analyse the \lhg{low-frequency} Nyquist plot of the hybrid \lhi{representation}. \lhg{To the best of our knowledge, there are no existing decentralised results that deal with this range for systems without stiff voltage sources. We}
show that the characteristic loci decouple into sets of bounded and unbounded branches corresponding to dominant $P$--$\delta$ and $Q$--$V$ dynamics. Sufficient conditions on each set ensure that no encirclements of the point $-1$ occur, and these combine with the decentralised conditions at higher frequencies to certify grid stability. Moreover, we argue both sets are well behaved under standard network configurations and setpoints. In particular, the bounded branches are governed by a matrix closely related to the reduced power-flow Jacobian that operators already constrain during dispatch \cite{kundur_PowerSystem_22}. Centralised low-frequency analysis can therefore often be avoided, allowing decentralised grid codes to focus on more problematic high-frequency ranges.

Finally, the results of Part II are applied to the Kundur two-area system \cite{kundur_PowerSystem_22}, complementing the inverter-based case study of Part I. We show that decentralised conditions offering plug-and-play functionality \lhh{hold} at high frequencies, while centralised analysis at lower frequencies confirms that the bounded and unbounded branches are well behaved, \lhh{together certifying stability}.

Part II is structured as follows. \Cref{sec:background} summarises the necessary background theory. \Cref{sec:model} recaps the $IV$ model, introduces the $PQ$ model, shows how unstable subsystems arise, and presents the hybrid representation. \Cref{sec:stab} applies the stability framework of Part I to the hybrid model and \cref{sec:low_freq} analyses the low-frequency Nyquist plot, before the results are combined in \cref{sec:comb}. The case study is presented in \cref{sec:case_study} before concluding in \cref{sec:conclusion}. 

\section{Mathematical Background} \label{sec:background}
\subsection{Notation and Definitions}
The sets $\mathbb{R}$, $\mathbb{C}$, $\mathbb{C}_+$,  $\bar{\mathbb{C}}_+$, $\mathbb{H}^n$, $\Re(\cdot)$, and $\Im(\cdot)$ denote the real numbers, complex numbers, open right half-plane, closed right half-plane, Hermitian matrices, real part, and imaginary part, respectively. $I_n$ and $0_n$ denote the identity and zero matrices and $\mathbf{1}_n$ denotes the $n$-dimensional vector of ones. Define the matrix $J = \begin{bsmallmatrix} 0 & -1 \\ 1 & 0\end{bsmallmatrix}$. For matrices $A$, $B$, we let $A \otimes B$ denote the Kronecker product and $\sigma(A)$ denote the spectrum. For $A \in \mathbb{H}^n$, $A > 0$ ($A \geq 0$) denotes positive (semi)definiteness. For $A \in \mathbb{C}^{n \times n}$, we also use the numerical range $\mathcal{W}(A):= \{x^\ast A x | x \in \mathbb{C}^n, \lVert x \rVert = 1 \}$ [Part I, \cref{pt1-def:numerical_range}].
For an ordered indexed set $\mathcal{K}=\{\kappa_1, \ldots, \kappa_{|\mathcal{K}|}\}$, define $\oplus_{\kappa_k \in \mathcal{K}} B_k = \mathrm{diag} (B_1, \ldots, B_{|\mathcal{K}|})$ and $[a_k]_{\kappa_k \in \mathcal{K}} =[a_1^T, \ldots,a_{|\mathcal{K}|}^T]^T$ for matrices $B_k \in \mathbb{C}^{m_k \times n_k}$ and vectors $a_k \in \mathbb{C}^{n_k}$ associated with each $\kappa_k \in \mathcal{K}$.
\begin{definition} \label{def:group_inv}
    Let $A \in \mathbb{C}^{n\times n}$ be a matrix of index%
    \footnote{Defined as the smallest non-negative integer $k$ such that $\mathrm{rank}(A^k) = \mathrm{rank}(A^{k+1})$.}
    $1$. Then the \textit{group inverse} \cite{campbell_GeneralizedInverses_09, rose_LaurentExpansion_78} is defined as the unique matrix $A^\# \in \mathbb{C}^{n\times n}$ satisfying
    \begin{equation*}
        A A^\# A = A, \qquad A^\# A A^\# = A^\#, \qquad A A^\# = A^\# A.
    \end{equation*}
\end{definition}

The group inverse is the inverse of $A$ restricted to $\mathrm{range}(A)$ \cite{campbell_GeneralizedInverses_09}.
Let $P_0 \in \mathbb{C}^{n \times n}$ denote 
the oblique projector onto $\mathrm{null}(A)$ along 
$\mathrm{range}(A)$, so that
\begin{equation} \label{eq:group_inv_proj}
    \begin{gathered}
        \mathrm{range}(P_0) = \mathrm{null}(A), \quad
        \mathrm{null}(P_0) = \mathrm{range}(A), \\
        P_0^2 = P_0, \quad A P_0 = P_0 A = 0.
    \end{gathered}
\end{equation}
It then follows that 
\begin{equation} \label{eq:group_invser_proj_prop}
    A^\# P_0 = P_0 A^\# = 0, \qquad A A^\# = A^\# A = I - P_0.
\end{equation}

\subsection{Nyquist Theory}
Here, we restate the necessary linear systems theory from Part I. Recall the sets $\mathbf{R}^{m\times n}$, $\mathbf{RH}_\infty^{m \times n}$, and $\mathbf{RH}_{\infty,0}^{m \times n}$ are the sets of $m \times n$ proper, \lhb{real-}rational transfer functions, the subset containing no poles in $\bar{\mathbb{C}}_+$, and the subset containing no poles in $\bar{\mathbb{C}}_+ \setminus \{0\}$, respectively. For $P(s)$ and $K(s)$, of dimension $n\times m$ and $m \times n$, $[P(s),K(s)]$ denotes their negative-feedback interconnection, which we assume is well-posed \cite{zhou_RobustOptimal_96}. We impose the following definition of stability. 
\begin{definition} \label{def:stability}
    We say the interconnection is \textit{stable} if $(I+P(s)K(s))^{-1} \in \mathbf{RH}_{\infty,0}^{n\times n}$ \lhh{and, at $s=0$, has} \lhg{at most one pole.}
\end{definition}

The modified Nyquist contour in \cref{fig:tikz_nyquist_mod} is given by
\begin{equation} \label{eq:nyquist_contour_mod}
    \Gamma_N^{\,\epsilon} = \lim_{R \rightarrow \infty}
    \left\{ \Gamma_{j\omega}^{\epsilon-} \cup \Gamma_\epsilon \cup \Gamma_{j\omega}^{\epsilon+} \cup \Gamma_R \right\},
\end{equation}
where
    $ \Gamma_{j\omega}^{\epsilon-} = \left\{ s = j\omega \ | \ \omega \in \left[-R,-\epsilon \right] \right\}$,
    $ \Gamma_\epsilon = \left\{s =  \epsilon e^{j \varphi} \ | \ \varphi \in \left[ -\frac{\pi}{2}, \frac{\pi}{2} \right] \right\}$,
    $ \Gamma_{j\omega}^{\epsilon+} = \left\{ s = j\omega \ | \ \omega \in \left[\epsilon,R \right] \right\}$, 
    $\Gamma_R = \left\{s = Re^{j\varphi} \ | \ \varphi \in [\frac{\pi}{2}, -\frac{\pi}{2}]\right\}$,
and $\epsilon>0$ is sufficiently small. We let $\Gamma_N$ represent the contour shown in \cref{fig:tikz_nyquist_std} where $\epsilon = 0$. The multivariate Nyquist stability theorem [Part I, \cref{pt1-thm:nyquist}], \cite{desoer_GeneralizedNyquist_80} is used to conclude stability of the feedback configuration $[P(s),K(s)]$ under \cref{def:stability}. 

\begin{figure}[t]
    \centering
    \begin{subfigure}{0.2\textwidth}
        \centering
        \begin{tikzpicture}[>=Latex, line cap=round,scale=0.4]
\small
\def\R{3.2}      
\def\eps{0}   

\tikzset{
  nyqpath/.style={
    very thick,
    postaction={decorate},
    decoration={markings,
      mark=at position 0.15 with {\arrow{>}},
      mark=at position 0.28 with {\arrow{>}},
      mark=at position 0.55 with {\arrow{>}},
      mark=at position 0.9  with {\arrow{>}}
    }
  }
}

\draw[->] (-1.0,0) -- (\R+0.9,0) node[below right] {$\Re$};
\draw[->] (0,-\R-0.9) -- (0,\R+0.9) node[above left] {$\Im$};

\node[above left] at (0,\R) {$j\infty$};
\node[below left] at (0,-\R) {$-j\infty$};

\draw[nyqpath]
  (0,-\R)
    -- (0,\eps)
    arc[start angle=-90, end angle=90, radius=\eps cm]
    -- (0,\R)
    arc[start angle=90, end angle=-90, radius=\R cm];

\node at (\R*1.2,0.5) {$\Gamma_R$};
\node at (-0.75,\R*0.75) {$\Gamma_{j\omega}$};


\end{tikzpicture}
        \caption{$\Gamma_N$.}
        \label{fig:tikz_nyquist_std}
    \end{subfigure}
    \begin{subfigure}{0.2\textwidth}
        \centering
        \begin{tikzpicture}[>=Latex, line cap=round,scale=0.4]
\small
\def\R{3.2}      
\def\eps{0.5}   

\tikzset{
  nyqpath/.style={
    very thick,
    postaction={decorate},
    decoration={markings,
      mark=at position 0.13 with {\arrow{>}},
      mark=at position 0.34 with {\arrow{>}},
      mark=at position 0.55 with {\arrow{>}},
      mark=at position 0.9  with {\arrow{>}}
    }
  }
}

\draw[->] (-1.0,0) -- (\R+0.9,0) node[below right] {$\Re$};
\draw[->] (0,-\R-0.9) -- (0,\R+0.9) node[above left] {$\Im$};

\node[above left] at (0,\R) {$j\infty$};
\node[below left] at (0,-\R) {$-j\infty$};

\draw[nyqpath]
  (0,-\R)
    -- (0,-\eps)
    arc[start angle=-90, end angle=90, radius=\eps cm]
    -- (0,\R)
    arc[start angle=90, end angle=-90, radius=\R cm];

\node at (\R*1.2,0.5) {$\Gamma_R$};
\node at (2*\eps,0.5) {$\Gamma_{\epsilon}$};
\node at (-0.75,\R*0.75) {$\Gamma_{j\omega}^+$};
\node at (-0.75,-\R*0.75) {$\Gamma_{j\omega}^-$};


\end{tikzpicture}
        \caption{$\Gamma_N^{\,\epsilon}$.}
        \label{fig:tikz_nyquist_mod}
    \end{subfigure}
    \caption{Illustrations of the Nyquist contour and modified Nyquist contour with an indentation at the origin.}
    \label{fig:tikz_nyquist}
\end{figure}

For \lhg{the open-loop transfer function (or \textit{return ratio})} $L(s) = P(s)K(s) \in \mathbf{RH}_{\infty,0}^{n \times n}$, a sufficient condition to demonstrate stability of the (well-posed) feedback system $[P(s),K(s)]$ under \cref{def:stability} (where, \lhh{at $s=0$}, $(I+L(s))^{-1}$ has at most \lhg{one} pole) is to show there exists an $\bar{\epsilon}>0$ such that for all $\epsilon \in (0,\bar{\epsilon}]$, [Part I, \cref{pt1-lemma:nyquist_sufficient}]
\begin{equation} \label{eq:nyquist_condition}
    -1 \notin \sigma(\tau L(s)), \quad \forall s \in \Gamma_N^{\,\epsilon}, \ \forall \tau \in [0,1].
\end{equation}
The condition \eqref{eq:nyquist_condition} holds at $s \in \Gamma_N^{\,\epsilon}$ if, for all $\tau \in [0,1]$, there exists a \textit{multiplier} $\Pi(s) \in \mathbb{C}^{(n+m) \times (n+m)}$ such that the following conditions are satisfied [Part I, \cref{pt1-lemma:iqc}]: 
\begin{subequations} \label{eq:iqc_conditions}
     \begin{align}
         \begin{bmatrix}
            P(s) \\ I_m
        \end{bmatrix}^\ast & 
        \begin{bmatrix}
            \Pi_{11}(s) & \Pi_{12}(s)^\ast \\ \Pi_{12}(s) & \Pi_{22}(s)
        \end{bmatrix}
        \begin{bmatrix}
            P(s) \\ I_m
        \end{bmatrix} > 0, \label{eq:iqc_conditions_top}%
        \\
         \begin{bmatrix}
            I_n \\ -\tau K(s)
        \end{bmatrix}^\ast &
        \begin{bmatrix}
            \Pi_{11}(s) & \Pi_{12}(s)^\ast \\ \Pi_{12}(s) & \Pi_{22}(s)
        \end{bmatrix}
        \begin{bmatrix}
            I_n \\ - \tau K(s)
        \end{bmatrix} \leq 0, \label{eq:iqc_conditions_btm}
    \end{align}
\end{subequations}
where $\Pi_{11}(s) \in \mathbb{H}^n$, $\Pi_{12}(s) \in \mathbb{C}^{m \times n}$ and $\Pi_{22}(s) \in \mathbb{H}^m$. 
The inequality \eqref{eq:iqc_conditions_btm} holds for all $\tau \in [0,1]$ if it holds at $\tau = 1$ and both $\Pi_{11}(s) \leq 0$ and $\Pi_{22}(s) \geq 0$ [Part I, \cref{pt1-cor:iqc_convex}]. 

\section{Grid Model} \label{sec:model}
\lhc{This section reviews the impedance representation of Part I, hereafter referred to as the $IV$ model, and introduces the power representation, termed the $PQ$ model, \lhg{where the subsystems are reformulated with respect to different inputs and outputs associated with real and reactive power}. The $PQ$ model is a natural alternative in power systems since droop and power-synchronisation laws often couple active and reactive \lhi{power} to frequency and voltage magnitude. Linking the two models through a static feedback transformation allows us to use Nyquist analysis in \cref{sec:model_instab} to show that each model may exhibit unstable poles \lhg{in the underlying subsystems} in many common scenarios, \lhg{even when the aggregate system is stable and well behaved}. \Cref{sec:model_hybrid} then combines the two models to produce a stable \textit{hybrid} model that preserves network sparsity, allowing the stability framework of Part I to be applied and plug-and-play conditions to be \lhg{formulated}.}

\subsection{Review of \textit{IV} Model} \label{sec:model_iv}
The three-phase AC grid is modelled as the directed graph $(\mathcal{V}_N,\mathcal{E}_N)$, where $\mathcal{V}_N = \mathcal{V}_B \cup\{\nu_0\}$, with $\mathcal{V}_B := \{\nu_1, \ldots, \nu_{N_B}\}$ the set of $N_B$ buses and $\nu_0$ the ground node, and $\mathcal{E}_N = \mathcal{E}_P \cup \mathcal{E}_0 $, with $\mathcal{E}_P$ the set of $N_P$ power lines (with elements $(\nu_i,\nu_j)$ denoting a line from $\nu_j$ to $\nu_i$) and $\mathcal{E}_0$ the set of loads and shunt connections (with elements $(\nu_i,\nu_0)$).
\begin{assumption} \label{assump:connected}
    The subgraph $(\mathcal{V}_B,\mathcal{E}_P) \subseteq (\mathcal{V}_N,\mathcal{E}_N)$ is connected.
\end{assumption}

The interconnection structure of the power system is captured by the incidence matrix $B_N\in\mathbb{R}^{N_B \times N_P}$ for the bus--line subgraph $(\mathcal{V}_B,\mathcal{E}_P)$, with $(i,k)^{\mathrm{th}}$ entry given by
\begin{equation} \label{eq:incidence_matrix}
    B_N^{(i,k)} = \begin{cases}
        1, & \text{if $\varepsilon_k \equiv (\nu_i,\nu_j) \in \mathcal{E}_P$, for some $\nu_j$}, \\
        -1, & \text{if $\varepsilon_k \equiv (\nu_j,\nu_i) \in \mathcal{E}_P$, for some $\nu_j$}, \\
        0, & \text{otherwise.}
    \end{cases}
\end{equation}

At each bus $\nu_i \in \mathcal{V}_B$, we associate a balanced three-phase voltage and current injection, which can be translated to a common $DQ$ reference frame rotating at constant angular frequency $\omega_0 >0$ to obtain $v_i^{DQ}(t) \in \mathbb{R}^2$ and $i_i^{DQ}(t) \in \mathbb{R}^2$. Similarly, with each bus $\nu_i \in \mathcal{V}_B$, we associate a \textit{local} $dq$ reference frame rotating at the dynamic local angular frequency $\omega_i(t)>0$. A three-phase signal $x_i^{dq}(t) \in \mathbb{R}^2$ in the local reference frame is translated to the common frame via $x_i^{DQ}(t) = T(\delta_i(t))x_i^{dq}(t)$, where
\begin{equation} \label{eq:rotation_matrix}
    T(\delta_i(t)) = \begin{bmatrix}
        \cos(\delta_i(t)) & -\sin(\delta_i(t)) \\ \sin(\delta_i(t)) & \cos(\delta_i(t))
    \end{bmatrix},
\end{equation}
and $\delta_i(t)$ is the instantaneous angular difference in reference frames given as the solution to 
\begin{equation} \label{eq:delta_dot}
    \dot{\delta}_i(t) = \omega_i(t) - \omega_0,
\end{equation}
which implies that $\omega_i(t) = \omega_0$ at equilibrium. Without loss of generality, we define the local $d$-axis to be aligned with the bus voltage vector so that $v_i^{dq}(t) = [V_i(t), 0]^T$, where $V_i(t) > 0$ is the voltage magnitude. Therefore, using \eqref{eq:rotation_matrix}, we obtain
\begin{equation} \label{eq:vDQ_in_polar}
    \begin{bmatrix}
        v_i^D(t) \\ v_i^Q(t)
    \end{bmatrix} = \begin{bmatrix}
        V_i(t) \cos(\delta_i(t)) \\ V_i(t) \sin(\delta_i(t))
    \end{bmatrix}.
\end{equation}

We define the reference-frame invariant net active- and reactive-power flows at bus $\nu_i \in \mathcal{V}_B$ as
\begin{subequations} \label{eq:powers}
    \begin{align}
        P_i &= {v_i^{DQ}}^T i_i^{DQ} = v_i^Di_i^D + v_i^Qi_i^Q, \label{eq:active_power} \\
        Q_i & = {v_i^{DQ}}^T J i_i^{DQ} = v_i^Qi_i^D - v_i^Di_i^Q. \label{eq:reactive_power}
    \end{align}
\end{subequations}

As described in Part I, at each $\nu_i \in \mathcal{V}_B$, \lhc{we linearise the dynamics of the connected device about the network equilibrium in the common $DQ$ reference frame}
to obtain the $2 \times 2$ \lhc{Laplace-domain} bus impedance $Z_i(s)$, defined by 
\begin{equation} \label{eq:bus_iv_relationship}
    \Delta v_i^{DQ}(s) = - Z_i(s) \Delta i_i^{DQ}(s).
\end{equation}

Similarly, each line, load, or shunt $(\nu_i,\nu_j) \in \mathcal{E}_N$ is modelled in the Laplace domain by the branch admittance, defined by
\begin{equation} \label{eq:line_dynamics_s}
    \Delta i_{ij}^{DQ}(s) = Y_{ij}(s) \left( \Delta v_i^{DQ}(s) - \Delta v_j^{DQ}(s) \right),
\end{equation}
where $i_{ij}$ is the associated branch current. 
For a load or shunt in $\mathcal{E}_0$, $\Delta v_0^{DQ}(s) = [0,0]^T$, so we define $Y_i(s):= Y_{i0}(s)$. If a load or shunt at $\nu_i \in \mathcal{V}_B$ is absent, we let $Y_i(s) = 0_2$. 

\begin{figure}[t]
\centering
\begin{tikzpicture}[scale=1]
\small

\node[matrix, draw,
    row sep=1pt, column sep=1pt,
    inner sep=1pt]
    (sys_Z) at (0,0)
    {
        \node{$Z_1(s)$}; & & \\[-8pt]
        & \node{$\ddots$}; & \\
        & & \node{$Z_{N_B}(s)$}; \\
    };
    \node at (sys_Z.south east)[right=3pt,yshift=6pt]{$Z_B(s)$};
\node[matrix, draw=black,
    below=of sys_Z,yshift=+20pt,
    row sep=1pt, column sep=1pt,
    inner sep=1pt]
    (sys_Yp) 
    {
        \node{$Y_{ij}(s)$}; & & \\[-8pt]
        & \node{$\ddots$}; & \\
        & & \node{$Y_{pq}(s)$}; \\
    };
    \node at (sys_Yp.south east)[right=3pt,yshift=1pt]{$Y_P(s)$};
\node[matrix, draw=black,
    below=of sys_Yp,yshift=+20pt,row sep=1pt, column sep=1pt,
    inner sep=1pt]
    (sys_Y0) 
    {
        \node{$Y_1(s)$}; & & \\[-8pt]
        & \node{$\ddots$}; & \\
        & & \node{$Y_{N_B}(s)$}; \\
    };
    \node at (sys_Y0.south east)[right=3pt,yshift=6pt]{$Y_0(s)$};

\node[draw, circle, minimum size=0.75cm,
            left=of sys_Yp,xshift=20pt]
            (B)  
            {$\mathcal{B}_N$};
\node[draw, circle, minimum size=0.75cm,
            right=of sys_Yp,xshift=-20pt]
            (Bt)   {$\mathcal{B}_N^T$};

\node[draw, circle, minimum size=0.05cm,
    left=of B,xshift=15pt]
    (left1) 
    {};
    \node[yshift=5pt,xshift=5pt] at (left1.north east){$+$};
    \node[yshift=-5pt,xshift=5pt] at (left1.south east){$+$};
\node[right= of Bt,xshift=-20pt] (right1){};
\node[right= of right1,xshift=-15pt] (right2){};
\node[draw, circle, minimum size=0.05cm]
    (neg) at (-4,0)
    {};
    \node[yshift=-5pt,xshift=5pt] at (neg.south east){$-$};

\node[draw,black,thin,dotted, anchor=north west,
    minimum width =7cm, minimum height =3.1cm]
    (box) at (-3.6,-0.8) {};
    \node at (box.south east)[right=3pt,yshift=6pt]{$Y_N(s)$};

\draw[-stealth] (neg.east) -- (sys_Z.west)
    node[midway,above]{$-\Delta i_B^{DQ}$};
\draw[] (sys_Z.east) -| (right2.center)
    node[pos=0.25,above]{$\Delta v_B^{DQ}$};
\draw[-stealth] (right2.center) -- (Bt.east);
\draw[-stealth] (right1.center) |- (sys_Y0.east);
\draw[-stealth] (Bt.west) -- (sys_Yp.east);
\draw[-stealth] (sys_Yp.west) -- (B.east);
\draw[-stealth] (B.west) -- (left1.east);
\draw[-stealth] (sys_Y0.west) -| (left1.south);
\draw[-stealth] (left1.west) -| (neg.south);

\end{tikzpicture}
\caption{Negative-feedback interconnection of the bus impedances in $Z_B(s)$ and the network admittance $Y_N(s)$.}
\label{fig:tikz_model_iv}
\end{figure}

Letting $\Delta v_B^{DQ}(s) = [\Delta v_i^{DQ}(s)]_{\nu_i \in \mathcal{V}_B}$ and $\Delta i_B^{DQ}(s) = [\Delta i_i^{DQ}(s)]_{\nu_i \in \mathcal{V}_B}$, Kirchhoff's current law allows us to write
\begin{equation} \label{eq:kcl_matrix_form}
    \Delta i_B^{DQ}(s) = Y_N(s) \Delta v_B^{DQ}(s),
\end{equation}
where $Y_N(s)$ is the stable $2N_B \times 2N_B$ sparse network admittance matrix given by
\begin{equation} \label{eq:network_admittance}
    Y_N(s) = \mathcal{B}_N Y_P(s) \mathcal{B}_N^T + Y_0(s),
\end{equation}
where $\mathcal{B}_N = B_N \otimes I_2$ (with $B_N$ defined in \eqref{eq:incidence_matrix}), $Y_P(s) = \oplus_{(\nu_i,\nu_j) \in \mathcal{E}_P} Y_{ij}(s)$, and $Y_0(s) = \oplus_{\nu_i \in \mathcal{V}_B} Y_i(s)$. Similarly, letting $Z_B(s) = \oplus_{\nu_i \in \mathcal{V}_B} Z_i(s) $, we obtain
\begin{equation} \label{eq:kvl_matrix_form}
    \Delta v_B^{DQ} = - Z_B(s) \Delta i_B^{DQ}.
\end{equation}
Together, \eqref{eq:kcl_matrix_form} and \eqref{eq:kvl_matrix_form} form the negative-feedback interconnection $[Z_B(s),Y_N(s)]$ between the bus dynamics and the network admittance matrix, as illustrated in \cref{fig:tikz_model_iv}.

\subsection{PQ Model} \label{sec:model_pq}
As an alternative to the $IV$ model, we now model the grid as a feedback interconnection coupling the bus active and reactive powers to electrical phase angle and voltage magnitude.

First, at each $\nu_i \in \mathcal{V}_B$, we linearise \eqref{eq:powers} with respect to $v_i^{DQ}$ and $i_i^{DQ}$ and convert to the Laplace domain to get
\begin{equation} \label{eq:pq_deriv_1}
    \begin{aligned}
        \begin{bmatrix}
            \Delta P_i(s) \\ \Delta Q_i(s)
        \end{bmatrix}
        & = \begin{bmatrix}
            i_i^{D\star} &  i_i^{Q\star}  \\ -i_i^{Q\star} &  i_i^{D\star}
        \end{bmatrix} \Delta v_i^{DQ}(s) \! \\
        &\qquad+ \! \begin{bmatrix}
            v_i^{D\star} &  v_i^{Q\star} \\ v_i^{Q\star} & - v_i^{D\star}
        \end{bmatrix} \Delta i_i^{DQ}(s),
    \end{aligned}
\end{equation}
where $(\cdot)^\star$ represents the equilibrium value of the respective variable. Next, linearising \eqref{eq:vDQ_in_polar} gives
\begin{equation} \label{eq:pq_deriv_2}
    \begin{aligned}
        \Delta v_i^{DQ}(s)
        & = \begin{bmatrix}
            -V_i^\star \sin(\delta_i^\star) & \cos(\delta_i^\star) \\
            V_i^\star \cos(\delta_i^\star) & \sin(\delta_i^\star)
        \end{bmatrix} \begin{bmatrix}
            \Delta \delta_i(s) \\ \Delta V_i(s)
        \end{bmatrix} \\
        & = \begin{bmatrix}
            -v_i^{Q\star} & v_i^{D\star} \\ v_i^{D\star} & v_i^{Q\star}
        \end{bmatrix} \begin{bmatrix}
            \Delta \delta_i(s) \\ \frac{\Delta V_i(s)}{V_i^\star}
        \end{bmatrix}.
    \end{aligned}
\end{equation}
Using \eqref{eq:pq_deriv_2}, the first term in \eqref{eq:pq_deriv_1} becomes
\begin{equation} \label{eq:pq_deriv_3}
    \begin{aligned}
        \begin{bmatrix}
                i_i^{D\star} & i_i^{Q\star} \\ -i_i^{Q\star} &  i_i^{D\star}
            \end{bmatrix} \Delta v_i^{DQ}(s)
            = \begin{bmatrix}
                -Q_i^\star & P_i^\star \\ P_i^\star & Q_i^\star
            \end{bmatrix} \begin{bmatrix}
            \Delta \delta_i(s) \\ \frac{\Delta V_i(s)}{V_i^\star}
        \end{bmatrix}.
    \end{aligned}
\end{equation}
Now, define the matrices appearing in \eqref{eq:pq_deriv_1}–\eqref{eq:pq_deriv_3} as
\begin{equation} \label{eq:UW_defs}
    \begin{gathered}
        \begin{aligned}
            U_i^\dagger := \begin{bmatrix} v_i^{D\star} & v_i^{Q\star} \\ v_i^{Q\star} & -v_i^{D\star} \end{bmatrix}, &&
            U_i^\ddagger := \begin{bmatrix} -v_i^{Q\star} & v_i^{D\star} \\ v_i^{D\star} & v_i^{Q\star} \end{bmatrix},
        \end{aligned} \\
        W_i := \begin{bmatrix} -Q_i^\star & P_i^\star \\ P_i^\star & Q_i^\star \end{bmatrix},
    \end{gathered}
\end{equation}
where we note $U_i^\dagger$ and $U_i^\ddagger$ are symmetric and
\begin{equation} \label{eq:U_squared}
    U_i^\dagger U_i^\dagger = U_i^\ddagger U_i^\ddagger = {V_i^\star}^2I_2,
\end{equation}
so $(U_i^\dagger)^{-1} = \frac{1}{{V_i^\star}^2} U_i^\dagger$ and $(U_i^\ddagger)^{-1} = \frac{1}{{V_i^\star}^2} U_i^\ddagger$.
The matrix $W_i$ is zero if and only if the bus carries no power. Furthermore, $\mathrm{tr}(W_i) = 0$ and $\det(W_i) = -{S_i^\star}^2$, where $S_i^\star := \sqrt{{P_i^\star}^2 + {Q_i^\star}^2}$ is the apparent power at $\nu_i \in \mathcal{V}_B$, so the eigenvalues of $W_i$ are $\pm S_i^\star$.
Let
\begin{equation*}
    \Delta S_i(s) := [\Delta P_i(s), \Delta Q_i(s)]^T\!,
    \ \:
    \Delta \phi_i(s) := [\Delta \delta_i(s), \tfrac{\Delta V_i(s)}{{V_i^\star}}]^T\!,
\end{equation*}
and define $U^\dagger = \oplus_{\nu_i \in \mathcal{V}_B}U_i^\dagger$, $U^\ddagger = \oplus_{\nu_i \in \mathcal{V}_B}U_i^\ddagger$, $W = \oplus_{\nu_i \in \mathcal{V}_B}W_i$, $\Delta S(s) = [\Delta S_i(s)]_{\nu_i \in \mathcal{V}_B}$, and $\Delta \phi(s) = [\Delta \phi_i(s)]_{\nu_i \in \mathcal{V}_B}$. 
Using \eqref{eq:network_admittance}, the network dynamics defining the active and reactive power injections at each bus are therefore given by
\begin{equation*}
    \begin{split}
        \Delta S(s)
        & = U^\dagger Y_N(s) \Delta v_B^{DQ}(s) +  W \Delta \phi(s) \\
        & = (U^\dagger Y_N(s) U^\ddagger + W) \Delta \phi(s).
    \end{split}
\end{equation*}
The network dynamics are therefore given by the transfer function
\begin{equation} \label{eq:N_def}
    N_{PQ}(s) := U^\dagger Y_N(s) U^\ddagger + W,
\end{equation}
which we note \lhg{inherits stability and sparsity from $Y_N(s)$, as $U^\dagger$, $U^\ddagger$ and $W$ are block-diagonal and static. At low frequencies, the entries of $N_{PQ}(s)$ are related to the power-flow equations, as shown in \cref{sec:low_freq}.}

\begin{figure}[t]
\centering
\begin{tikzpicture}
\small


\node[matrix, draw,
    row sep=1pt, column sep=1pt,
    inner sep=1pt]
    (sys_Z) at (0,0)
    {
        \node{$G_1(s)$}; & & \\[-8pt]
        & \node{$\ddots$}; & \\
        & & \node{$G_{N_B}(s)$}; \\
    };
    \node at (sys_Z.south east)[right=3pt,yshift=6pt]{$G_B(s)$};

\node[draw=black,
    below=of sys_Z,yshift=0.8cm,
    minimum size = 1cm]
    (sys_Yp) 
    {$Y_N(s)$};
    
\node[draw=black,
    below=of sys_Yp,yshift=0.5cm]
    (sys_Y0) 
    {$ \bigoplus\limits_{\nu_i \in \mathcal{V}_B} \begin{bsmallmatrix}
                -Q_i^\star & P_i^\star \\ P_i^\star & Q_i^\star
            \end{bsmallmatrix}$};
    \node at (sys_Y0.south east)[right=3pt,yshift=6pt]{$W$};

\node[draw=black,
    left=of sys_Yp,xshift=0.8cm]
    (B) 
    {$\bigoplus\limits_{\nu_i \in \mathcal{V}_B}\begin{bsmallmatrix}
                v_i^{D\star} & v_i^{Q\star} \\ v_i^{Q\star} & - v_i^{D\star}
            \end{bsmallmatrix}$};
    \node at (B.south east)[left=1pt,yshift=-6pt]{$U^\dagger$};
\node[draw=black,
    right=of sys_Yp,xshift=-0.8cm]
    (Bt) 
    {$\bigoplus\limits_{\nu_i \in \mathcal{V}_B} \begin{bsmallmatrix}
                -v_i^{Q\star} & v_i^{D\star} \\ v_i^{D\star} & v_i^{Q\star}
            \end{bsmallmatrix}$};
    \node at (Bt.south east)[left=1pt,yshift=-6pt]{$U^\ddagger$};

\node[draw, circle, minimum size=0.05cm,
    left=of B,xshift=0.5cm]
    (left1) 
    {};
    \node[yshift=5pt,xshift=5pt] at (left1.north east){$+$};
    \node[yshift=-5pt,xshift=5pt] at (left1.south east){$+$};
\node[right= of Bt,xshift=-0.8cm] (right1){};
\node[right= of right1,xshift=-1cm] (right2){};
\node[draw, circle, minimum size=0.05cm]
    (neg) at (-4.4,0)
    {};
    \node[yshift=-5pt,xshift=5pt] at (neg.south east){$-$};

\node[draw,black,thin,dotted,anchor=north west,
    minimum width =8.2cm, minimum height =2.7cm]
    (box) at (-4.3,-0.7) {};
    \node at (box.south east)[left=3pt,yshift=7pt]{$N_{PQ}(s)$};

\draw[-stealth] (neg.east) -- (sys_Z.west)
    node[pos=0.5,above]{$-\begin{bmatrix} \Delta P_i(s) \\ \Delta Q_i(s)\end{bmatrix}_{\nu_i \in \mathcal{V}_B}$};
\draw[] (sys_Z.east) -| (right2.center)
    node[pos=0.25,above]{$\begin{bmatrix} \Delta \delta_i(s) \\ \nicefrac{\Delta V_i(s)}{V_i^\star} \end{bmatrix}_{\nu_i \in \mathcal{V}_B}$};
\draw[-stealth] (right2.center) -- (Bt.east);
\draw[-stealth] (right1.center) |- (sys_Y0.east);
\draw[-stealth] (Bt.west) -- (sys_Yp.east);
\draw[-stealth] (sys_Yp.west) -- (B.east);
\draw[-stealth] (B.west) -- (left1.east);
\draw[-stealth] (sys_Y0.west) -| (left1.south);
\draw[-stealth] (left1.west) -| (neg.south);

\end{tikzpicture}
\caption{Negative-feedback interconnection of the bus systems in $G_B(s)$ and the network system $N_{PQ}(s)$.}
\label{fig:tikz_model_pq}
\end{figure}

At each bus $\nu_i \in \mathcal{V}_B$, we define the transfer function $G_i(s)$ \lhg{that contains the aggregate bus dynamics}:
\begin{equation} \label{eq:bus_pq_relationship}
    \begin{bmatrix}
        \Delta \delta_i(s) \\ \frac{\Delta V_i(s)}{V_i^\star}
    \end{bmatrix} = -G_i(s) \begin{bmatrix}
        \Delta P_i(s) \\ \Delta Q_i(s)
    \end{bmatrix},
\end{equation}
and let $G_B(s) = \oplus_{\nu_i \in \mathcal{V}_B} G_i(s)$. From \eqref{eq:delta_dot}, we have $\Delta \delta_i(s) =
\frac{1}{s} \Delta \omega_i(s)$, so $G_i(s)$ takes the form
\begin{equation} \label{eq:pq_transfer_function}
    G_i(s) = \begin{bmatrix}
        \frac{1}{s} G_{\omega p, i}(s) & \frac{1}{s} G_{\omega q,i}(s) \\ G_{vp,i}(s) & G_{vq,i}(s)
    \end{bmatrix},
\end{equation}
\lhc{where the first subscript indicates the output (frequency or normalised voltage magnitude) and the second subscript indicates the input (active or reactive power). Note that $G_i(s)$ has \lhg{one} pole at the origin.}

The bus and network systems are then arranged to obtain the negative-feedback interconnection $[G_B(s),N_{PQ}(s)]$, as illustrated in \cref{fig:tikz_model_pq}.

Furthermore, we can link the $IV$ and $PQ$ transfer functions at each $\nu_i \in \mathcal{V}_B$ by substituting \eqref{eq:pq_deriv_1}-\eqref{eq:pq_deriv_3} into \eqref{eq:bus_pq_relationship}
\lhc{and comparing with \eqref{eq:bus_iv_relationship} to obtain}
\begin{equation} \label{eq:iv_from_pq}
    Z_i(s) = U_i^\ddagger (I+G_i(s)W_i)^{-1} G_i(s) U_i^\dagger.
\end{equation}
This shows that the $IV$ bus impedance is obtained by placing $G_i(s)$ in negative feedback with $W_i$ and pre- and post-multiplying by $U_i^\ddagger$ and $U_i^\dagger$, as illustrated by \cref{fig:tikz_pq_to_iv_fb}. Similarly, using \eqref{eq:U_squared}, we invert \eqref{eq:iv_from_pq} to obtain
\begin{equation} \label{eq:pq_from_iv}
    G_i(s) =  (I-\check{Z}_i(s) W_i)^{-1} \check{Z}_i(s),
\end{equation}
where $\check{Z}_i(s) := \frac{1}{{V_i^\star}^4}U_i^\ddagger Z_i(s)U_i^\dagger$, giving $G_i(s)$ as the positive-feedback interconnection of $\check{Z}_i(s)$ and $W_i$. We can therefore conclude that a conversion between $PQ$ and $IV$ models can be performed by using a loop transformation involving the matrices $U_i^\dagger$, $U_i^\ddagger$, and $W_i$.

\begin{figure}[t]
\centering
\begin{tikzpicture}
\small

\node[draw,minimum width =1.25cm, minimum height =0.6cm]
    (sys_Z) at (0,0)
    {$G_i(s)$};

\node[draw=black,
    above=of sys_Z,yshift=-0.8cm,
    minimum width =1.25cm, minimum height =0.6cm]
    (sys_Y0) 
    {$W_i$};

\node[draw, circle, minimum size=0.05cm,
    left=of sys_Z,xshift=0.5cm]
    (left1) 
    {};
    \node[yshift=5pt,xshift=5pt] at (left1.north east){$-$};
    \node[yshift=-5pt,xshift=-5pt] at (left1.south west){$+$};
\node[draw=black,
    left=of left1,xshift=0.5cm,
    minimum width =1.25cm, minimum height =0.6cm]
    (B) 
    {$U_i^\dagger$};
\node[right= of sys_Z,xshift=-0.5cm] (right1){};
\node[draw=black,
    right=of right1,xshift=-0.5cm,
    minimum width =1.25cm, minimum height =0.6cm]
    (Bt) 
    {$U_i^\ddagger$};
\node[right= of Bt,xshift=-0.5cm] (right2){};
\node[left=of B,xshift=0.5cm]
    (neg)
    {};

\node[draw,black,thin,dotted,anchor=north west,
    minimum width =7cm, minimum height =1.7 cm]
    (box) at (-3.6,1.2) {};
    \node at (box.north east)[left=3pt,yshift=-8pt]{$Z_i(s)$};

\draw[-stealth] (neg.east) -- (B.west);
\draw[-stealth] (B.east) -- (left1.west);
\draw[-stealth] (left1.east) -- (sys_Z.west);
    
\draw[] (sys_Z.east) -- (right1.center);
\draw[-stealth] (right1.center) -- (Bt.west);
\draw[-stealth] (Bt.east) -- (right2.center);
\draw[-stealth] (right1.center) |- (sys_Y0.east);
\draw[-stealth] (sys_Y0.west) -| (left1.north);


\end{tikzpicture}
\caption{Relationship between the $PQ$ and $IV$ bus transfer functions.}
\label{fig:tikz_pq_to_iv_fb}
\end{figure}

\subsection{Unstable Poles in \textit{IV} and \textit{PQ} Models} \label{sec:model_instab}
We now \lhc{use the feedback relationship between the $IV$ and $PQ$ models to show} that both representations can exhibit unstable \lhc{open-loop}\footnote{
\lhg{Here, the open-loop system refers to the dynamics where the feedback loop with the network interconnection in \cref{fig:tikz_model_iv} or \cref{fig:tikz_model_pq} is open. The individual bus systems may still contain local feedback control policies.}
} poles in common bus configurations. 
\lhc{This complicates decentralised analysis because the sufficient condition \eqref{eq:nyquist_condition} requires a \lhg{return ratio} with no poles in \lhh{$\bar{\mathbb{C}}_+ \setminus \{0\}$}. When subsystems are open-loop unstable, stabilisation is typically supplied by the interconnection itself, reintroducing centralised elements into the analysis.}
\lhc{It also explains the unstable impedances found in the case study of Part I, which forced the use of the admittance representation and blocked the extended-subsystem conditions that enable plug-and-play operation. We treat each representation in turn.}

\subsubsection{Low-frequency poles in \textit{IV} impedances} \label{sec:model_instab_iv}
Consider a bus $\nu_i \in \mathcal{V}_B$ that links frequency to active power and voltage magnitude to reactive power using a droop control law. The entries of the transfer function \eqref{eq:pq_transfer_function} for such a system are
\begin{equation} \label{eq:droop_tf}
    \begin{aligned}
        & G_{\omega p,i}(s) = \tfrac{k_{\omega,i}}{\tau_{\omega,i}s + 1}, &&
        G_{\omega q,i }(s) = 0, \\
        & G_{v p,i}(s) = 0, &&
        G_{v q,i}(s) = \tfrac{k_{v,i}}{\tau_{v,i}s + 1},
    \end{aligned}
\end{equation}
where $k_{\omega , i}, k_{v,i} >0$ are gains and $\tau_{\omega ,i}, \tau_{v,i} \geq 0$ are time constants.

As $G_i(s) \in \mathbf{RH}_{\infty,0}^{2 \times 2}$ \lhc{with \lhg{one} pole at the origin}, using [Part I, \cref{pt1-thm:nyquist}], we can determine the number of right half-plane poles in $Z_i(s)$ by counting the number of encirclements of the point $-1$ by the characteristic loci of $L_{Z,i}(s) = G_i(s)W_i$ as $s$ traverses the indented Nyquist contour $\Gamma_N^{\,\epsilon}$ in \eqref{eq:nyquist_contour_mod}.
In the following lemma, we evaluate the characteristic loci of $L_{Z,i}(s)$ as $s \rightarrow 0$.

\begin{lemma} \label{lemma:iv_unstable}
Consider a bus $\nu_i \in \mathcal{V}_B$ with $PQ$ transfer function $G_i(s)$ as in \eqref{eq:pq_transfer_function} with entries given by \eqref{eq:droop_tf}, \lhc{and with $Q_i^\star \neq 0$}. As $s \rightarrow 0$, the eigenvalues of the \lhg{return ratio} $L_{Z,i}(s) = G_i(s)W_i$ with $W_i$ in \eqref{eq:UW_defs} are given by
\begin{equation} \label{eq:instab_iv_loci}
    \begin{aligned}
        \lambda_-(s) &= -\tfrac{Q_i^\star k_{\omega,i}}{s} + Q_i^\star k_{\omega,i}\tau_{\omega,i} - \tfrac{{P_i^\star}^2 k_{v,i}}{Q_i^\star} + \mathcal{O}(s), \\
        \lambda_+(s) &= Q_i^\star k_{v,i} + \tfrac{{P_i^\star}^2 k_{v,i}}{Q_i^\star} + \mathcal{O}(s).
    \end{aligned}
\end{equation}
\end{lemma}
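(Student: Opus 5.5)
The plan is a direct $2\times 2$ computation. I will use the fact that the eigenvalues of $L_{Z,i}(s)$ are determined by its trace and determinant, and identify one eigenvalue as bounded and the other as growing like $1/s$. Write $a(s) := G_{\omega p,i}(s) = \frac{k_{\omega,i}}{\tau_{\omega,i}s+1}$ and $b(s) := G_{vq,i}(s) = \frac{k_{v,i}}{\tau_{v,i}s+1}$. By \eqref{eq:droop_tf}, $G_i(s) = \mathrm{diag}(a(s)/s,\, b(s))$. Multiplying by $W_i$ from \eqref{eq:UW_defs} gives
\begin{equation*}
L_{Z,i}(s) = \begin{bmatrix} -\tfrac{a(s)}{s}Q_i^\star & \tfrac{a(s)}{s}P_i^\star \\ b(s)P_i^\star & b(s)Q_i^\star \end{bmatrix}.
\end{equation*}
Its trace and determinant are therefore
\begin{equation*}
t(s) = -\tfrac{a(s)Q_i^\star}{s} + b(s)Q_i^\star, \qquad d(s) = -\tfrac{a(s)b(s)}{s}{S_i^\star}^2 .
\end{equation*}
The second identity uses $\det W_i = -{S_i^\star}^2$. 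The two eigenvalues $\lambda_\pm(s)$ are the roots of $\lambda^2 - t(s)\lambda + d(s)$, so $\lambda_+ + \lambda_- = t$ and $\lambda_+\lambda_- = d$.

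Next I separate the two roots. Since $Q_i^\star \neq 0$ and $k_{\omega,i}>0$, $s\,t(s) \to -k_{\omega,i}Q_i^\star \neq 0$, while $s\,d(s)$ stays bounded and $s^2 d(s) \to 0$. Consider the rescaled polynomial $\mu^2 - s\,t(s)\mu + s^2 d(s)$ in $\mu = s\lambda$. As $s\to 0$ it tends to $\mu(\mu + k_{\omega,i}Q_i^\star)$, which has simple roots. By continuity of roots there is therefore exactly one eigenvalue with $s\lambda_- \to -k_{\omega,i}Q_i^\star$ and one with $s\lambda_+ \to 0$. More precisely, the small root satisfies $\lambda_+ = d/(t - \lambda_+)$ with $\lambda_+/t \to 0$. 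To get the expansion cleanly, I will write $\lambda_+ = d/t + \mathcal{O}(d^2/t^3)$. Here $d^2/t^3 = \mathcal{O}(s)$ because $d = \mathcal{O}(1/s)$ and $t^{-1} = \mathcal{O}(s)$. I expect this root-separation and error bookkeeping to be the only delicate step; the rest is Taylor expansion.

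Finally I expand. First,
\begin{equation*}
\tfrac{d}{t} = \tfrac{-a b {S_i^\star}^2}{-aQ_i^\star + s b Q_i^\star} = \tfrac{b {S_i^\star}^2}{Q_i^\star}\Big(1 + \mathcal{O}(s)\Big) = k_{v,i}Q_i^\star + \tfrac{k_{v,i}{P_i^\star}^2}{Q_i^\star} + \mathcal{O}(s).
\end{equation*}
This uses $b(s) = k_{v,i} + \mathcal{O}(s)$ and ${S_i^\star}^2 = {P_i^\star}^2 + {Q_i^\star}^2$, and gives $\lambda_+$ as stated. For the other root I use $\lambda_- = t - \lambda_+$ together with $\frac{a(s)}{s} = \frac{k_{\omega,i}}{s} - k_{\omega,i}\tau_{\omega,i} + \mathcal{O}(s)$. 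This yields $t = -\frac{k_{\omega,i}Q_i^\star}{s} + k_{\omega,i}\tau_{\omega,i}Q_i^\star + k_{v,i}Q_i^\star + \mathcal{O}(s)$. Subtracting $\lambda_+$ cancels the $k_{v,i}Q_i^\star$ term and leaves exactly the expression for $\lambda_-(s)$ in \eqref{eq:instab_iv_loci}.
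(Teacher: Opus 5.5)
Your proposal is correct, and it extracts the eigenvalues by a somewhat different route than the paper.

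The paper also starts from the trace and determinant, but then applies the quadratic formula directly. It expands $\mathrm{tr}(L_{Z,i})^2 - 4\det(L_{Z,i})$ through the $1/s$ term and factors out $h(s)^2$, with $h(s) = Q_i^\star\bigl(k_{\omega,i}/s - (k_{\omega,i}\tau_{\omega,i} - k_{v,i})\bigr)$. It then takes a binomial square root, so $\lambda_\pm = \tfrac12(\mathrm{tr} \pm \sqrt{\cdot})$ are labelled by the branch implicitly fixed through $h(s)$.

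You instead use Vieta's relations:
\begin{itemize}
\item the rescaling $\mu = s\lambda$ shows the two roots separate, one $\mathcal{O}(1)$ and one asymptotic to $-k_{\omega,i}Q_i^\star/s$;
\item the bounded root is obtained as $\det/\mathrm{tr} + \mathcal{O}(s)$;
\item the unbounded root follows from $\lambda_- = \mathrm{tr} - \lambda_+$.
\end{itemize}

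Your route has several advantages. It needs only the leading term of the determinant and the trace to $\mathcal{O}(1)$, rather than several Laurent coefficients of $\mathrm{tr}^2$ and $\det$ that the paper carries and then absorbs into the remainder. It avoids the square-root branch question, since the labelling is intrinsic (bounded versus unbounded). It also makes clear exactly where $Q_i^\star \neq 0$ enters: the limiting polynomial $\mu(\mu + k_{\omega,i}Q_i^\star)$ must have simple roots. As a by-product, it shows the bounded branch compactly as $k_{v,i}{S_i^\star}^2/Q_i^\star + \mathcal{O}(s)$.

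The step you flagged as delicate is sound. From $\lambda_+ = d/(t-\lambda_+)$ and $\lambda_+/t \to 0$, you get $\lvert t - \lambda_+\rvert \geq \lvert t\rvert/2$ for small $s$, hence $\lambda_+ = \mathcal{O}(1)$ and $\lambda_+/t = \mathcal{O}(s)$, which gives $\lambda_+ = d/t + \mathcal{O}(s)$.

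The paper's computation is more mechanical and produces both eigenvalues from a single closed-form expression. It could also be pushed to higher orders without a separate fixed-point iteration.
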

\begin{proof}
    The lemma is proved in Appendix \ref{app:iv_unstable_proof}.
\end{proof}

We therefore see that the characteristic loci split into a bounded and unbounded branch as $s \rightarrow 0$. In particular, for $s = j\omega$, the unbounded branch approaches infinity along the imaginary axis (ignoring $\mathcal{O}(1)$ corrections), with direction determined by the sign of the equilibrium bus reactive power $Q_i^\star$ calculated via \eqref{eq:reactive_power}.

Along $\Gamma_\epsilon$,
$
    s = \epsilon e^{j\varphi},  -\frac{\pi}{2} \leq \varphi \leq \frac{\pi}{2},
$
so we obtain
\begin{equation}
    \lambda_-(\epsilon e^{j\varphi}) = -Q_i^\star k_{\omega,i} \epsilon^{-1} e^{-j\varphi} + \mathcal{O}(1).
\end{equation}
If $Q_i^\star > 0$, then $\lambda_-(s)$ \lhc{makes a semicircular clockwise half-turn through the left half-plane, crossing the negative real axis at $-\frac{Q_i^\star k_{\omega,i}}{\epsilon} + \mathcal{O}(1)$. For all $\epsilon$ sufficiently small, this crossing lies to the left of $-1$, so the arc contributes a clockwise encirclement. Unless it is cancelled by a counter-clockwise encirclement at higher frequencies, [Part I, \cref{pt1-thm:nyquist}] then implies that $Z_i(s)$ in \eqref{eq:iv_from_pq} has a pole in $\mathbb{C}_+$.} 

As an example, consider the case $k_{\omega,i} = 2$, $\tau_{\omega,i} = 1$, $k_{v,i} = 0.5$, $\tau_{v,i} = 2$, $P_i^\star = \SI{1}{\pu}$, $Q_i^\star = \SI{0.5}{\pu}$, and ${v_i^{DQ}}^\star = [1,0]^T$ p.u. 
\lhc{\Cref{fig:plot_iv_unstable} shows that the unbounded branch encircles $-1$, so $Z_i(s)$ is unstable.}
Through direct calculation, we find that $Z_i(s)$ has a right half-plane pole at $s = 0.7684$.

\begin{figure}[t]
    \centering
    \includegraphics[width=0.48\textwidth]{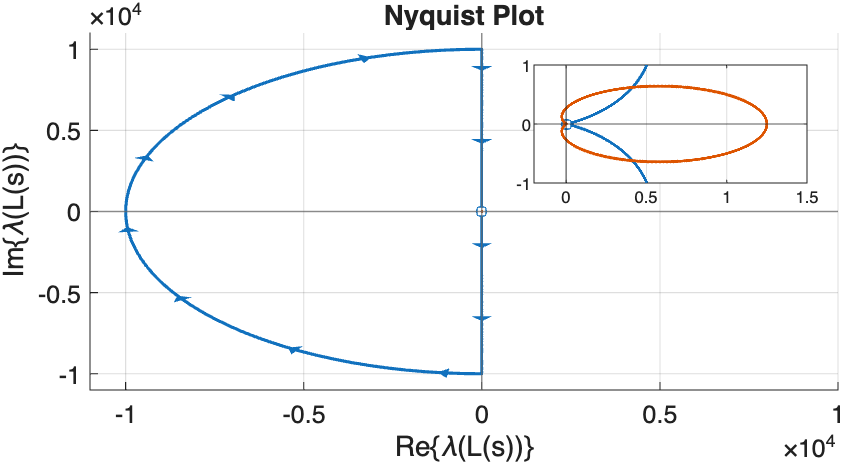}
    \caption{The Nyquist plot of $L_{Z,i}(s) = G_i(s)W_i$ along $\Gamma_N^{\,\epsilon}$ (with indentation of radius $\epsilon = 1\times 10^{-4}$) with entries of $G_i(s)$ given in \eqref{eq:droop_tf}. The inset rescales the plot to show the bounded branch.}
    \label{fig:plot_iv_unstable}
\end{figure}

\subsubsection{High-frequency poles in \textit{PQ} systems}
Now consider a voltage source in series with a passive impedance composed of $RLC$ elements, e.g., a transformer or the low-pass filter of an inverter, as shown in \cref{fig:tikz_series_impedance}. If the voltage source has impedance $Z_i^s(s)$ and the passive impedance has transfer function $Z_i^p(s)$, then the total bus impedance is given by
\begin{equation}
    Z_i(s) = Z_i^s(s) + Z_i^p(s).
\end{equation}
\lhc{Near the resonant frequency of the passive element, $Z_i^p(j\omega)$ typically dominates so that}
\begin{equation} \label{eq:Z_bus_approx_Z_passive}
    Z_i(j\omega) \approx Z_i^p(j\omega).
\end{equation}
Such a passive impedance takes the general form
\begin{equation} \label{eq:gen_passive_compent}
    Z_i^p(s)=\begin{bmatrix}
        a(s) & b(s) \\ -b(s) & a(s)
    \end{bmatrix}.
\end{equation}
As the system is passive, we know that $Z_i^p(s)$ has no unstable poles and is positive-real [Part I, \cref{pt1-def:positive_real}], i.e., $Z_i^p(j\omega) + Z_i^p(j\omega)^\ast \geq 0$, so that the eigenvalues of $Z_i^p(j\omega)$ have non-negative real part. We now show that when converted to a $PQ$ model using \eqref{eq:pq_from_iv}, the passive component \eqref{eq:gen_passive_compent} may have unstable poles.

\begin{figure}[t]
\centering
\begin{circuitikz}
\small
    \draw
        (0,1.5) 
        to[short] ++(1.5,0)
        to[cute inductor] ++(1,0)
        to[short] ++(0.25,0) 
        to[american resistor] ++(1,0) 
        to[short,-*] ++(2,0)
        to[open,-*] ++(0,-1.5)
        to[short] (0,0)
        to[sV,name=vs] (0,1.5)
        {};         
    \draw
        (4.25,1.5) to[capacitor] ++(0,-1.5) node[ground] 
        {};

    \node[draw, rectangle, dotted, anchor=center,
        label=left:$Z_i^s(s)$, 
        minimum width=1.2cm, 	minimum height=1.2cm]
        (Zs) at (vs.center) {};

    \node[draw, rectangle, dotted, anchor=north west,
        label=above:$Z_i^p(s)$, 
        minimum width=4cm, 	minimum height=2.0cm]
        (Zp) at (1,1.8) {};
   
\end{circuitikz}
\caption{A series connection of a voltage source $Z_i^s(s)$ with a passive impedance $Z_i^p(s)$.}
\label{fig:tikz_series_impedance}
\end{figure}

\begin{lemma} \label{lemma:pq_unstable}
Consider a passive impedance at $\nu_i \in \mathcal{V}_B$ with $IV$ transfer function $Z_i^p(s)$ as in \eqref{eq:gen_passive_compent}. The eigenvalues of the \lhg{return ratio} $L_{G,i}(s) = -\check{Z}_i^p(s)W_i$, with $\check{Z}_i^p(s) = \frac{1}{{V_i^\star}^4}U_i^\ddagger Z_i^p(s)U_i^\dagger$, where $U_i^\dagger$, $U_i^\ddagger$, and $W_i$ are defined in \eqref{eq:UW_defs}, are given by
\begin{equation} \label{eq:instab_pq_loci}
    \lambda_\pm(s) = \pm \frac{S_i^\star}{ {V_i^\star}^2 } \sqrt{a(s)^2 + b(s)^2}.
\end{equation}
\end{lemma}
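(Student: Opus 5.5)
The plan is to compute the $2\times 2$ matrix $L_{G,i}(s) = -\check{Z}_i^p(s)W_i$ explicitly and read off its eigenvalues from its trace and determinant. For any $2\times 2$ matrix $M$ the eigenvalues are $\tfrac{1}{2}\mathrm{tr}(M) \pm \sqrt{\tfrac{1}{4}\mathrm{tr}(M)^2 - \det(M)}$. It therefore suffices to show two things:
\begin{equation*}
\mathrm{tr}(L_{G,i}(s)) = 0, \qquad \det(L_{G,i}(s)) = -\frac{{S_i^\star}^2}{{V_i^\star}^4}\bigl(a(s)^2+b(s)^2\bigr).
\end{equation*}
Then $\lambda_\pm(s) = \pm\sqrt{-\det(L_{G,i}(s))}$, which is exactly \eqref{eq:instab_pq_loci}.

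The determinant is the easy part and follows by multiplicativity. From \eqref{eq:gen_passive_compent}, $\det Z_i^p(s) = a(s)^2+b(s)^2$. From \eqref{eq:UW_defs}, $\det U_i^\dagger = \det U_i^\ddagger = -{V_i^\star}^2$, which is consistent with \eqref{eq:U_squared}. The paper has already noted $\det W_i = -{S_i^\star}^2$. Since all matrices are $2\times 2$, $\det(-M)=\det M$, so
\begin{equation*}
\det L_{G,i} = {V_i^\star}^{-8}\,({V_i^\star}^2)^2\,(a^2+b^2)\,(-{S_i^\star}^2) = -\frac{{S_i^\star}^2}{{V_i^\star}^4}(a^2+b^2),
\end{equation*}
as required.

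The trace is the main step. I would exploit the structure of the matrices rather than multiply out entries. Write $Z_i^p = aI + bJ^T$, where $J$ is the matrix defined in the notation section. Each of $U_i^\dagger$, $U_i^\ddagger$ and $W_i$ is symmetric with zero trace, so each is a reflection-type matrix. Any such matrix $R$ anticommutes with $J$, i.e.\ $RJ = -JR$.

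Using cyclicity, $\mathrm{tr}(\check{Z}_i^p W_i) \propto \mathrm{tr}(Z_i^p U_i^\dagger W_i U_i^\ddagger)$. The product $M := U_i^\dagger W_i U_i^\ddagger$ is a product of three reflections, so it is itself symmetric and traceless. It then remains to check two facts: $\mathrm{tr}(aM) = 0$ holds immediately, and $\mathrm{tr}(J^T M) = 0$ because $J^T$ is antisymmetric while $M$ is symmetric.

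The one point that needs care is confirming that $M$ is symmetric and traceless. A traceless symmetric real $2\times 2$ matrix has the form $\begin{bsmallmatrix} p & q\\ q & -p\end{bsmallmatrix}$, and this set is closed under triple products. If that structural argument proves awkward, I would simply compute $M$ explicitly in terms of $v_i^{D\star}, v_i^{Q\star}, P_i^\star, Q_i^\star$ and verify it directly, which is a short calculation.
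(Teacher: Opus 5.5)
Your argument is correct and reaches the same characteristic polynomial as the paper, $\lambda^2 + \det L_{G,i}(s) = 0$, but by a different route.

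The paper works by direct computation. It multiplies out $\check{Z}_i^p(s)$ and then $L_{G,i}(s)$ entry by entry, so the zero trace can be read off the explicit complex-symmetric matrix. It then expands the determinant using $(aP_i^\star - bQ_i^\star)^2 + (aQ_i^\star + bP_i^\star)^2 = (a^2+b^2){S_i^\star}^2$.

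You get the determinant from multiplicativity and the known determinants of $U_i^\dagger$, $U_i^\ddagger$, $W_i$ and $Z_i^p$, and you remove the trace with a parity argument. The closure step you flag as delicate follows from the anticommutation you already noted. If three matrices each anticommute with $J$, then so does their product, since $R_1R_2R_3J = -JR_1R_2R_3$. A real $2\times 2$ matrix anticommutes with $J$ exactly when it is symmetric and traceless.

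The paper's computation gives a concrete formula that is trivial to check. Your structural version also explains why the eigenvalues come as a $\pm$ pair. Since $Z_i^p = aI - bJ$ commutes with $J$ and the other three factors anticommute with it, $L_{G,i}(s)$ anticommutes with $J$. Hence $J^{-1}L_{G,i}(s)J = -L_{G,i}(s)$, and the spectrum is symmetric about the origin for every $Z_i^p$ of the form \eqref{eq:gen_passive_compent}. This is the property the paper uses right after the lemma to place one branch in each closed half-plane.
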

\begin{proof}
    The lemma is proved in Appendix \ref{app:pq_unstable_proof}.
\end{proof}

\begin{figure}[t]
    \centering
    \includegraphics[width=0.48\textwidth]{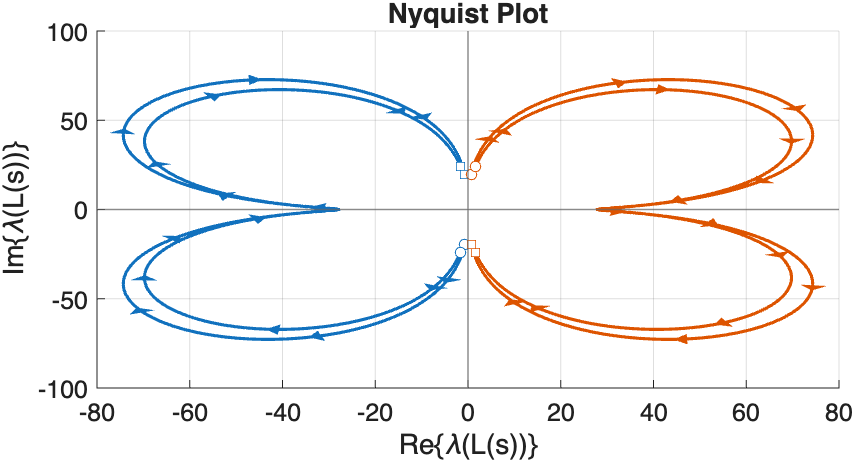}
    \caption{The Nyquist plot of $L_{G,i}(s)$ for the circuit shown in \cref{fig:tikz_series_impedance}
    \lhc{with $R = \SI{0.025}{\pu}$, $X = \SI{0.25}{\pu}$, $B = \SI{0.02}{\pu}$, \lhh{$\omega_0 = 2\pi \times \SI{50}{\hertz} = \SI{314}{\radian\per\second}$}, and with the power and voltage setpoints of the example in \cref{sec:model_instab_iv},}
    over the frequency band \SIrange{4e3}{5e3}{\radian\per\second} (near the resonant frequency).
    }
    \label{fig:plots_pq_unstable}
\end{figure}

Typically, \lhc{the plot of a strictly passive circuit $Z_i^p(s)$ along $\Gamma_N$} forms a large arc in the right half-plane near its resonant frequency \lhc{since the characteristic loci $a(s) \pm jb(s)$ lie in $\mathbb{C}_+$ by positive-realness. Under the transformation to $\check{Z}_i^p(s)$ and multiplication by $W_i$, however, \cref{lemma:pq_unstable} gives loci \eqref{eq:instab_pq_loci} where $\sqrt{a(s)^2 + b(s)^2}$ also lies in $\mathbb{C}_+$. Therefore, one branch of $\lambda_\pm$ always lies in $\bar{\mathbb{C}}_+$ and the other always lies in $\bar{\mathbb{C}}_-$.}

\lhc{For example, \cref{fig:plots_pq_unstable} shows the Nyquist plot of $L_{G,i}(s)=-\check{Z}_i^p(s)W_i$ (with negative sign to account for the positive-feedback interconnection in \eqref{eq:pq_from_iv}) along $\Gamma_N$ with the circuit shown in \cref{fig:tikz_series_impedance}.}
This shows two clockwise encirclements of the point $-1$ \lhc{(the loci are far from $-1$ at other frequencies)}, indicating that the $PQ$ transfer function of the $RLC$ element has two poles in $\mathbb{C}_+$. By direct calculation, we find that $G_i(s)$ has right half-plane poles at $s = 1.1 \times 10^3$ and $s = 1.6 \times 10^4$.

\begin{remark}
    \lhc{Real bus dynamics are more complex than the idealised models \eqref{eq:droop_tf} and \eqref{eq:gen_passive_compent}, so the full-order system should be checked to determine whether right half-plane poles arise when converting between the $PQ$ and $IV$ models. However, both results depend only on properties in a limited band: $s \rightarrow 0$ for \cref{lemma:iv_unstable}, and the resonant band of the series component for \cref{lemma:pq_unstable}. Therefore, the conclusions carry over to any bus whose dynamics are well approximated by \eqref{eq:droop_tf} at low frequencies, or for which \eqref{eq:Z_bus_approx_Z_passive} holds near resonance.}
\end{remark}
\begin{remark}
    \Cref{fig:tikz_series_impedance} represents many common configurations, such as a synchronous generator in series with a transformer, or a grid-forming or grid-following inverter with an $RLC$ filter. The dynamics of these high-frequency components are often ignored in power system stability studies involving $PQ$ models. However, \cref{lemma:pq_unstable} shows that these components should be treated with caution, as they may contribute unstable poles which must be stabilised by the interconnection.
\end{remark}

\subsection{Hybrid Model} \label{sec:model_hybrid}
To avoid the potential $IV$ and $PQ$ subsystem instabilities identified in \cref{sec:model_instab}, we propose here a \textit{hybrid} model that resembles the $PQ$ model at low frequencies and the $IV$ model at high frequencies using a frequency-dependent loop transformation.

First, for each $\nu_i \in \mathcal{V}_B$, let $F_{H,i}(s) \in \mathbf{RH}_\infty^{2\times2}$ be a high-pass filter with $F_{H,i}(0) = 0$, $\lim_{s \rightarrow \infty} F_{H,i}(s) = I$, and cutoff frequency $\omega_{c,i}$. Then define 
\begin{equation} \label{eq:hybrid_bus_def}
    \widetilde{G}_i(s) = (I+G_i(s)W_iF_{H,i}(s))^{-1}G_i(s),
\end{equation}
where $G_i(s)$ is the $PQ$ transfer function for bus $\nu_i \in \mathcal{V}_B$ and $W_i$ is defined in \eqref{eq:UW_defs}. \lhc{Since $F_{H,i}(j\omega) \approx 0$ at low frequencies,} $\widetilde{G}_i(j\omega) \approx G_i(j\omega)$ for $\omega < \omega_{c,i}$. Furthermore, we have that $U_i^\ddagger \widetilde{G}_i(j\omega) U_i^\dagger \approx Z_i(j\omega)$ when $\omega > \omega_{c,i}$, so we define 
\begin{equation} \label{eq:hybrid_impd_def}
    \widetilde{Z}_i(s) := U_i^\ddagger \widetilde{G}_i(s) U_i^\dagger.
\end{equation}

\lhc{Now, define the dual low-pass filter $F_{L,i}(s) := I - F_{H,i}(s)$ with cutoff frequency $\omega_{c,i}$} and let $F_L(s):= \oplus_{\nu_i \in \mathcal{V}_B} F_{L,i}(s)$ and $F_H(s):= \oplus_{\nu_i \in \mathcal{V}_B} F_{H,i}(s)$. Then, following \eqref{eq:N_def}, define
\begin{equation} \label{eq:hybrid_net_def}
    \begin{aligned} 
        \widetilde{N}_{H}(s) &:= N_{PQ}(s) - WF_H(s) \\
        &= U^\dagger Y_N(s) U^\ddagger + WF_L(s),
    \end{aligned}
\end{equation}
with $Y_N(s)$ defined in \eqref{eq:kcl_matrix_form}, and the matrices $U^\dagger$, $U^\ddagger$ and $W$ with blocks defined in \eqref{eq:UW_defs}. We see that $\widetilde{N}_{H}(j\omega) \approx N_{PQ}(j\omega)$ at low frequencies $\omega < \omega_{c}$, where $\omega_c := \min_{\nu_i \in \mathcal{V}_B} \omega_{c,i}$. At high frequencies, $F_L(j\omega)\rightarrow 0$ so $\widetilde{N}_{H}(j\omega) \approx U^\dagger Y_N(j\omega) U^\ddagger$. 
\lhc{To simplify the notation in future sections, we also define
\begin{equation} \label{eq:W_HL_def}
    \begin{gathered}
        \begin{aligned}
            W_{H,i} (s) := W_i F_{H,i}(s), && W_{L,i}(s) := W_i F_{L,i}(s),
        \end{aligned} \\
        \check{W}_{L,i}(s) := \tfrac{1}{{V_i^\star}^4} U_i^\dagger W_{L,i}(s) U_i^\ddagger,
    \end{gathered}
\end{equation}
and let $W_H(s) := \oplus_{\nu_i \in \mathcal{V}_B}W_{H,i}(s)$, with similar definitions for $W_L(s)$ and $\check{W}_L(s)$. We have $W_{L,i}(s) + W_{H,i}(s) = W_i$.}

Taking $\widetilde{G}_B(s) = \oplus_{\nu_i \in \mathcal{V}_B} \widetilde{G}_i(s)$, we obtain the negative feedback configuration $[\widetilde{G}_B(s), \widetilde{N}_{H}(s)]$ illustrated in \cref{fig:tikz_model_hyb}, \lhc{with \lhg{return ratio} $\widetilde{L}_H(s) := \widetilde{G}_B(s)\widetilde{N}_{H}(s)$}.

\begin{figure}[t]
\centering
\begin{tikzpicture}
\small

\node[draw,minimum size = 1cm]
    (sys_G) at (0,0)
    {$G(s)$};
\node[draw=black,
    below=of sys_G,yshift=0.3cm,
    minimum size = 1cm]
    (sys_Y) 
    {$Y_N(s)$};
    
\node[draw=black, minimum size = 1cm,
    above=of sys_G,yshift=-0.5cm,xshift=-0.75cm]
    (W_top) 
    {$W$};
\node[draw=black, minimum size = 1cm,
    above=of sys_G,yshift=-0.5cm,xshift=+0.75cm]
    (sys_FH) 
    {$F_H(s)$};
\node[draw=black, minimum size = 1cm,
    below=of sys_Y,yshift=0.5cm,xshift=-0.75cm]
    (W_btm) 
    {$W$};
\node[draw=black, minimum size = 1cm,
    below=of sys_Y,yshift=0.5cm,xshift=+0.75cm]
    (sys_FL) 
    {$F_L(s)$};

\node[draw=black, minimum size = 1cm,
    left=of sys_Y,xshift=0.8cm]
    (B) 
    {$U^\dagger$};
\node[draw=black, minimum size = 1cm,
    right=of sys_Y,xshift=-0.8cm]
    (Bt) 
    {$U^\ddagger$};

\node[draw, circle, minimum size=0.05cm]
    (neg) at (-3.5,0)
    {};
    \node[yshift=-5pt,xshift=5pt] at (neg.south east){$-$};
\node[draw, circle, minimum size=0.05cm,
    left=of sys_G,xshift=-0.8cm]
    (sum_top) 
    {};
    \node[yshift=5pt,xshift=5pt] at (sum_top.north east){$-$};
    \node[yshift=-5pt,xshift=-5pt] at (sum_top.south west){$+$};
\node[right= of sys_G,xshift=0.8cm] (right_top){};
\node[draw, circle, minimum size=0.05cm,
    left=of B,xshift=0.5cm]
    (sum_btm) 
    {};
    \node[yshift=5pt,xshift=5pt] at (sum_btm.north east){$+$};
    \node[yshift=-5pt,xshift=5pt] at (sum_btm.south east){$+$};
\node[right= of Bt,xshift=-0.6cm] (right_btm1){};
\node[right= of right_btm1,xshift=-0.2cm] (right_btm2){};

\node[draw,black,thin,dotted,anchor=north west,
    minimum width =6cm, minimum height =2.9cm]
    (box) at (-3,2.2) {};
    \node at (box.north east)[right=3pt,yshift=-7pt]{$\widetilde{G}_B(s)$};
\node[draw,black,thin,dotted,anchor=north west,
    minimum width =6cm, minimum height =2.9 cm]
    (box) at (-3,-1) {};
    \node at (box.south east)[right=3pt,yshift=7pt]{$\widetilde{N}_{H}(s)$};

\draw[-stealth] (neg.east) -- (sum_top.west);
\draw[-stealth] (sum_top.east) -- (sys_G.west);
\draw[-stealth] (right_top.center) |- (sys_FH.east);
\draw[-stealth] (sys_FH.west) -- (W_top.east);
\draw[-stealth] (W_top.west) -| (sum_top.north);
\draw[] (sys_G.east) -| (right_btm2.center);
\draw[-stealth] (right_btm2.center) -- (Bt.east);
\draw[-stealth] (right_btm1.center) |- (sys_FL.east);
\draw[-stealth] (sys_FL.west) -- (W_btm.east);
\draw[-stealth] (Bt.west) -- (sys_Y.east);
\draw[-stealth] (sys_Y.west) -- (B.east);
\draw[-stealth] (B.west) -- (sum_btm.east);
\draw[-stealth] (W_btm.west) -| (sum_btm.south);
\draw[-stealth] (sum_btm.west) -| (neg.south);

\end{tikzpicture}
\caption{Negative-feedback interconnection of the hybrid bus systems in $\widetilde{G}_B(s)$ and the hybrid network system $\widetilde{N}_{H}(s)$.}
\label{fig:tikz_model_hyb}
\end{figure}

Since $F_{H,i}(0) = 0$, $\widetilde{G}_i(s)$ and $\widetilde{Z}_i(s)$ \lhc{inherit \lhg{one} pole at the origin from $G_i(s)$}.
\lhc{As for right half-plane poles}, \cref{lemma:iv_unstable,lemma:pq_unstable} show that the dynamics that result in unstable poles in the $IV$ model arise due to low-frequency encirclements, while the encirclements that cause $PQ$ instabilities appear near the resonant frequency of the passive series component, which typically occurs at much higher frequencies. This indicates that a high-pass filter $F_{H,i}(s)$ can usually be found with a cutoff frequency between the dominant low- and high-frequency dynamics \lhg{such that each $\widetilde{G}_i(s)$ is stable (under \cref{def:stability}).}

Now, using \eqref{eq:hybrid_bus_def} and \eqref{eq:hybrid_net_def},
\begin{equation}
    \begin{aligned}
        (I + G_B(s)N_{PQ}(s))^{-1} = & \left[I+\widetilde{G}_B(s) \widetilde{N}_{H}(s)\right]^{-1}  \\
        & \times \left[I + G_B(s)WF_H(s)\right]^{-1} .
    \end{aligned}
\end{equation}
\lhg{If each $\widetilde{G}_i(s)$ is stable}, $(I + G_B(s)WF_H(s))^{-1}$ has no poles in $\bar{\mathbb{C}}_+ \setminus \{0\}$. Therefore, we have that $[G_B(s),N_{PQ}(s)]$ is stable under \cref{def:stability} if $[\widetilde{G}_B(s), \widetilde{N}_{H}(s)]$ is stable. 
By [Part I, \cref{pt1-thm:nyquist}], this can be concluded \lhc{either by direct analysis of the characteristic loci of $\widetilde{L}_H(s)$, or by verifying that \eqref{eq:nyquist_condition} holds for $\widetilde{L}_H(s)$}. In the following section, we show how the decentralised stability conditions derived in Part I can also be applied to $\widetilde{L}_H(s)$. 

\section{Stability Framework for Hybrid Model} \label{sec:stab}
In Part I, we introduced a decentralised stability framework to \lhc{certify} the Nyquist criterion \lhc{for either the admittance or impedance representation, using} sufficient quadratic constraints on the input--output properties of local subsystems \lhc{at three levels of locality}. 
\lhc{The extended-subsystem conditions, which deliver plug-and-play functionality, exploit the sparsity of the network and are only available in the impedance representation, which requires $Z_B(s)$ to be stable}. As \cref{sec:model_instab} \lhc{and [Part I, \cref{pt1-sec:case_study}]} show, the assumption \lhc{fails} in many common scenarios. 

In Part II, we instead propose that stability be certified using the hybrid grid model of \cref{sec:model_hybrid} so that no encirclements need to be accounted for when applying the Nyquist criterion. As in Part I, we use \eqref{eq:iqc_conditions} to derive decentralised conditions that are sufficient to conclude pointwise that at each $s \in \Gamma_N^{\,\epsilon}$,
\begin{equation} \label{eq:loci_condition}
    -1 \notin \sigma(\tau \widetilde{L}_H(s)), \quad \forall \tau \in [0,1].
\end{equation}
\lhc{In this section, we show that the framework and conditions derived in Part I can be directly extended to the hybrid model.}

\subsection{Bus-Level Conditions} \label{sec:stab_decentralised}
As in [Part I, \cref{pt1-lemma:iqc_decentralised}], we can exploit the block-diagonal structure of $\widetilde{G}_B(s)$ to yield decentralised constraints. 

\begin{lemma} \label{lemma:iqc_decentralised}
    Consider the hybrid grid system $[\widetilde{G}_B(s),\widetilde{N}_H(s)]$ \lhc{described in \cref{sec:model_hybrid} and} illustrated in \cref{fig:tikz_model_hyb}. Let $\widetilde{\Pi}(s) \in \mathbb{C}^{4N_B \times 4N_B}$ be a multiplier with the following block structure:
    \begin{equation} \label{eq:multiplier_decentralised}
        \widetilde{\Pi}(s) = \begin{bmatrix}
            \oplus_{\nu_i \in \mathcal{V}_B} \widetilde{\Pi}_{11}^i(s) & \oplus_{\nu_i \in \mathcal{V}_B} \widetilde{\Pi}_{12}^i(s)^\ast \\
            \oplus_{\nu_i \in \mathcal{V}_B} \widetilde{\Pi}_{12}^i(s) & \oplus_{\nu_i \in \mathcal{V}_B} \widetilde{\Pi}_{22}^i(s)
        \end{bmatrix},
    \end{equation}
    where for each $\nu_i \in \mathcal{V}_B$, $\widetilde{\Pi}_{11}^i(s) \in \mathbb{H}^2$, $\widetilde{\Pi}_{12}^i(s) \in \mathbb{C}^{2\times 2}$, and $\widetilde{\Pi}_{22}^i(s)\in \mathbb{H}^2$. Then, for a particular $s \in \Gamma_N^{\, \epsilon}$, condition \eqref{eq:loci_condition} holds if, for every $\tau \in [0,1]$, there exists a $\widetilde{\Pi}(s)$ of the form \eqref{eq:multiplier_decentralised} such that 
    \begin{subequations}
        \begin{equation} \label{eq:iqc_decentralised_network}
            \begin{bmatrix} 
                I_{2N_B} \\ -\tau \widetilde{N}_H(s)
            \end{bmatrix}^\ast \widetilde{\Pi}(s) \begin{bmatrix}
                I_{2N_B} \\ -\tau\widetilde{N}_H(s)
            \end{bmatrix} \leq 0,
        \end{equation}
        \begin{equation} \label{eq:iqc_decentralised_buses}
            \begin{bmatrix}
                \widetilde{G}_i(s) \\ I_2
            \end{bmatrix}^\ast \begin{bmatrix}
                \widetilde{\Pi}_{11}^i(s) & \ \widetilde{\Pi}_{12}^i(s)^\ast \\
                \widetilde{\Pi}_{12}^i(s) &  \widetilde{\Pi}_{22}^i(s)
            \end{bmatrix} \begin{bmatrix}
                \widetilde{G}_i(s) \\ I_2
            \end{bmatrix} > 0,
        \end{equation}
    \end{subequations}
    \lhh{where \eqref{eq:iqc_decentralised_buses} holds} at every $\nu_i \in \mathcal{V}_B$.
\end{lemma}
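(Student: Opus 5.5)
The plan is to reduce the statement to the general multiplier result [Part I, \cref{pt1-lemma:iqc}] recalled in \eqref{eq:iqc_conditions}. We apply it with $P(s) = \widetilde{G}_B(s)$ and $K(s) = \widetilde{N}_H(s)$, so that $n = m = 2N_B$ and $\widetilde{L}_H(s) = P(s)K(s)$. Under this identification, condition \eqref{eq:iqc_decentralised_network} is exactly \eqref{eq:iqc_conditions_btm} with multiplier $\widetilde{\Pi}(s)$. It therefore only remains to show that the collection of bus-level inequalities \eqref{eq:iqc_decentralised_buses} implies \eqref{eq:iqc_conditions_top} for $\widetilde{G}_B(s)$. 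Once this holds, [Part I, \cref{pt1-lemma:iqc}] gives $-1 \notin \sigma(\tau \widetilde{L}_H(s))$ at the given $s$ for each $\tau\in[0,1]$, i.e.\ \eqref{eq:loci_condition}.

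The key step is a block-diagonal decomposition. Because $\widetilde{G}_B(s) = \oplus_{\nu_i\in\mathcal{V}_B}\widetilde{G}_i(s)$ and every block of $\widetilde{\Pi}(s)$ in \eqref{eq:multiplier_decentralised} is block-diagonal with the same $2\times 2$ partition, we expand
\begin{equation*}
\begin{bmatrix}\widetilde{G}_B \\ I\end{bmatrix}^\ast \widetilde{\Pi}\begin{bmatrix}\widetilde{G}_B \\ I\end{bmatrix}
= \widetilde{G}_B^\ast \Pi_{11}\widetilde{G}_B + \widetilde{G}_B^\ast\Pi_{12}^\ast + \Pi_{12}\widetilde{G}_B + \Pi_{22}.
\end{equation*}
Each term is a product of block-diagonal matrices with conformal $2\times2$ blocks, so each term is itself block-diagonal. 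Its $i$th diagonal block is precisely the left-hand side of \eqref{eq:iqc_decentralised_buses}. Equivalently, there is a permutation matrix $\mathcal{P}$ that reorders $[\widetilde{G}_B;\,I]$ into $\oplus_i [\widetilde{G}_i;\,I_2]$ and, by congruence, reorders $\widetilde{\Pi}$ into $\oplus_i \widetilde{\Pi}^i$; this gives the same conclusion. A block-diagonal Hermitian matrix is positive definite if and only if each diagonal block is positive definite, so \eqref{eq:iqc_decentralised_buses} at every bus yields \eqref{eq:iqc_conditions_top}.

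One point needs care: the evaluation at $s \in \Gamma_N^{\,\epsilon}$ must be well defined. Each $\widetilde{G}_i(s)$ has its only pole at the origin, which the indented contour avoids, so $\widetilde{G}_i$ is finite along $\Gamma_N^{\,\epsilon}$. Well-posedness of the interconnection is assumed throughout.

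No step presents a real obstacle. The only bookkeeping is the permutation/conformal-partition argument, which mirrors [Part I, \cref{pt1-lemma:iqc_decentralised}] with $Z_B$ replaced by $\widetilde{G}_B$ and $Y_N$ by $\widetilde{N}_H$. The hybrid transformation does not interfere because the lemma treats $\widetilde{N}_H(s)$ as an arbitrary stable transfer function appearing only in the network condition.
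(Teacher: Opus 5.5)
Your proposal is correct and follows the paper's route. The paper simply defers to the corresponding decentralised lemma of Part I, which is exactly your reduction: apply the general multiplier result with $P=\widetilde{G}_B(s)$ and $K=\widetilde{N}_H(s)$, then use the conformal block-diagonal structure of $\widetilde{G}_B(s)$ and $\widetilde{\Pi}(s)$ to split the top inequality into the bus-level conditions, noting that $\widetilde{G}_i(s)$ is finite on $\Gamma_N^{\,\epsilon}$.
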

\begin{proof}
    This proof follows [Part I, \cref{pt1-lemma:iqc_decentralised}]. 
\end{proof}

As shown in Part I, in many cases, \lhg{multipliers leading to decentralised constraints} can be derived based on the properties of \lhg{the untransformed network $Y_N(s)$. The hybrid network $\widetilde{N}_H(s)$ is related to $Y_N(s)$ via \eqref{eq:hybrid_net_def}, so constraints on $Y_N(s)$ can be converted into inequalities of the form \eqref{eq:iqc_decentralised_network}. However, the homotopy parameter $\tau$ must be accounted for through the system transformation. In particular, [Part I, \cref{pt1-cor:iqc_convex}] (which allows constraints to be checked only at $\tau=1$) does not apply: because $W_i$ in \eqref{eq:UW_defs} is indefinite, the transformed blocks $\widetilde{\Pi}_{11}^i(s)$ and $\widetilde{\Pi}_{22}^i(s)$ in \eqref{eq:mult_blocks_transformed} lose the sign definiteness that result requires.}
The following corollary shows how \lhg{multipliers defined for the untransformed network extend to the hybrid model, where the resulting constraints need only be checked at the endpoints of the homotopy}. 

\begin{corollary} \label{cor:iqc_decentralised_imped}
    For a particular $s \in \Gamma_N^{\,\epsilon}$, let 
    \begin{equation} \label{eq:iqc_decentralised_network_Y}
        \begin{bmatrix} 
            I_{2N_B} \\ -Y_N(s)
        \end{bmatrix}^\ast \begin{bmatrix}
            \Pi_{11}(s) &  \Pi_{12}(s)^\ast \\ \Pi_{12}(s) & \Pi_{22}(s)
        \end{bmatrix} \begin{bmatrix}
            I_{2N_B} \\ -Y_N(s)
        \end{bmatrix} \leq 0,
    \end{equation}
    where $\Pi_{11}(s) := \oplus_{\nu_i \in \mathcal{V}_B} \Pi_{11}^i(s)$, $\Pi_{12}(s) := \oplus_{\nu_i \in \mathcal{V}_B} \Pi_{12}^i(s)$, and $\Pi_{22}(s) := \oplus_{\nu_i \in \mathcal{V}_B} \Pi_{22}^i(s)$, and for each $\nu_i \in \mathcal{V}_B$, $\Pi_{11}^i(s) = \Pi_{11}^i(s)^\ast \leq 0$ and $\Pi_{22}^i(s) = \Pi_{22}^i(s)^\ast \geq 0$. 
    \lhc{Let
    \begin{equation} \label{eq:M_def}
        M_i(s) = \frac{1}{4} \check{W}_{L,i}(s)^\ast \Pi_{22}^i(s) \check{W}_{L,i}(s),
    \end{equation}
    where $\check{W}_{L,i}(s)$ is defined in \eqref{eq:W_HL_def}.}
    Then, condition \eqref{eq:loci_condition} holds if, for every $\nu_i \in \mathcal{V}_B$, both the following conditions are satisfied:
    \begin{subequations} \label{eq:iqc_decentralised_buses_Z}
        \begin{gather}
            \begin{bmatrix}
                \widetilde{Z}_i(s)  \\ I_2
            \end{bmatrix}^\ast \begin{bmatrix}
                \Pi_{11}^i(s) \lhc{-M_i(s)} & \Pi_{12}^i(s)^\ast \\
                \Pi_{12}^i(s) &  \Pi_{22}^i(s)
            \end{bmatrix} \begin{bmatrix}
                \widetilde{Z}_i(s) \\ I_2
            \end{bmatrix} > 0, \label{eq:iqc_decentralised_buses_Z_tilde} \\
            \begin{bmatrix}
                Z_i(s) \\ I_2
            \end{bmatrix}^\ast \begin{bmatrix}
               \Pi_{11}^i(s) \lhc{-M_i(s)} & \Pi_{12}^i(s)^\ast \\
                \Pi_{12}^i(s) &  \Pi_{22}^i(s)
            \end{bmatrix} \begin{bmatrix}
                Z_i(s) \\ I_2
            \end{bmatrix} > 0, \label{eq:iqc_decentralised_buses_Z_no_tilde} 
        \end{gather}
    \end{subequations}
    where $\widetilde{Z}_i(s)$ and $Z_i(s)$ are the \lhc{hybrid and $IV$} bus impedances defined in \eqref{eq:hybrid_impd_def} and \eqref{eq:bus_iv_relationship}, respectively.
\end{corollary}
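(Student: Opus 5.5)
The plan is to move the low-pass part $W_L$ of the hybrid network back onto the bus side with a $\tau$-dependent weight. The network side then involves only $Y_N(s)$, where the Part I convexity argument still applies. The bus side then becomes a family of impedances that interpolates between $\widetilde{Z}_i(s)$ at $\tau=0$ and $Z_i(s)$ at $\tau=1$.

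\textbf{Step 1 (similarity to an $IV$-type loop).} By \eqref{eq:U_squared}, $\widetilde{G}_i=\frac{1}{{V_i^\star}^4}U_i^\ddagger\widetilde{Z}_iU_i^\dagger$. Substituting this and \eqref{eq:hybrid_net_def} into $\widetilde{L}_H$ and conjugating by the block-diagonal matrix $\oplus_i U_i^\ddagger$ (with per-bus scalings) shows that $\sigma(\tau\widetilde{L}_H(s))=\sigma\big(\tau\widetilde{Z}_B(s)(Y_N(s)+\check{W}_L(s))\big)$, where $\check{W}_L$ is as in \eqref{eq:W_HL_def}. This bookkeeping needs care because $V_i^\star$ differs between buses.

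\textbf{Step 2 (absorb $\check{W}_L$ into the bus).} Suppose $-1\in\sigma(\tau\widetilde{Z}_B(Y_N+\check{W}_L))$. Then there is a nonzero $x$ with $z:=x=-\tau\widetilde{Z}_B(Y_N+\check{W}_L)x$. Set $u:=-\tau Y_N z$. The pair then satisfies $z=\widetilde{Z}_B(u-\tau\check{W}_L z)$ bus by bus. Consequently:
\begin{itemize}
\item the network pair $(z,u)$ satisfies \eqref{eq:iqc_decentralised_network_Y} with $Y_N$ replaced by $\tau Y_N$. This follows from the hypotheses $\Pi_{11}\le 0$, $\Pi_{22}\ge0$ and [Part I, \cref{pt1-cor:iqc_convex}];
\item $(z_i,u_i)$ lies in the graph of $Z_{\tau,i}:=(I+\tau\widetilde{Z}_i\check{W}_{L,i})^{-1}\widetilde{Z}_i$.
\end{itemize}
At $\tau=0$ this is $\widetilde{Z}_i$. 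At $\tau=1$, using \eqref{eq:hybrid_bus_def}, \eqref{eq:iv_from_pq} and $W_{L,i}+W_{H,i}=W_i$, the feedback restores the full $W_i$, so $Z_{1,i}=Z_i$.

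To reach a contradiction, it suffices to show the strict bus inequality $q_i(\tau):=\begin{bmatrix}z_i\\u_i\end{bmatrix}^\ast\Pi^i\begin{bmatrix}z_i\\u_i\end{bmatrix}>0$ for $z_i\neq0$. Summing over buses then contradicts the network inequality, following the argument of [Part I, \cref{pt1-lemma:iqc}]. I will phrase everything in terms of graph pairs, so that no invertibility of $Z_{\tau,i}$ is needed.

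\textbf{Step 3 (endpoint reduction, the main step).} Fix $z_i$ and let $u_{0,i}$ be the input giving $z_i=\widetilde{Z}_iu_{0,i}$. Then $u_i=u_{0,i}+\tau\check{W}_{L,i}z_i$ is affine in $\tau$. Hence $q_i(\tau)=a+b\tau+c\tau^2$ with
\[
c=(\check{W}_{L,i}z_i)^\ast\Pi_{22}^i\check{W}_{L,i}z_i=4z_i^\ast M_iz_i\ge0 .
\]
Since $q_i$ is convex rather than concave, positivity at the endpoints alone is not enough; this is exactly why the $-M_i$ correction is needed. The two hypotheses of \eqref{eq:iqc_decentralised_buses_Z} state that $q_i(0)>z_i^\ast M_iz_i=c/4$ and $q_i(1)>c/4$. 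Here $\tau=0$ is the graph of $\widetilde{Z}_i$ and $\tau=1$ is the graph of $Z_i$. I will use the exact identity
\[
q_i(\tau)=(1-\tau)q_i(0)+\tau q_i(1)-c\,\tau(1-\tau),
\]
together with $\tau(1-\tau)\le\tfrac14$. This gives $q_i(\tau)>c/4-c/4=0$ for all $\tau\in[0,1]$, which completes the argument. The main risk is the scaling and transpose bookkeeping in Step 1, namely checking that the conjugated network term is exactly $\check{W}_L$ with the $U^\dagger,U^\ddagger$ ordering of \eqref{eq:W_HL_def}.
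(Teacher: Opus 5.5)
Your proposal is correct and follows essentially the paper's argument. Your $q_i(\tau)$ is exactly $u_{0,i}^\ast g_i(\tau)\,u_{0,i}$, where $g_i(\tau)$ is the paper's matrix quadratic built on $\begin{bmatrix}\widetilde{Z}_i\\ I+\tau\check{W}_{L,i}\widetilde{Z}_i\end{bmatrix}$, and your $c$ is $u_{0,i}^\ast A_i u_{0,i}$; the chord identity, the bound $\tau(1-\tau)\le\tfrac14$ and the identification $(I+\widetilde{Z}_i\check{W}_{L,i})^{-1}\widetilde{Z}_i=Z_i$ are the same steps. The only difference is presentation: you use the similarity $\widetilde{L}_H=(U^\ddagger)^{-1}\widetilde{Z}_B(Y_N+\check{W}_L)U^\ddagger$ and then argue directly with the eigenvector in impedance coordinates, while the paper transforms the multiplier into $\tau$-dependent hybrid blocks and invokes \cref{lemma:iqc_decentralised}. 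Both versions implicitly need $I+\widetilde{Z}_i(s)\check{W}_{L,i}(s)$ to be invertible at the given $s$ in order to use the $\tau=1$ endpoint.
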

\begin{proof}
    The corollary is proved in Appendix \ref{app:iqc_decentralised_imped_proof}.
\end{proof}
Therefore, all the conditions in \cref{pt1-cor:dw_shell,pt1-cor:pos_real,pt1-cor:inf_norm,pt1-cor:fixed_net,pt1-cor:conic} of Part I, including any graphical interpretations, can be directly applied to the hybrid bus system. 
\begin{remark}
    \lhc{The two conditions in \eqref{eq:iqc_decentralised_buses_Z} are the endpoints of the homotopy: as $\tau$ varies over $[0,1]$, the transformation~\eqref{eq:hybrid_bus_def} is continuously undone, so the system moves from the hybrid impedance $\widetilde{Z}_i(s)$ at $\tau = 0$ to the $IV$ impedance $Z_i(s)$ at $\tau = 1$. The \lhg{multiplier modification} $M_i(s)$ allows us to only check these endpoints. It vanishes for conditions where $\Pi_{22}^i(s) = 0$ \lhg{(e.g.,the positive-real condition)}, and, since $\check{W}_{L,i}(s) \rightarrow 0$ above $\omega_{c,i}$, it is negligible in the high-frequency ranges where conditions with $\Pi_{22}^i(s) \neq 0$ are typically applied, so \eqref{eq:iqc_decentralised_buses_Z} approaches its Part I counterpart there.}
\end{remark}

\subsection{Extended Subsystem-Level Conditions} \label{sec:stab_extended}
\lhc{As the hybrid network $\widetilde{N}_H(s)$ retains the sparsity of $Y_N(s)$}, the extended-subsystem formulation of [Part I, \cref{pt1-sec:stab_extended}] can similarly be derived for the hybrid model. In this case, each extended subsystem contains the dynamics of a single bus and all the lines, loads, and the loop transformation associated with that bus.

In particular, we define the extended graph $(\mathcal{V}_B,\widetilde{\mathcal{E}}_N)$ for the hybrid model as follows: 
let 
\begin{equation} \label{eq:YE_def}
    \widetilde{Y}_E(s) := \begin{bmatrix}
        Y_P(s) & \\ & Y_0(s) + \check{W}_{L}(s)
    \end{bmatrix},    
\end{equation}
where $Y_P(s)$ and $Y_0(s)$ are as in \eqref{eq:network_admittance}, \lhh{and $\check{W}_L(s):= \oplus_{\nu_i \in \mathcal{V}_B} \check{W}_{L,i}(s)$, with $\check{W}_{L,i}(s)$ defined in \eqref{eq:W_HL_def}.} Therefore, the extended incidence matrix in this case is given by
$
    \widetilde{B}_E := \begin{bmatrix}
        B_N & I_{N_B} 
    \end{bmatrix},
$
where $B_N$ is the incidence matrix defined \eqref{eq:incidence_matrix},
meaning our extended graph contains $N_B$ nodes given by the set $\mathcal{V}_B$ and $N_E:=N_P + N_B$ edges given by the set $\widetilde{\mathcal{E}}_N$. Let $\widetilde{\mathcal{B}}_E = \widetilde{B}_E \otimes I_2$. \lhc{Note that $\widetilde{\mathcal{E}}_N$ only differs from $\mathcal{E}_N$ in that there is a shunt edge at every bus in $\widetilde{\mathcal{E}}_N$.}

With the extended graph defined, for each $\nu_i \in \mathcal{V}_B$, we can similarly derive selection matrices $\widetilde{\mathcal{K}}_i \in \mathbb{R}^{2N_E \times 2\tilde{m}_i}$ (see [Part I, \cref{pt1-sec:stab_extended}]), where $\tilde{m}_i = p_i + 1$, with $p_i$ equal to the number of power lines connected to $\nu_i \in \mathcal{V}_B$. We then let 
\begin{equation} \label{eq:K_mat_def}
    \widetilde{\mathcal{K}}:= \begin{bmatrix}
        \widetilde{\mathcal{K}}_1 & \widetilde{\mathcal{K}}_2 & \cdots & \widetilde{\mathcal{K}}_{N_B}
    \end{bmatrix}.
\end{equation}
Finally, for each $\nu_i \in \mathcal{V}_B$, define 
\begin{equation} \label{eq:E_sys_def}
    \widetilde{E}_i(s) := \widetilde{Z}_{E,i}(s) \widetilde{Y}_{E,i}(s),
\end{equation}
where 
\begin{subequations} \label{eq:ZE_LE_def}
    \begin{align} 
        \widetilde{Z}_{E,i}(s) &:= \widetilde{\mathcal{K}}_i^{\ T} (\widetilde{\mathcal{B}}_E^{i\bullet})^T \widetilde{Z}_i(s) \widetilde{\mathcal{B}}_E^{i\bullet} \widetilde{\mathcal{K}}_i, \\
        \widetilde{Y}_{E,i}(s) & := \widetilde{\mathcal{K}}_i^{\ T} \widetilde{Y}_E(s) \widetilde{\mathcal{K}}_i,
     \end{align}
\end{subequations}
with $\widetilde{Z}_i(s)$ and $\widetilde{Y}_E(s)$ defined in \eqref{eq:hybrid_impd_def} and \eqref{eq:YE_def}, respectively, and $\widetilde{\mathcal{B}}_E^{i\bullet}$ is the $i^{\mathrm{th}}$ $2 \times 2N_E$ block-row of $\widetilde{\mathcal{B}}_E$. Then let $\widetilde{E}(s) = \oplus_{\nu_i \in \mathcal{V}_B} \widetilde{E}_i(s)$.

Now, using \eqref{eq:network_admittance}, \eqref{eq:U_squared} and \eqref{eq:hybrid_net_def}, (and omitting the argument $s$ for convenience), we have
\begin{equation*}
    \begin{aligned}
        \widetilde{L}_H 
        & = \widetilde{G}_B\widetilde{N}_H 
        = \widetilde{G}_B (U^\dagger \mathcal{B}_N Y_P \mathcal{B}_N^T U^\ddagger + U^\dagger Y_0 U^\ddagger + W F_L) \\ 
        & = \widetilde{G}_B U^\dagger ( \mathcal{B}_N Y_P \mathcal{B}_N^T  +  Y_0  + \check{W}_L) U^\ddagger \\
        & = {V^\star}^{-2} U^\ddagger \widetilde{Z}_B \widetilde{\mathcal{B}}_E \widetilde{Y}_E \widetilde{\mathcal{B}}_E^{\ T} U^\ddagger,
    \end{aligned}
\end{equation*}
where $\widetilde{Z}_B(s) = \oplus_{\nu_i \in \mathcal{V}_B} \widetilde{Z}_i(s)$ with $\widetilde{Z}_i(s)$ given in \eqref{eq:hybrid_impd_def}, and $V^\star:= (\oplus_{\nu_i \in \mathcal{V}_B} V_i^\star)\otimes I_2$. Therefore, the non-zero eigenvalues of $\widetilde{L}_H$ equal those of $\widetilde{Z}_B \widetilde{\mathcal{B}}_E \widetilde{Y}_E \widetilde{\mathcal{B}}_E^{\ T}$. 
Following a similar approach to the proof of [Part I, \cref{pt1-lemma:reform_extended}], we can then show that, for a particular $s \in \Gamma_N^{\,\epsilon}$, the condition \eqref{eq:loci_condition} holds if 
\begin{equation} \label{eq:loci_condition_extended}
    -1 \notin \sigma(\tau  \widetilde{E}(s) \widetilde{\mathcal{K}}^{\ T} \widetilde{\mathcal{K}}), \quad \forall \tau \in [0,1]. 
\end{equation}
Consequently, all the results of [Part I, \cref{pt1-sec:stab_extended}] directly carry over to the hybrid model, so we re-state the following version of [Part I, \cref{pt1-lemma:iqc_ext}]. 

\begin{lemma} \label{lemma:iqc_ext}
    Consider the reformulated grid system $[\widetilde{E}(s), \widetilde{\mathcal{K}}^{\ T} \widetilde{\mathcal{K}}]$, defined above. For a particular $s \in \Gamma_N^{\,\epsilon}$ and for each $(\nu_i,\nu_j) \in \widetilde{\mathcal{E}}_N$, let $\widetilde{\pi}_{12}^{ij}(s) \in \mathbb{C}^{2\times2}$ and $\widetilde{\pi}_{22}^{ij}(s) \in \mathbb{H}^2$, with $\widetilde{\pi}_{22}^{ij}(s) \geq 0$, be multiplier blocks such that the following matrix inequality holds: 
    \begin{equation} \label{eq:iqc_ext_condition}
        -\widetilde{\pi}_{12}^{ij}(s) - \widetilde{\pi}_{12}^{ij}(s)^\ast + n_{ij}\widetilde{\pi}_{22}^{ij}(s) \leq 0,
    \end{equation}
    where $n_{ij} = 2$ if $(\nu_i,\nu_j)\in \widetilde{\mathcal{E}}_N$ is a power line, and $n_{ij} = 1$ if $(\nu_i,\nu_j) \in \widetilde{\mathcal{E}}_N$ contains a load or shunt together with a loop transformation. Let $\widetilde{\pi}_{12}(s) = \oplus_{(\nu_i,\nu_j) \in \widetilde{\mathcal{E}}_N} \widetilde{\pi}_{12}^{ij}(s)$ and $\widetilde{\pi}_{22}(s) = \oplus_{(\nu_i,\nu_j) \in \widetilde{\mathcal{E}}_N} \widetilde{\pi}_{22}^{ij}(s)$. Then, for each $\nu_i \in \mathcal{V}_B$, take
    \begin{subequations}
        \begin{align} 
                \widetilde{\Pi}_{12}^i(s):= \widetilde{\mathcal{K}}_i^{\ T} \widetilde{\pi}_{12}(s) \widetilde{\mathcal{K}}_i, \label{eq:iqc_ext_structure_12} \\  \widetilde{\Pi}_{22}^i(s):= \widetilde{\mathcal{K}}_i^{\ T} \widetilde{\pi}_{22}(s) \widetilde{\mathcal{K}}_i,
        \end{align}
    \end{subequations}
    i.e., the blocks of $\widetilde{\Pi}_{12}^i(s)$ and $\widetilde{\Pi}_{22}^i(s)$ are selected to include the multiplier terms associated with the edges connected to $\nu_i \in \mathcal{V}_B$. 
    Then, \eqref{eq:loci_condition_extended} holds if, for every $\nu_i \in \mathcal{V}_B$,
    \begin{equation} \label{eq:iqc_ext_local}
        \begin{bmatrix}
            \widetilde{E}_i(s) \\ I_{2\tilde{m}_i}
        \end{bmatrix}^\ast \begin{bmatrix}
            0 & \ \widetilde{\Pi}_{12}^i(s)^\ast \\
            \widetilde{\Pi}_{12}^i(s) &  \widetilde{\Pi}_{22}^i(s)
        \end{bmatrix} \begin{bmatrix}
            \widetilde{E}_i(s) \\ I_{2\tilde{m}_i}
        \end{bmatrix} > 0.
    \end{equation} 
\end{lemma}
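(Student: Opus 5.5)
The plan is to reduce \cref{lemma:iqc_ext} to the general quadratic-constraint test \eqref{eq:iqc_conditions}, taking $P(s)=\widetilde{E}(s)$ and $K(s)=\widetilde{\mathcal{K}}^{\ T}\widetilde{\mathcal{K}}$. I would use a block-diagonal multiplier $\Pi(s)$ with $\Pi_{11}=0$, $\Pi_{12}=\oplus_{\nu_i}\widetilde{\Pi}_{12}^i(s)$ and $\Pi_{22}=\oplus_{\nu_i}\widetilde{\Pi}_{22}^i(s)$. Because $\widetilde{E}(s)$ is block diagonal, the subsystem inequality \eqref{eq:iqc_conditions_top} separates into the local conditions \eqref{eq:iqc_ext_local}, one per bus. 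The only thing left to prove is the network inequality \eqref{eq:iqc_conditions_btm} for all $\tau\in[0,1]$.

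Since $\Pi_{11}=0\le 0$ and $\Pi_{22}\ge0$, the latter holding because each $\widetilde{\pi}_{22}^{ij}\ge0$ and $\widetilde{\Pi}_{22}^i$ is a congruence of it, [Part I, \cref{pt1-cor:iqc_convex}] lets me check \eqref{eq:iqc_conditions_btm} only at $\tau=1$. This is the key difference from \cref{cor:iqc_decentralised_imped}: here the multiplier is built directly on the reformulated loop rather than transported through the loop transformation, so the sign-definiteness is kept. Expanding the inequality at $\tau=1$ with $K=\widetilde{\mathcal{K}}^{\ T}\widetilde{\mathcal{K}}$ gives
\begin{equation*}
-\widetilde{\mathcal{K}}^{\ T}\widetilde{\mathcal{K}}\,\Pi_{12}-\Pi_{12}^\ast\widetilde{\mathcal{K}}^{\ T}\widetilde{\mathcal{K}}+\widetilde{\mathcal{K}}^{\ T}\widetilde{\mathcal{K}}\,\Pi_{22}\,\widetilde{\mathcal{K}}^{\ T}\widetilde{\mathcal{K}}\le 0 .
\end{equation*}
I would then use the structure of the selection matrices, exactly as in [Part I, \cref{pt1-lemma:iqc_ext}]. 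By \eqref{eq:iqc_ext_structure_12}, $\Pi_{12}=\widetilde{\mathcal{K}}^{\ T}(I_{N_B}\otimes\cdot)$-selected copies of $\widetilde{\pi}_{12}$, and the same holds for $\Pi_{22}$. The matrix $\widetilde{\mathcal{K}}\widetilde{\mathcal{K}}^{\ T}$ is diagonal and counts how many extended subsystems contain each edge: power lines appear in two subsystems and shunt/loop-transformation edges in one. Hence $\widetilde{\mathcal{K}}\widetilde{\mathcal{K}}^{\ T}=\oplus_{(\nu_i,\nu_j)}n_{ij}I_2$, and also $\widetilde{\mathcal{K}}\Pi_{12}\widetilde{\mathcal{K}}^{\ T}=\widetilde{\pi}_{12}\widetilde{\mathcal{K}}\widetilde{\mathcal{K}}^{\ T}$ and similarly for $\Pi_{22}$, since the blocks are edge-diagonal.

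Substituting these identities, the left-hand side becomes the congruence
\begin{equation*}
\widetilde{\mathcal{K}}^{\ T}\Big(\textstyle\oplus_{(\nu_i,\nu_j)} n_{ij}\big(-\widetilde{\pi}_{12}^{ij}-\widetilde{\pi}_{12}^{ij\ast}+n_{ij}\widetilde{\pi}_{22}^{ij}\big)\Big)\widetilde{\mathcal{K}}.
\end{equation*}
This is $\le0$ by \eqref{eq:iqc_ext_condition} and $n_{ij}>0$. The conclusion \eqref{eq:loci_condition_extended} then follows from [Part I, \cref{pt1-lemma:iqc}].

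I expect the main obstacle to be the bookkeeping in the middle step, namely verifying the commutation identities between $\widetilde{\mathcal{K}}$ and the block multipliers in the hybrid extended graph. The new shunt edge at every bus, which carries $\check{W}_{L,i}$, must be handled correctly with $n_{ij}=1$. I would confirm that each bus's extended subsystem contains exactly its own shunt edge and one copy of each incident line, which is what gives the $n_{ij}$ weights.
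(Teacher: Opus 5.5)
Your route is the one the paper takes: its proof only points to the Part I version of this lemma, and you reconstruct that argument. The outer steps are correct. With $P=\widetilde{E}(s)$, $K=\widetilde{\mathcal{K}}^{\ T}\widetilde{\mathcal{K}}$ and a multiplier with $\Pi_{11}=0$, $\Pi_{12}=\oplus_{\nu_i}\widetilde{\Pi}_{12}^i$, $\Pi_{22}=\oplus_{\nu_i}\widetilde{\Pi}_{22}^i$, inequality \eqref{eq:iqc_conditions_top} splits into the local conditions \eqref{eq:iqc_ext_local}. Since $\Pi_{11}=0$ and $\Pi_{22}\geq 0$, \eqref{eq:iqc_conditions_btm} only needs checking at $\tau=1$.

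The middle computation is wrong, however. Your displayed congruence puts weight $n_{ij}$ on $\widetilde{\pi}_{12}^{ij}$ and $n_{ij}^2$ on $\widetilde{\pi}_{22}^{ij}$, and it is not equal to the left-hand side. For a line $(\nu_i,\nu_j)$, the diagonal block of $\widetilde{\mathcal{K}}^{\ T}\widetilde{\mathcal{K}}\Pi_{12}$ at $\nu_i$'s copy of that line is $\widetilde{\pi}_{12}^{ij}$, whereas your expression gives $2\widetilde{\pi}_{12}^{ij}$. The identity $\widetilde{\mathcal{K}}\Pi_{12}\widetilde{\mathcal{K}}^{\ T}=\widetilde{\pi}_{12}\widetilde{\mathcal{K}}\widetilde{\mathcal{K}}^{\ T}$ is true. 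But substituting it into $\widetilde{\mathcal{K}}^{\ T}\widetilde{\mathcal{K}}\Pi_{12}$ replaces that term by $\widetilde{\mathcal{K}}^{\ T}\widetilde{\mathcal{K}}\Pi_{12}\widetilde{\mathcal{K}}^{\ T}\widetilde{\mathcal{K}}$. That is a different matrix, because $\widetilde{\mathcal{K}}^{\ T}\widetilde{\mathcal{K}}\neq I$ as soon as two buses share a line. Your expression is exactly what you would get if $\Pi_{12}$ were the full congruence $\widetilde{\mathcal{K}}^{\ T}\widetilde{\pi}_{12}\widetilde{\mathcal{K}}$ rather than its block-diagonal part $\oplus_i\widetilde{\mathcal{K}}_i^{\ T}\widetilde{\pi}_{12}\widetilde{\mathcal{K}}_i$. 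Your garbled description of $\Pi_{12}$ suggests this is where the slip came from. You reach the right sign only because the spurious factor happens to be a positive diagonal weight.

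The identity you actually need is $\widetilde{\mathcal{K}}\Pi_{12}=\widetilde{\pi}_{12}\widetilde{\mathcal{K}}$. Its $i$th column block is $\widetilde{\mathcal{K}}_i\widetilde{\mathcal{K}}_i^{\ T}\widetilde{\pi}_{12}\widetilde{\mathcal{K}}_i=\widetilde{\pi}_{12}\widetilde{\mathcal{K}}_i\widetilde{\mathcal{K}}_i^{\ T}\widetilde{\mathcal{K}}_i=\widetilde{\pi}_{12}\widetilde{\mathcal{K}}_i$. Two facts justify this: $\widetilde{\mathcal{K}}_i\widetilde{\mathcal{K}}_i^{\ T}$ is the diagonal projector onto the edges at $\nu_i$, so it commutes with the edge-block-diagonal $\widetilde{\pi}_{12}$, and $\widetilde{\mathcal{K}}_i^{\ T}\widetilde{\mathcal{K}}_i=I_{2\tilde{m}_i}$. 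The same holds for $\Pi_{22}$. Hence $\widetilde{\mathcal{K}}^{\ T}\widetilde{\mathcal{K}}\Pi_{12}=\widetilde{\mathcal{K}}^{\ T}\widetilde{\pi}_{12}\widetilde{\mathcal{K}}$ and $\widetilde{\mathcal{K}}^{\ T}\widetilde{\mathcal{K}}\Pi_{22}\widetilde{\mathcal{K}}^{\ T}\widetilde{\mathcal{K}}=\widetilde{\mathcal{K}}^{\ T}\widetilde{\pi}_{22}\widetilde{\mathcal{K}}\widetilde{\mathcal{K}}^{\ T}\widetilde{\mathcal{K}}$. Using your correct count $\widetilde{\mathcal{K}}\widetilde{\mathcal{K}}^{\ T}=\oplus n_{ij}I_2$, the left-hand side at $\tau=1$ is
\begin{equation*}
\widetilde{\mathcal{K}}^{\ T}\Big(\textstyle\oplus_{(\nu_i,\nu_j)\in\widetilde{\mathcal{E}}_N}\big(-\widetilde{\pi}_{12}^{ij}-(\widetilde{\pi}_{12}^{ij})^\ast+n_{ij}\widetilde{\pi}_{22}^{ij}\big)\Big)\widetilde{\mathcal{K}}\leq 0,
\end{equation*}
which is nonpositive directly by \eqref{eq:iqc_ext_condition}. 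This also explains why $n_{ij}$ appears only once in \eqref{eq:iqc_ext_condition}, and only on $\widetilde{\pi}_{22}^{ij}$: it comes from the single factor $\widetilde{\mathcal{K}}\widetilde{\mathcal{K}}^{\ T}$ inside the quadratic term. With this correction the rest of your argument, including the handling of the shunt and loop-transformation edges with $n_{ij}=1$, goes through unchanged.
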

\begin{proof}
    The proof follows [Part I, \cref{pt1-lemma:iqc_ext}]. 
\end{proof}

Furthermore, the conditions in \cref{pt1-cor:ext_stability_simp,pt1-cor:ext_stability_nr,pt1-cor:ext_stability} of Part I give specific examples of multiplier blocks that satisfy \eqref{eq:iqc_ext_condition} and a corresponding graphical interpretation. As the constraint \eqref{eq:iqc_ext_condition} is local to each edge, use of \cref{lemma:iqc_ext} to certify stability can be used to form stability certifications that allow for plug-and-play functionality. 

\section{Low-Frequency Nyquist Analysis} \label{sec:low_freq}
The framework presented in Part I and \cref{sec:stab} certifies \eqref{eq:loci_condition} \lhc{by decentralised means. However, the case study in Part I showed that no decentralised condition was feasible at low frequencies without a stiff voltage source, because the characteristic loci approach $-1$ in this range. In this section, we analyse the Nyquist plot of $\widetilde{L}_H(s)$ directly and show that the characteristic polynomial decouples near $s=0$ into a factor governing frequency dynamics and a factor governing voltage dynamics, recovering the classical $P$--$\delta$ / $Q$--$V$ separation of power system analysis in a Nyquist setting. The first factor produces unbounded loci controlled by a decentralised condition on local setpoints, while the second
\lhg{produces bounded loci governed by a matrix closely related to}
the \textit{reduced power-flow Jacobian} that operators already constrain during dispatch.}

For each $\nu_i \in \mathcal{V}_B$, write the entries of $\widetilde{G}_i(s)$ as in \eqref{eq:pq_transfer_function}, with $\widetilde{G}_{\omega p,i}(s)$, $\widetilde{G}_{\omega q,i}(s)$, $\widetilde{G}_{vp,i}(s)$ and $\widetilde{G}_{vq,i}(s)$ having no poles in $\bar{\mathbb{C}}_+$, and let $\widetilde{G}_{\omega p}(s) := \oplus_{\nu_i \in \mathcal{V}_B} \widetilde{G}_{\omega p,i}(s)$, with $\widetilde{G}_{\omega q}(s)$, $\widetilde{G}_{vp}(s)$ and $\widetilde{G}_{vq}(s)$ defined analogously. Permuting the
rows and columns of $\widetilde{G}_B(s)$ and $\widetilde{N}_H(s)$ to group the frequency and voltage channels gives
\begin{equation*}
\setlength{\arraycolsep}{0.05cm}
    \widetilde{G}_B^P(s) 
    = \begin{bmatrix}
        \frac{1}{s} \widetilde{G}_{\omega p}(s) & \frac{1}{s} \widetilde{G}_{\omega q}(s) \\ \widetilde{G}_{vp}(s) & \widetilde{G}_{vq}(s)
    \end{bmatrix}\!, \, 
    \widetilde{N}_H^P(s) = \begin{bmatrix}
        \widetilde{N}_{p\delta}(s) & \widetilde{N}_{pv}(s) \\ \widetilde{N}_{q\delta}(s) & \widetilde{N}_{qv}(s)
    \end{bmatrix}\!,
\end{equation*}
with permuted \lhg{return ratio} 
\begin{equation} \label{eq:open_loop_permuted}
    \widetilde{L}_H^P(s) 
    := \widetilde{G}_B^P(s) \widetilde{N}_H^P(s) \\
    = \begin{bmatrix}
    \frac{1}{s} \widetilde{L}_{\omega \delta}(s) & \frac{1}{s} \widetilde{L}_{\omega v}(s) \\ \widetilde{L}_{v \delta}(s) & \widetilde{L}_{vv}(s)
    \end{bmatrix},
\end{equation}
where 
\begin{subequations} \label{eq:open_loop_permuted_blocks}
    \begin{align}
        \widetilde{L}_{\omega\delta}(s) &:= \widetilde{G}_{\omega p}(s) \widetilde{N}_{p\delta}(s) + \widetilde{G}_{\omega q}(s) \widetilde{N}_{q\delta}(s), \label{eq:open_loop_permuted_blocks_Lwd} \\ 
        \widetilde{L}_{\omega v}(s) &:= \widetilde{G}_{\omega p}(s) \widetilde{N}_{pv}(s) + \widetilde{G}_{\omega q}(s) \widetilde{N}_{qv}(s), \\
        \widetilde{L}_{v \delta}(s) &:= \widetilde{G}_{v p} (s)\widetilde{N}_{p\delta}(s) + \widetilde{G}_{v q}(s) \widetilde{N}_{q\delta}(s), \\
        \widetilde{L}_{vv}(s) &:= \widetilde{G}_{v p}(s) \widetilde{N}_{pv}(s) + \widetilde{G}_{v q}(s) \widetilde{N}_{qv}(s).
    \end{align}    
\end{subequations}
\lhc{The permutation is a similarity transformation, so $\widetilde{L}_H^P(s)$ and $\widetilde{L}_H(s)$ have the same spectrum. We now analyse the behaviour of that spectrum as $s \rightarrow 0$.}

\subsection{Low-Frequency Network Model} \label{sec:low_freq_model}
\lhc{We first show that, at low frequencies, the hybrid network system reduces to the power-flow Jacobian of the grid \cite[Section 14.3]{kundur_PowerSystem_22}.}
As discussed in \cref{sec:model_hybrid}, for $\omega< \omega_c$, $F_L(j\omega) \approx I$, so near $s=0$, $\widetilde{N}_H(s) \approx N_{PQ}(s) = U^\dagger Y_N(s) U^\ddagger + W$ by \eqref{eq:hybrid_net_def}. We now make the following assumption on the low-frequency network dynamics. 
\begin{assumption} \label{assump:low_freq_simp}
    \lhc{Each $2\times2$ block of $Y_N(0)$ in \eqref{eq:network_admittance} takes the form $G I_2 + B J$ with $G, B \in \mathbb{R}$.} Furthermore, there exists $\omega_l \in (0,\omega_c]$ such that $Y_N(s) \approx Y_N(0)$ for all $\lvert s \rvert \leq \omega_l$.
\end{assumption}
\lhc{Any network composed of admittances of $RLC$ elements, such as [Part I, Equation \eqref{pt1-eq:line_admittance}] and [Part I, Equation \eqref{pt1-eq:shunt_admittance}], takes the required form.} 
Therefore, near $s = 0$, $\widetilde{N}_H(s) \approx \widetilde{N}_H(0)$. 
\lhc{Evaluating \eqref{eq:powers} at the equilibrium, with ${v_i^{DQ}}^\star = V_i^\star[\cos\delta_i^\star,\ \sin\delta_i^\star]^T$ and ${i_B^{DQ}}^\star = Y_N(0){v_B^{DQ}}^\star$ from \eqref{eq:kcl_matrix_form}, yields the familiar power-flow equations:}
\begin{subequations} \label{eq:power_flow}
    \begin{align}
        &\begin{aligned}
            & P_i^\star = \underbrace{{V_i^\star}^2 \left(G_i + \textstyle\sum_{j \neq i,0}^{N_B} G_{ij} \right)}_{P_{ii}^\star} \\
            &\; \!+\! \sum_{j \neq i,0}^{N_B} \underbrace{V_i^\star V_j^\star\left[-  G_{ij} \cos(\delta_j^\star \!-\! \delta_i^\star) \!+\! B_{ij} \sin(\delta_j^\star \!-\! \delta_i^\star) \right]}_{P_{ij}^
            \star},
        \end{aligned} \label{eq:power_flow_p}\\
        &\begin{aligned}
            Q_i^\star &= \underbrace{-{V_i^\star}^2 \left(B_i + \textstyle\sum_{j \neq i,0}^{N_B} B_{ij} \right)}_{Q_{ii}^\star} \\
            & \!+\! \sum_{j \neq i,0}^{N_B} \underbrace{V_i^\star V_j^\star\left[B_{ij} \cos(\delta_j^\star \!-\! \delta_i^\star) \!+\! G_{ij} \sin(\delta_j^\star \!-\! \delta_i^\star) \right]}_{Q_{ij}^
            \star}.
        \end{aligned} \label{eq:power_flow_q}
    \end{align}
\end{subequations}
Note that in general $P_{ij}^\star \neq P_{ji}^\star$ and $Q_{ij}^\star \neq Q_{ji}^\star$. Now, through some similar algebra, we use \eqref{eq:pq_deriv_1} to calculate 
\begin{equation*}
    \begin{aligned}
        \begin{bmatrix}
            \Delta P_i(0) \\ \Delta Q_i(0)
         \end{bmatrix} 
    & = \left( \begin{bmatrix}
             -\sum_{j \neq i,0}^{N_B} Q_{ij}^\star & P_i^\star + P_{ii}^\star \\ \sum_{j \neq i,0}^{N_B} P_{ij}^\star & Q_i^\star + Q_{ii}^\star
         \end{bmatrix}  \right)  \begin{bmatrix}
             \Delta \delta_i(0) \\ \frac{\Delta V_i(0)}{V_i^\star} 
         \end{bmatrix} \\
         &\qquad\qquad+ \sum_{j \neq i,0}^{N_B} \begin{bmatrix}
             Q_{ij}^\star & P_{ij}^\star \\ - P_{ij}^\star & Q_{ij}^\star
         \end{bmatrix} \begin{bmatrix}
             \Delta \delta_j(0) \\ \frac{\Delta V_j(0)}{V_j^\star} 
         \end{bmatrix},
    \end{aligned}
\end{equation*}
which gives us the entries of $\widetilde{N}_H(s)$ near $s = 0$. Performing the row and column permutation required to form $\widetilde{N}_H^P(s)$ gives, for all $s \in \Gamma_N^{\, \epsilon}$ with $|s|\leq \omega_l$,
\begin{equation} \label{eq:low_freq_net}
    \widetilde{N}_H^P(s) \approx \begin{bmatrix}
        J_{p\delta} & J_{pv} \\ J_{q\delta} & J_{qv}
    \end{bmatrix},
\end{equation}
i.e., $\widetilde{N}_H^P(s)$ reduces to the power-flow Jacobian \cite[Section 14.3]{kundur_PowerSystem_22}, where the entries of each block are 
\begin{equation} \label{eq:low_freq_net_blocks}
    \begin{array}{c|cc}
         & i=j & i\neq j \\ \hline \rule{0pt}{12pt}
        J_{p\delta}^{(i,j)} & -\sum_{k \neq i,0}^{N_B} Q_{ik}^\star, & Q_{ij}^\star, \\[4pt]
        J_{pv}^{(i,j)}      & P_i^\star + P_{ii}^\star,       & P_{ij}^\star,  \\[4pt]
        J_{q\delta}^{(i,j)} & \sum_{k \neq i,0}^{N_B} P_{ik}^\star,  & -P_{ij}^\star, \\[4pt]
        J_{qv}^{(i,j)}      & Q_i^\star + Q_{ii}^\star,       & Q_{ij}^\star.
    \end{array}
\end{equation}
\lhc{Each block in \eqref{eq:low_freq_net_blocks} is in general non-symmetric, but $J_{p\delta}$ and $J_{q\delta}$ have zero row sums. Hence, for $\lvert s\rvert \leq \omega_l$,}
\begin{equation} \label{eq:null_identities}
    \lhc{\widetilde{L}_{\omega\delta}(s)\mathbf{1} = 0, \qquad
    \widetilde{L}_{v\delta}(s)\mathbf{1} = 0,}
\end{equation}
\lhc{since both blocks are combinations of $J_{p\delta}$ and $J_{q\delta}$ by \eqref{eq:open_loop_permuted_blocks}. }

\subsection{Decoupling of the Characteristic Loci} \label{sec:low_freq_loci}
We now simplify the characteristic polynomial of $\widetilde{L}_H^P(s)$ near $s = 0$ using the low-frequency network dynamics \eqref{eq:low_freq_net}. 
\lhc{By \eqref{eq:null_identities}, $\widetilde{L}_{\omega\delta}(s)$ is singular for $\lvert s \rvert \leq \omega_l$, so the derivation below requires care.}
\begin{assumption} \label{assump:simple_eigenvalue}
    For $|s| \leq \omega_l$, the zero eigenvalue of $\widetilde{L}_{\omega\delta}(s)$ \lhh{established} \lhc{in \eqref{eq:null_identities}} is simple.
\end{assumption}

\lhc{This assumption is mild under normal operating conditions due to the Laplacian-like structure of $J_{p\delta}$ and $J_{q\delta}$ in \eqref{eq:low_freq_net_blocks} for a connected network (\cref{assump:connected})}.
Consequently, $\mathrm{null}(\widetilde{L}_{\omega\delta}(s)) = \mathrm{span}(\mathbf{1})$, so $\widetilde{L}_{\omega\delta}(s)$ is a matrix of index $1$ and its group inverse $\widetilde{L}_{\omega\delta}(s)^\#$ exists (see \cref{def:group_inv}).

\begin{lemma} \label{lemma:char_poly}
    Consider the \lhg{return ratio} $\widetilde{L}_H^P(s)$ given by \eqref{eq:open_loop_permuted} under \cref{assump:low_freq_simp,assump:simple_eigenvalue}. For $s \in \Gamma_N^{\, \epsilon}$, with $|s| \leq \omega_l$, the characteristic polynomial $\det(\widetilde{L}_H^P(s) - \lambda I) =0$ is given by
    \begin{equation} \label{eq:char_poly_decoupled}
        \det(\tfrac{1}{s} \widetilde{L}_{\omega\delta}(s) - \lambda I)\times \det( \widetilde{L}_R^P(s) - \lambda I + \mathcal{O}(s\lambda)) = 0, 
    \end{equation}
where
\begin{equation} \label{eq:open_loop_reduced_near_0}
    \widetilde{L}_R^P(s) = \widetilde{L}_{vv}(s) - \widetilde{L}_{v\delta}(s)\widetilde{L}_{\omega\delta}(s)^\#  \widetilde{L}_{\omega v}(s),
\end{equation}
and $\widetilde{L}_{\omega\delta}(s)^\#$ is the group inverse of $\widetilde{L}_{\omega\delta}(s)$. 
\end{lemma}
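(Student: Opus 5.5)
The plan is a block Schur-complement factorisation of $\widetilde{L}_H^P(s) - \lambda I$ that pivots on the frequency block $\tfrac{1}{s}\widetilde{L}_{\omega\delta}(s) - \lambda I$. Writing $A := \widetilde{L}_{\omega\delta}(s)$ for brevity, the top-left block is $\tfrac{1}{s}(A - s\lambda I)$, and for $\lambda$ not an eigenvalue of $\tfrac1s A$ the Schur identity gives
\begin{equation*}
\det(\widetilde{L}_H^P - \lambda I) = \det(\tfrac1s A - \lambda I)\,\det\big(\widetilde{L}_{vv} - \lambda I - \widetilde{L}_{v\delta}(A - s\lambda I)^{-1}\widetilde{L}_{\omega v}\big),
\end{equation*}
where the factors $\tfrac1s$ and $s$ have cancelled in the coupling term. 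The whole task therefore reduces to expanding $(A - s\lambda I)^{-1}$ for small $s\lambda$. This is delicate because $A$ is singular by \eqref{eq:null_identities}, so the resolvent has a pole at $s\lambda = 0$.

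First, I will use \cref{assump:simple_eigenvalue} and the index-$1$ property to write the Laurent expansion of the resolvent about the simple zero eigenvalue. Let $P_0$ be the oblique projector onto $\mathrm{null}(A)=\mathrm{span}(\mathbf 1)$ along $\mathrm{range}(A)$, as in \eqref{eq:group_inv_proj}. Then $A$ acts invertibly on $\mathrm{range}(A)$, with inverse $A^\#$ by \eqref{eq:group_invser_proj_prop}, and it acts as zero on $\mathrm{null}(A)$. Decomposing along these complementary invariant subspaces gives
\begin{equation*}
(A - \mu I)^{-1} = -\tfrac{1}{\mu}P_0 + A^\#(I - \mu A^\#)^{-1}(I-P_0) = -\tfrac1\mu P_0 + A^\# + \mathcal{O}(\mu),
\end{equation*}
with $\mu = s\lambda$, as in \cite{rose_LaurentExpansion_78}.

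Second, I will show that the singular term drops out. Since $P_0 = \mathbf{1}w^T$ for a suitable left null vector $w$, the identity $\widetilde{L}_{v\delta}\mathbf 1 = 0$ from \eqref{eq:null_identities} gives $\widetilde{L}_{v\delta}P_0 = 0$. Substituting the expansion then yields
\begin{equation*}
\widetilde{L}_{vv} - \lambda I - \widetilde{L}_{v\delta}A^\#\widetilde{L}_{\omega v} + \mathcal{O}(s\lambda) = \widetilde{L}_R^P(s) - \lambda I + \mathcal{O}(s\lambda).
\end{equation*}
This is the second factor in \eqref{eq:char_poly_decoupled}.

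Third, I will handle the values of $\lambda$ where the Schur step is invalid, namely the eigenvalues of $\tfrac1s A$. One option is to multiply through and note that both sides are polynomial in $\lambda$, so the identity extends by continuity. Alternatively, I can read \eqref{eq:char_poly_decoupled} as an asymptotic statement about the roots: the $\mathcal{O}(1/s)$ eigenvalues come from the first factor, and the bounded ones come from the second. Throughout, \cref{assump:low_freq_simp} is what justifies treating the network blocks as constant and the null identities as exact for $|s|\le\omega_l$.

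The main obstacle is making the $\mathcal{O}(s\lambda)$ remainder rigorous. The Neumann series for $(I-\mu A^\#)^{-1}$ converges only when $|s\lambda|\,\|A^\#\|<1$, so the expansion is uniform only for bounded $\lambda$. For the unbounded branches, where $\lambda\sim 1/s$, it fails. I would address this by arguing that those roots are already captured exactly by the first factor, and by stating the second factor's remainder as valid on the bounded-$\lambda$ regime. The remaining point to check is that $\widetilde{L}_{v\delta}P_0=0$ holds exactly and not only approximately, which depends on how literally the "$\approx$" in \eqref{eq:low_freq_net} is taken.
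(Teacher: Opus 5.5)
Your proposal is correct and follows essentially the same route as the paper: the Schur complement on the block $\tfrac{1}{s}\widetilde{L}_{\omega\delta}(s)-\lambda I$, the Laurent expansion $(\widetilde{L}_{\omega\delta}(s)-s\lambda I)^{-1}=\widetilde{L}_{\omega\delta}(s)^\#-\tfrac{1}{s\lambda}P_0+\mathcal{O}(s\lambda)$ with $P_0=\mathbf{1}\xi^T$ from \cite{rose_LaurentExpansion_78}, and $\widetilde{L}_{v\delta}(s)\mathbf{1}=0$ to cancel the singular term. One caution on your closing remark: the nonzero roots $\mu/s$ of the first factor are not exact eigenvalues of $\widetilde{L}_H^P(s)$, because the Schur complement generically has a pole there that cancels the zero, so the unbounded loci are captured only to leading order, $\lambda=\mu/s+\mathcal{O}(1)$ (the coupling term $-{P_i^\star}^2k_{v,i}/Q_i^\star$ in \cref{lemma:iv_unstable}, which has the same block structure, is a concrete instance).
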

\begin{proof}
    The proof is given in Appendix \ref{app:char_poly_proof}.
\end{proof}

\lhc{This decoupling produces the classical $P$--$\delta$ / $Q$--$V$ separation of power system analysis in a Nyquist setting. The first determinant in \eqref{eq:char_poly_decoupled} is exact and its roots give $N_B$ of the characteristic loci directly. Of these, one branch is identically zero by \eqref{eq:null_identities}, while the other $N_B-1$ take the form $\lambda = \mu/s$ with $\mu \in \sigma(\widetilde{L}_{\omega\delta}(s))\setminus\{0\}$ and are therefore unbounded as $s \rightarrow 0$. The second determinant gives the remaining $N_B$ loci and depends on $\lambda$ through the $\mathcal{O}(s\lambda)$ term. For the branches it describes, $\lambda$ remains bounded as $s \rightarrow 0$, so this term vanishes and the second factor reduces to the eigenvalue problem for $\widetilde{L}_R^P(s)$ in \eqref{eq:open_loop_reduced_near_0}. The characteristic loci near $s = 0$ are therefore the union of two sets, characterised by the following corollary.} 

\begin{corollary} \label{cor:low_freq}
    \lhc{Let \cref{assump:low_freq_simp,assump:simple_eigenvalue} hold. Then, for $s \in \Gamma_N^{\, \epsilon}$ with $|s| \leq \omega_l$, \eqref{eq:loci_condition} holds if}
    \begin{subequations} \label{eq:low_freq_conditions}
        \begin{align} 
            &\Re(\mu) > 0, && \text{for all } \mu \in \sigma(\widetilde{L}_{\omega\delta}(s))\setminus\{0\},  \label{eq:unbounded_cond} \\
            &\lambda \notin (-\infty,-1], && \text{for all } \lambda \in \sigma(\widetilde{L}_R^P(s)). \label{eq:bounded_cond}
        \end{align}
    \end{subequations}
\end{corollary}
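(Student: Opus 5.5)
Using \cref{lemma:char_poly}, the spectrum of $\tau\widetilde{L}_H^P(s)$, which equals that of $\tau\widetilde{L}_H(s)$ by similarity, splits near $s=0$ into two sets. The first is the roots of $\det(\tfrac{1}{s}\widetilde{L}_{\omega\delta}(s)-\lambda I)$. The second is the roots of the perturbed factor $\det(\widetilde{L}_R^P(s)-\lambda I+\mathcal{O}(s\lambda))$. Scaling by $\tau$ scales every eigenvalue, so I will show that for every $\tau\in[0,1]$ no eigenvalue of $\widetilde{L}_H^P(s)$ satisfies $\tau\lambda=-1$. At $\tau=0$ this is trivial. For $\tau\in(0,1]$ it means no eigenvalue lies on the ray $(-\infty,-1]$. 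So \eqref{eq:loci_condition} is equivalent to $\sigma(\widetilde{L}_H(s))\cap(-\infty,-1]=\emptyset$, and I treat the two factors separately.

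\textbf{Unbounded factor.} Its roots are $0$ and $\mu/s$ for $\mu\in\sigma(\widetilde{L}_{\omega\delta}(s))\setminus\{0\}$. The zero root is never on the ray. For $s\in\Gamma_N^{\,\epsilon}$ with $|s|\le\omega_l$, $s$ lies either on the imaginary axis ($s=\pm j\omega$) or on the indentation $s=\epsilon e^{j\varphi}$ with $|\varphi|\le\pi/2$. In either case $1/s=|s|^{-1}e^{-j\varphi}$ with $\varphi\in[-\pi/2,\pi/2]$, so $\arg(1/s)\in[-\pi/2,\pi/2]$. Since $\Re(\mu)>0$ by \eqref{eq:unbounded_cond}, $\arg\mu\in(-\pi/2,\pi/2)$. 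Then $\arg(\mu/s)\in(-\pi,\pi)$, so $\mu/s$ is never a negative real number. In particular it avoids $(-\infty,-1]$ for every $\tau\in(0,1]$.

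\textbf{Bounded factor.} I use the interpretation after \cref{lemma:char_poly}: for these branches $\lambda$ stays bounded, so the $\mathcal{O}(s\lambda)$ term vanishes as $|s|\to0$ and the roots approach $\sigma(\widetilde{L}_R^P(s))$. Condition \eqref{eq:bounded_cond} then excludes $(-\infty,-1]$, and continuity of eigenvalues carries the exclusion over to the true roots, within the same approximation used in \eqref{eq:low_freq_net} and \cref{lemma:char_poly}. Combining the two factors, no eigenvalue of $\widetilde{L}_H(s)$ lies on $(-\infty,-1]$, which gives \eqref{eq:loci_condition}.

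\textbf{Main obstacle.} The delicate step is justifying that the roots of the second factor are exactly the bounded branches and that the $\mathcal{O}(s\lambda)$ perturbation cannot push one of them onto the ray. I would handle this by noting that every root of the first factor is accounted for (one zero root and $N_B-1$ roots $\mu/s$, each $\mu$ counted with multiplicity), so the remaining $N_B$ roots belong to the second factor. Then I would use the strict exclusion in \eqref{eq:bounded_cond} together with continuity of eigenvalues, treating the approximation at the same level of rigour as \cref{lemma:char_poly} itself.
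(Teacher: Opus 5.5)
Your proposal is correct and follows essentially the same route as the paper's proof. The zero branch is trivial, the branches $\mu/s$ avoid the negative real axis because $\Re(\mu)>0$ forces $\arg(\mu/s)\in(-\pi,\pi)$, and the bounded branches are handled directly by \eqref{eq:bounded_cond} at the same approximate level of rigour as \cref{lemma:char_poly}. The paper checks the sign of $\Im(\lambda)$ on the imaginary axis and the semicircular sweep along $\Gamma_\epsilon$ separately, whereas your single observation $\arg(1/s)\in[-\pi/2,\pi/2]$ covers both cases, and since it is pointwise it does not need $\mu$ to stay constant along the indentation.
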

\begin{proof}
    The proof is given in Appendix \ref{app:low_freq_proof}. 
\end{proof}

Conditions \eqref{eq:low_freq_conditions} can each be verified using Gershgorin's circle theorem applied to the rows of their respective matrices \cite{horn_MatrixAnalysis_85}. As each row of $\widetilde{L}_{\omega\delta}(s)$ is sparse and depends only on parameters and setpoints local to $\nu_i \in \mathcal{V}_B$ and its neighbours, \lhc{this constitutes a decentralised check for \eqref{eq:unbounded_cond}}. \lhc{For \eqref{eq:bounded_cond}, however, the group inverse in \eqref{eq:open_loop_reduced_near_0} destroys the sparsity of $\widetilde{L}_{\omega\delta}(s)$, so the rows of $\widetilde{L}_R^P(s)$ are not local to individual buses. This condition therefore requires centralised analysis.}

\subsection{Behaviour at Typical Operating Points} \label{sec:low_freq_behaviour}
\lhc{We now argue that both conditions \eqref{eq:low_freq_conditions} are typically satisfied under standard network configurations and dispatch, so the Nyquist plot at low frequencies is usually well behaved.}

For the unbounded loci, a predominantly frequency-to-active-power droop-controlled bus has $\widetilde{G}_{\omega q}(s) \approx 0$ at low frequencies, so $\widetilde{L}_{\omega\delta}(s) \approx \widetilde{G}_{\omega p}(s)J_{p\delta}$. At normal operating points, \lhc{transmission branches have} $B_{ij} < 0$ with $\lvert B_{ij}\rvert \gg \lvert G_{ij}\rvert$ and small angle differences, so \eqref{eq:power_flow_q} gives $Q_{ij}^\star < 0$ and, by \eqref{eq:low_freq_net_blocks}, $J_{p\delta}$ has positive diagonal entries, non-positive off-diagonal entries and zero row sums. Its non-zero eigenvalues are therefore in the right half-plane. Since $\widetilde{G}_{\omega p,i}(s) \approx k_{\omega,i} > 0$ near $s = 0$, this structure is retained by $\widetilde{L}_{\omega\delta}(s)$ and \eqref{eq:unbounded_cond} is satisfied.

For the bounded loci, consider the simplified case $\widetilde{G}_{\omega q}(s) \approx \widetilde{G}_{vp}(s) \approx 0$ and $\widetilde{G}_{\omega p}(s) \approx \tilde{g}_{\omega p}(s)I_{N_B}$, i.e., approximately homogeneous frequency-to-active-power dynamics across the network. Then \eqref{eq:open_loop_reduced_near_0} becomes
\begin{equation} \label{eq:reduced_jacobian}
    \widetilde{L}_R^P(s) \approx \widetilde{G}_{vq}(s)
    \left( J_{qv} - J_{q\delta}J_{p\delta}^{\#}J_{pv} \right),
\end{equation}
in which the bracketed matrix is closely related to the \textit{reduced power-flow Jacobian} $J_R$ \cite[Section 14.3]{kundur_PowerSystem_22}, \cite{gao_VoltageStability_92}. Grid operators already constrain $J_R$ during generator dispatch to keep its eigenvalues in the right half-plane, since left half-plane eigenvalues \lhc{indicate proximity to voltage collapse}. Scaled by an approximately homogeneous $\widetilde{G}_{vq}(s)$, the eigenvalues of $\widetilde{L}_R^P(s)$ therefore also lie in the right half-plane, safely clear of~$-1$. \lhg{For heterogeneous $\widetilde{G}_{\omega p}(s)$ and $\widetilde{G}_{vq}(s)$, the loci of $\widetilde{L}_R^P(s)$ often obey \eqref{eq:bounded_cond} by the same heuristic, as shown in the case study in \cref{sec:case_study}; for more complex cases, the conditions \eqref{eq:low_freq_conditions} can be incorporated into power-flow analysis during generator dispatch.}

Consequently, the conditions \eqref{eq:low_freq_conditions} can be omitted for many standard network configurations, and decentralised grid codes can concentrate on the higher frequency ranges where problematic electromagnetic interactions arise. 

\section{Combined Stability Criteria} \label{sec:comb}
The results of the previous sections are now combined to produce a set of criteria that can be used to certify the small-signal stability of the hybrid grid system $[\widetilde{G}_B(s),\widetilde{N}_H(s)]$. 

\begin{proposition} \label{prop:stability}
    Consider the hybrid grid system $[\widetilde{G}_B(s),\widetilde{N}_H(s)]$ described in \cref{sec:model_hybrid} and illustrated in \cref{fig:tikz_model_hyb}, \lhg{such that $\widetilde{G}_i(s) \in \mathbf{RH}_{\infty,0}^{2 \times 2}$ for each $\nu_i \in \mathcal{V}_B$, with at most one pole at $s=0$, and \lhc{$\widetilde{L}_H(s) \in \mathbf{RH}_{\infty,0}^{2N_B \times 2N_B}$}}.
    Let \cref{assump:connected,assump:low_freq_simp,assump:simple_eigenvalue} hold, \lhc{and suppose that \eqref{eq:low_freq_conditions} is satisfied at $s = 0$}. 
    Then $[\widetilde{G}_B(s),\widetilde{N}_H(s)]$ is stable under \cref{def:stability} if
    there exists an $\bar{\epsilon}>0$ such that for all $\epsilon \in (0,\bar{\epsilon}]$
    at least one of the following conditions is satisfied at each \lhc{$s = j\omega$, $\omega \geq \epsilon$}: 
    \begin{itemize}
        \item \lhc{for every $\tau \in [0,1]$}, there exists a multiplier $\widetilde{\Pi}(s) \in \mathbb{C}^{4N_B \times 4N_B}$ of the form \eqref{eq:multiplier_decentralised} such that \eqref{eq:iqc_decentralised_network} holds and, at each bus $\nu_i \in \mathcal{V}_B$, \eqref{eq:iqc_decentralised_buses} holds. 
        \item for each $(\nu_i,\nu_j) \in \widetilde{\mathcal{E}}_N$, there exists $\widetilde{\pi}_{12}^{ij}(s) \in \mathbb{C}^{2 \times 2}$ and $\widetilde{\pi}_{22}^{ij}(s) \in \mathbb{H}^2$  with $\widetilde{\pi}_{22}^{ij}(s) \geq 0$ such that \eqref{eq:iqc_ext_condition} is satisfied and, at each bus $\nu_i \in \mathcal{V}_B$, \eqref{eq:iqc_ext_local} holds. 
        \item $|s| \leq \omega_l$, \lhc{and conditions \eqref{eq:low_freq_conditions} hold}.
        \item condition \eqref{eq:loci_condition} holds.
    \end{itemize}
\end{proposition}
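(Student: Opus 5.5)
The plan is to reduce the proposition to the sufficient Nyquist condition \eqref{eq:nyquist_condition} (Part I, Lemma on sufficient Nyquist conditions) applied to the return ratio $\widetilde{L}_H(s)\in\mathbf{RH}_{\infty,0}^{2N_B\times 2N_B}$. Concretely, I will show that there is an $\bar\epsilon>0$ such that for every $\epsilon\in(0,\bar\epsilon]$, \eqref{eq:loci_condition} holds at every $s\in\Gamma_N^{\,\epsilon}$ and every $\tau\in[0,1]$. Stability of $[\widetilde{G}_B(s),\widetilde{N}_H(s)]$ under \cref{def:stability} then follows directly. The contour splits into four pieces: $\Gamma_{j\omega}^{\epsilon+}$, its conjugate $\Gamma_{j\omega}^{\epsilon-}$, the indentation $\Gamma_\epsilon$, and the arc $\Gamma_R$. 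I will treat each in turn.

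On $\Gamma_{j\omega}^{\epsilon+}$, i.e.\ $s=j\omega$ with $\omega\ge\epsilon$, each of the four bullets implies \eqref{eq:loci_condition} pointwise.
\begin{itemize}
\item The first bullet does so by \cref{lemma:iqc_decentralised}.
\item The second bullet gives \eqref{eq:loci_condition_extended} by \cref{lemma:iqc_ext}. This transfers to \eqref{eq:loci_condition} through the equality of the nonzero spectra of $\widetilde{L}_H$ and $\widetilde{Z}_B\widetilde{\mathcal{B}}_E\widetilde{Y}_E\widetilde{\mathcal{B}}_E^{\,T}$, together with the Part I reformulation lemma for extended subsystems.
\item The third bullet does so by \cref{cor:low_freq}.
\item The fourth bullet is \eqref{eq:loci_condition} itself.
\end{itemize}
On $\Gamma_{j\omega}^{\epsilon-}$, the transfer functions are real-rational, so $\widetilde{L}_H(\bar s)=\overline{\widetilde{L}_H(s)}$. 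The spectrum at $\bar s$ is therefore the conjugate of the spectrum at $s$, and $-1$ is real, so \eqref{eq:loci_condition} at $s$ implies it at $\bar s$. On $\Gamma_R$ with $R\to\infty$, properness means $\widetilde{L}_H(s)$ tends to the constant $\widetilde{L}_H(j\infty)$. Continuity of the spectrum then extends the condition at large $\omega$ to the arc. Here I would either absorb the arc into the $\omega\to\infty$ limit of the imaginary-axis conditions, as is done in Part I, or note that the bullets are assumed to hold uniformly at large $\omega$.

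The main obstacle is the indentation $\Gamma_\epsilon$, where $s=\epsilon e^{j\varphi}$ is not purely imaginary, so the bullets are not assumed there. Instead I will use the hypothesis that \eqref{eq:low_freq_conditions} holds at $s=0$, combined with continuity. By \cref{lemma:char_poly}, for $|s|\le\omega_l$ the loci split into two groups.
\begin{itemize}
\item The \emph{unbounded branches} are $\lambda=\mu(s)/s$, with $\mu(s)\in\sigma(\widetilde{L}_{\omega\delta}(s))\setminus\{0\}$, plus one branch that is identically zero. Since $\widetilde{L}_{\omega\delta}$ is analytic near $0$ and the zero eigenvalue is simple (\cref{assump:simple_eigenvalue}), the nonzero eigenvalues $\mu(s)$ depend continuously on $s$. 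Hence $\Re\mu(0)>0$ gives $\Re\mu(s)>0$ and $\Im\mu(s)$ small for all $|s|\le\bar\epsilon$. Then $\tau\mu(s)/s$ with $\arg s\in[-\pi/2,\pi/2]$ has argument in $\arg\mu-\arg s$. This stays strictly away from $\pi$ because $|\arg\mu|<\pi/2$ and $|\arg s|\le\pi/2$, so these branches never hit $-1$ for any $\tau$.
\item The \emph{bounded branches} tend to $\sigma(\widetilde{L}_R^P(0))$ as $s\to0$, since the $\mathcal{O}(s\lambda)$ perturbation vanishes. Condition \eqref{eq:bounded_cond} at $s=0$ keeps these eigenvalues a positive distance from $(-\infty,-1]$. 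The same distance bound excludes $-1/\tau$ for every $\tau\in(0,1]$, and $\tau=0$ is trivial.
\end{itemize}
The delicate point is making the perturbation argument uniform. The second factor in \eqref{eq:char_poly_decoupled} is only defined up to $\mathcal{O}(s\lambda)$, so I would argue by Rouché, or by continuity of the roots of a polynomial in $\lambda$ with coefficients continuous in $s$, that for $\lambda$ restricted to a bounded set the roots stay within a small neighbourhood of $\sigma(\widetilde{L}_R^P(0))$. Choosing $\bar\epsilon\le\omega_l$ small enough for both continuity arguments completes the indentation. Collecting the four contour pieces then yields \eqref{eq:nyquist_condition}.
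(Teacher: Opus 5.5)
Your treatment of the contour follows the paper: the bullets on $\Gamma_{j\omega}^{\epsilon+}$, conjugate symmetry and properness for $\Gamma_{j\omega}^{\epsilon-}$ and $\Gamma_R$, and continuity from $s=0$ on $\Gamma_\epsilon$. Your Rouch\'e-type handling of the $\mathcal{O}(s\lambda)$ term is, if anything, more careful than the paper's. However, the step ``stability under \cref{def:stability} then follows directly'' is a genuine gap.

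\Cref{def:stability} asks for two things. First, $(I+\widetilde{L}_H(s))^{-1}$ must have no poles in $\bar{\mathbb{C}}_+\setminus\{0\}$. Second, it must have \emph{at most a simple pole at $s=0$}. Because $\Gamma_N^{\,\epsilon}$ is indented around the origin, the Nyquist count only controls the first property. This is why the sufficient condition \eqref{eq:nyquist_condition} is stated (see the parenthetical preceding it) under the proviso that $(I+L(s))^{-1}$ has at most one pole at $s=0$; that proviso must be verified separately. It does not come for free. The scalar $L(s)=-(2s+1)/(s+1)^2$ satisfies \eqref{eq:nyquist_condition} for every $\epsilon>0$, because the solutions of $\tau L(s)=-1$ are $s=-(1-\tau)\pm j\sqrt{\tau(1-\tau)}$. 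Yet $(1+L(s))^{-1}=(s+1)^2/s^2$ has a double pole at the origin.

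The missing step uses the hypothesis that \eqref{eq:low_freq_conditions} holds at $s=0$ a second time, exactly as the paper does.
\begin{itemize}
\item Set $A(s)=\mathrm{diag}(sI,I)\,(I+\widetilde{L}_H^P(s))$, which is analytic near $0$.
\item The same Schur-complement and group-inverse expansion as in \cref{lemma:char_poly}, now with $\lambda=-1$, gives $\det A(s)=\det(sI+\widetilde{L}_{\omega\delta}(s))\,\det(I+\widetilde{L}_R^P(s)+\mathcal{O}(s))$.
\item The first factor is $s$ times the product of $s+\mu$ over the nonzero eigenvalues $\mu$ of $\widetilde{L}_{\omega\delta}(s)$. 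It therefore has a simple zero at $s=0$ by \cref{assump:simple_eigenvalue} and \eqref{eq:unbounded_cond}.
\item The second factor is nonzero at $s=0$, because \eqref{eq:bounded_cond} rules out $-1\in\sigma(\widetilde{L}_R^P(0))$.
\end{itemize}
Hence $\det A(s)$ has a simple zero, and $(I+\widetilde{L}_H^P(s))^{-1}=A(s)^{-1}\mathrm{diag}(sI,I)$ has at most a simple pole at the origin. Since the permutation is a similarity, the same holds for $(I+\widetilde{L}_H(s))^{-1}$. Only with this in hand does the sufficient Nyquist lemma of Part I deliver stability under \cref{def:stability}. You already have every ingredient from your indentation analysis, so adding this argument closes the gap.
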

\begin{proof}
    \lhg{The proof is given in Appendix \ref{app:stability_proof}.}
\end{proof}

\lhc{\Cref{prop:stability} therefore certifies stability using conditions spanning} the $IV$ model at high frequencies, the $PQ$ model at low frequencies, and all three layers of locality (bus, extended-subsystem, and centralised). \lhc{The implementable conditions are those of [Part I, \cref{pt1-cor:pos_real,pt1-cor:dw_shell,pt1-cor:inf_norm,pt1-cor:fixed_net,pt1-cor:ext_stability_simp,pt1-cor:ext_stability,pt1-cor:conic}], transferred to the hybrid model via \cref{cor:iqc_decentralised_imped}, with the centralised condition \eqref{eq:loci_condition} covering any remaining regions of the contour.}

\section{Case Study: Kundur Two-Area System} \label{sec:case_study}
To test the validity of the proposed decentralised small-signal stability framework, we now demonstrate its application to the Kundur two-area system \cite[Example 12.6]{kundur_PowerSystem_22}. For particular control structures, this system is known to be stable, but this conclusion is typically reached through centralised analysis. 

The system consists of four generators ($G1$--$G4$) and eleven buses, as shown in \cref{fig:kundur}. Each generator was represented by the six-state model of \cite[Section 13.3.2]{kundur_PowerSystem_22} \lhc{with a thyristor exciter, a power system stabiliser and a speed-governing turbine. Lines, transformers, loads and shunts were modelled as in [Part I, \cref{pt1-sec:model_network}]. Network parameters and setpoints are those of \cite[Example 12.6]{kundur_PowerSystem_22}, and the controller transfer functions and gains are given in Appendix \ref{app:case_study_params}.}

\begin{figure}[t]
    \centering
    \includegraphics[width=0.48\textwidth]{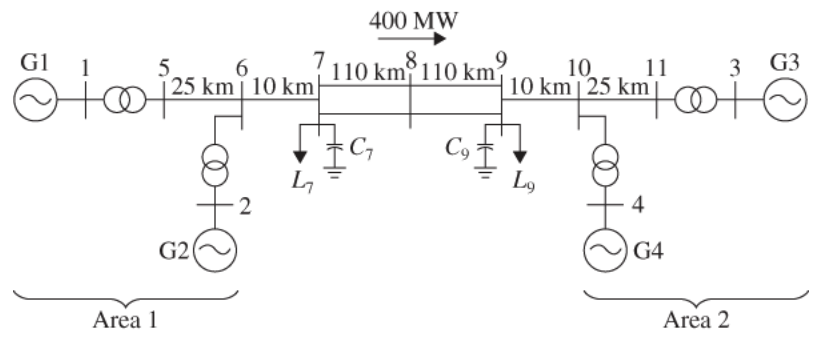}
    \caption{The Kundur two-area model. Adapted from \cite{kundur_PowerSystem_22}.}
    \label{fig:kundur}
\end{figure}

The system was linearised about its steady-state operating point \lhc{in the common $DQ$ reference frame}. 
\lhc{Each generator was combined with its transformer, together with the shunt branch of the connected line so that the resulting bus impedance was proper}, giving the impedances $Z_i(s)$ at the four buses $\mathcal{V}_B = \{\nu_5, \nu_6, \nu_{10}, \nu_{11}\}$. 
Finally, the \lhc{branches} between buses $6$ and $10$ were \lhc{merged into a single} admittance $Y_{6,10}(s)$ with shunt admittances $Y_6(s)$ and $Y_{10}(s)$. 

\subsection{Unstable Subsystems}
Once each $Z_i(s)$ was calculated, the system was converted to the $PQ$ model using \eqref{eq:N_def} and \eqref{eq:pq_from_iv}. Unstable poles were found in both $Z_i(s)$ and $G_i(s)$ at every $\nu_i \in \mathcal{V}_B$, \lhc{by the two mechanisms described in \cref{sec:model_instab}}.
\lhc{The equilibrium reactive power injections satisfied $Q_i^\star > 0$ at all four buses, so \cref{lemma:iv_unstable} predicts the low-frequency unstable poles observed in $Z_i(s)$.}
Those in $G_i(s)$ arose from encirclements over \SIrange{2.0e4}{2.1e4}{\radian\per\second}, \lhc{near the resonance of the series transformer branch, as predicted by \cref{lemma:pq_unstable}}.

\lhc{The two mechanisms occur at opposite ends of the frequency spectrum, so as anticipated in \cref{sec:model_hybrid} a single cutoff placed between them suffices.}
Applying the high-pass filter
\begin{equation}
    F_{H,i}(s) = \frac{s^3}{(s+\omega_{c,i})(s^2+2\zeta_{i}\omega_{c,i}s + \omega_{c,i}^2)} I_2
\end{equation}
with $\omega_{c,i}=\SI{15}{\radian\per\second}$ and $\zeta_i = 0.4$ at each bus gives $\widetilde{G}_i(s) \in \mathbf{RH}_{\infty,0}^{2 \times 2}$ via \eqref{eq:hybrid_bus_def} \lhg{(with one pole at $s=0$) as required}.
The hybrid network $\widetilde{N}_H(s)$ was then created using \eqref{eq:hybrid_net_def}.

\subsection{Local Stability Analysis}
\Cref{prop:stability} was then applied to the \lhi{system with the hybrid representation}, with results \lhc{for the decentralised conditions} shown in \cref{fig:case_study_result}. The conditions in the figure are split by locality layer, as specified in Part I. 
The bus-level conditions were tested using \cref{cor:iqc_decentralised_imped} and the multipliers from Part I. Specific conditions tested were the positive-real [Part I, \cref{pt1-cor:pos_real}], DW shell separation [Part I, \cref{pt1-cor:dw_shell}], infinity norm [Part I, \cref{pt1-cor:inf_norm}], small gain, small phase, rotated positive-real\footnote{It was verified that $(I_{N_B} \otimes J)Y_N(j\omega) - Y_N(j\omega)^\ast (I_{N_B} \otimes J)  \geq 0$ for $\omega \leq \omega_0$, with $\omega_0 = 2\pi \times \SI{60}{\hertz} = \SI{377}{\radian\per\second}$.} (PR) [Part I, \cref{pt1-cor:fixed_net}], and conic combination [Part I, \cref{pt1-cor:conic}] conditions. \lhh{The extended-subsystem conditions were tested using \cref{lemma:iqc_ext}}. \Cref{fig:num_range} also illustrates the graphical form of [Part I, \cref{pt1-cor:ext_stability_nr}] at two sample frequencies.

The positive-real condition held for all $\omega > \SI{1580}{\radian\per\second}$ and the extended-subsystem conditions for $\omega >\SI{3.28}{\radian\per\second}$
\lhc{(except for a gap over \SIrange{2160}{2282}{\radian\per\second} that the positive-real condition covers). Conditions for fixed networks were also feasible above \SI{11.27}{\radian\per\second}.
Hence, \eqref{eq:loci_condition} holds for $\omega >\SI{3.28}{\radian\per\second}$, which can be entirely determined using conditions that support plug-and-play operation --- unlike in Part~I, where the unstable $Z_B(s)$ prevented the application of extended-subsystem conditions}. 

\begin{figure}[t]
    \centering
    \includegraphics[width=0.48\textwidth]{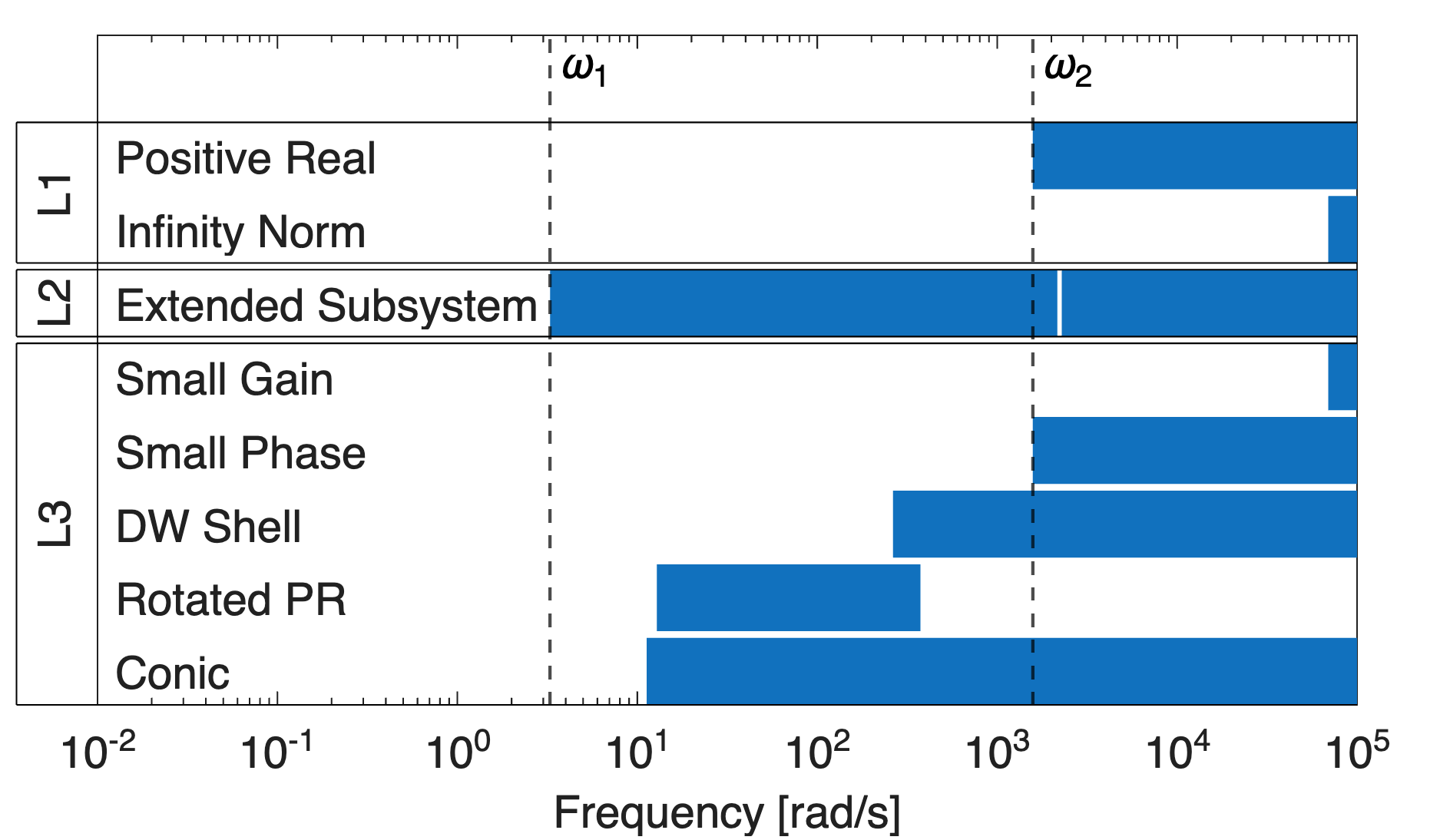}
    \caption{Frequencies over which [Part I, \cref{pt1-cor:pos_real,pt1-cor:dw_shell,pt1-cor:inf_norm,pt1-cor:fixed_net,pt1-cor:conic}] and \cref{lemma:iqc_ext} hold, split by layer of locality (decentralised bus (L1), decentralised extended subsystem (L2), and local conditions for fixed networks (L3)). Here, $\omega_1 = \SI{3.28}{\radian\per\second}$ and $\omega_2 = \SI{1580}{\radian\per\second}$.}
    \label{fig:case_study_result}
\end{figure}

\begin{figure}[t]
    \centering
    \begin{subfigure}{0.24\textwidth}
        \centering
        \includegraphics[width=0.95\textwidth]{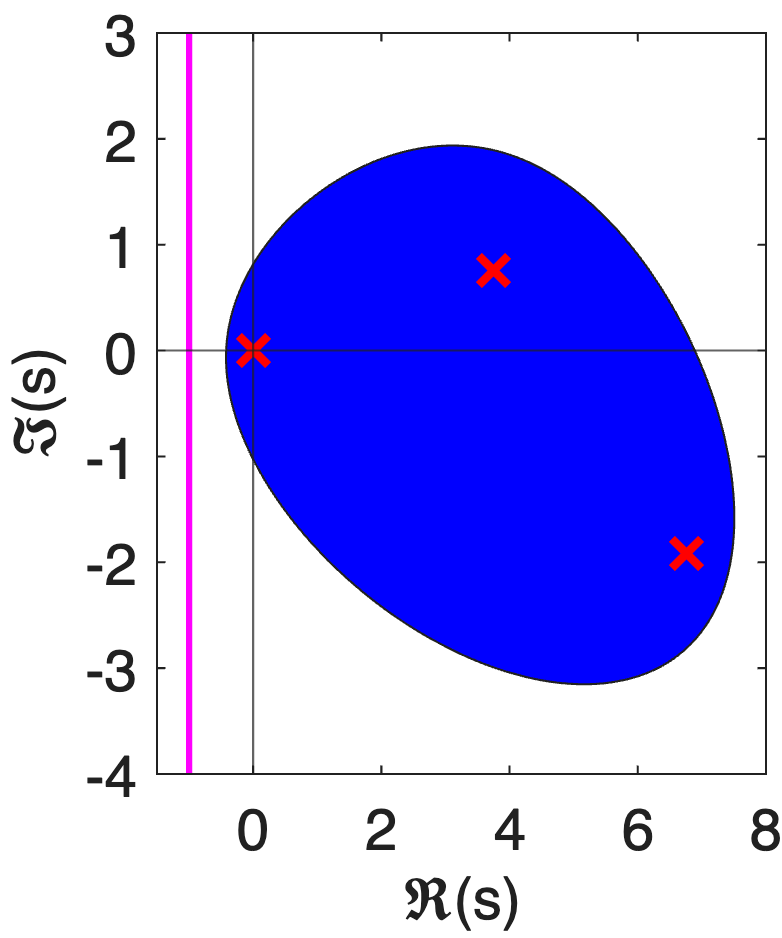}
        \caption{$\omega = \SI{30}{\radian\per\second}$.}
        \label{fig:num_range_30}
    \end{subfigure}
    \begin{subfigure}{0.24\textwidth}
        \centering
        \includegraphics[width=0.95\textwidth]{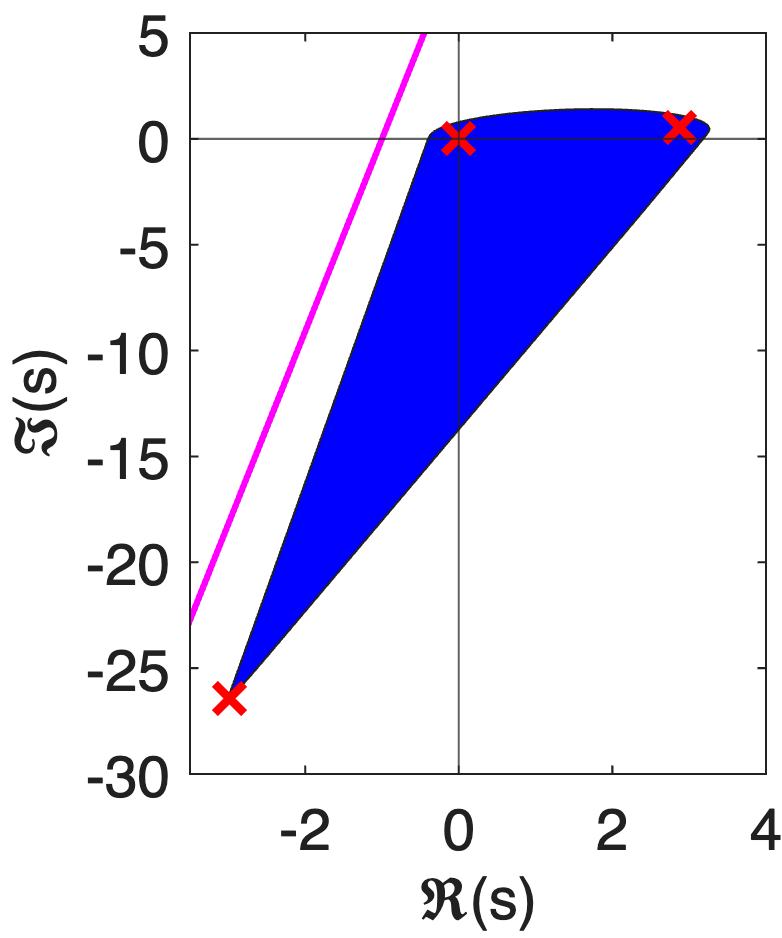}
        \caption{$\omega = \SI{380}{\radian\per\second}$.}
        \label{fig:num_range_380}
    \end{subfigure}
    \caption{Numerical ranges of $\widetilde{C}_{E,i}(s)$ (defined analogously to [Part I, \cref{pt1-cor:ext_stability_nr}]) for bus $\nu_6 \in \mathcal{V}_B$ at two sample frequencies. Eigenvalues are shown with red crosses. In both cases, the numerical range lies to the right of a line through $-1$ (in magenta) with $\bar{\eta}(30j) = 0$ and $\bar{\eta}(380j)=0.11$.}
    \label{fig:num_range}
\end{figure}

\subsection{Centralised Low-Frequency Analysis}
Below \SI{3.28}{\radian\per\second}, the decentralised conditions of \cref{lemma:iqc_decentralised,lemma:iqc_ext} were infeasible, so the centralised analysis of \cref{sec:low_freq} was used. 
\lhc{Letting $\omega_l =\SI{3.28}{\radian\per\second}$,  \cref{assump:low_freq_simp,assump:simple_eigenvalue} were both verified numerically, with $\widetilde{N}_H(j\omega) \approx \widetilde{N}_H(0)$ and $\mathrm{rank}(\widetilde{L}_{\omega\delta}(s)) = 3$ for $\omega < \omega_l$, and \eqref{eq:low_freq_conditions} was confirmed to hold at $s = 0$, as required by \cref{prop:stability}.}

\Cref{fig:ny_unbounded,fig:ny_bounded} show the unbounded and bounded characteristic loci of $\widetilde{L}_H(s)$ over the remainder of the contour, \lhc{where it can be seen} that none passes to the left of the point $-1$ on the real axis. Each figure compares the exact loci of $\widetilde{L}_H(s)$ (blue) with those obtained by fixing $\widetilde{N}_H(s) = \widetilde{N}_H(0)$ (red), and with the decoupled approximation of \cref{lemma:char_poly} (yellow). The blue and red curves overlap, confirming \cref{assump:low_freq_simp} over this range, \lhc{and both are closely tracked by the yellow curves, confirming that \eqref{eq:char_poly_decoupled} decouples as claimed}.

\begin{figure}[t]
    \centering
    \begin{subfigure}{0.48\textwidth}
        \centering
        \includegraphics[width=0.99\textwidth]{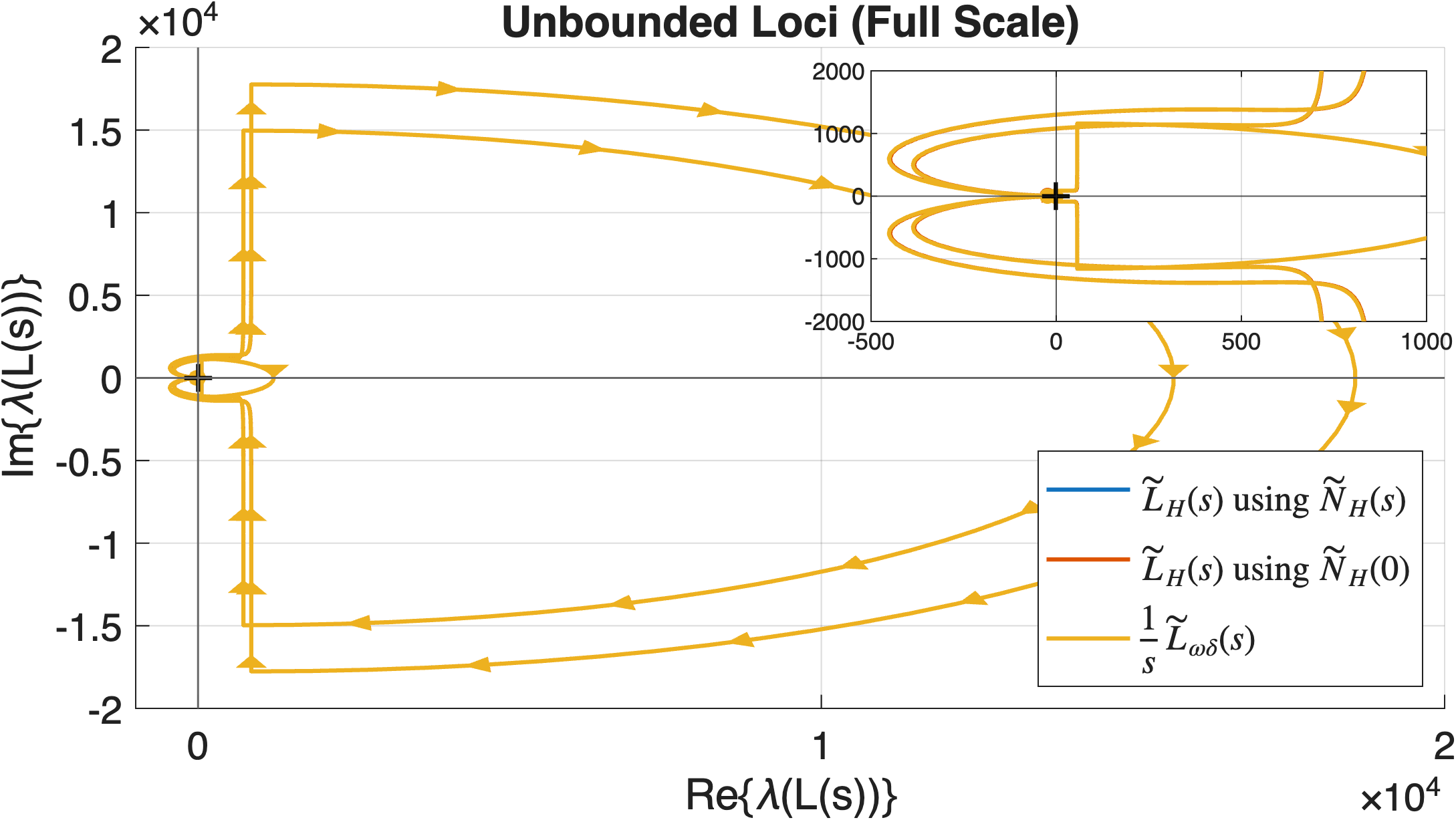}
        \caption{Unbounded loci of $\widetilde{L}_H(s)$.}
        \label{fig:ny_unbounded_full}
        \vspace*{1em}
    \end{subfigure}
    \\
    \begin{subfigure}{0.48\textwidth}
        \centering
        \includegraphics[width=0.99\textwidth]{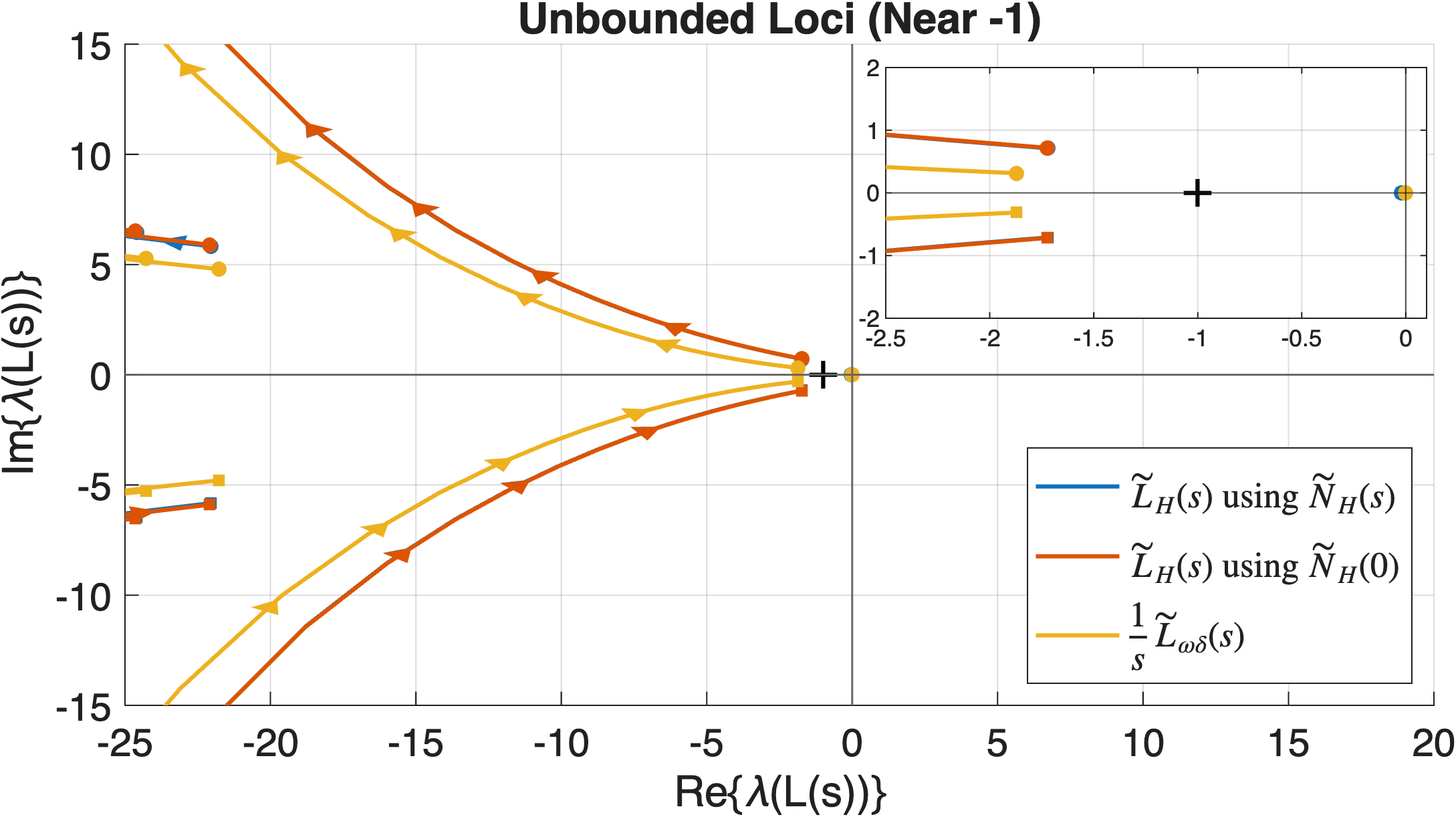}
        \caption{Unbounded loci of $\widetilde{L}_H(s)$, zoomed near $-1$.}
        \label{fig:ny_unbounded_zoomed}
    \end{subfigure}
    \caption{The Nyquist plot of $\widetilde{L}_H(s)$ showing only unbounded characteristic loci on the portion of the contour where $\omega \leq \SI{3.28}{\radian\per\second}$, including $\Gamma_\epsilon$, with $\epsilon = 1 \times 10^{-2}$. Insets show further detail.}
    \label{fig:ny_unbounded}
\end{figure}
\begin{figure}[t]
    \centering
    \includegraphics[width=0.48\textwidth]{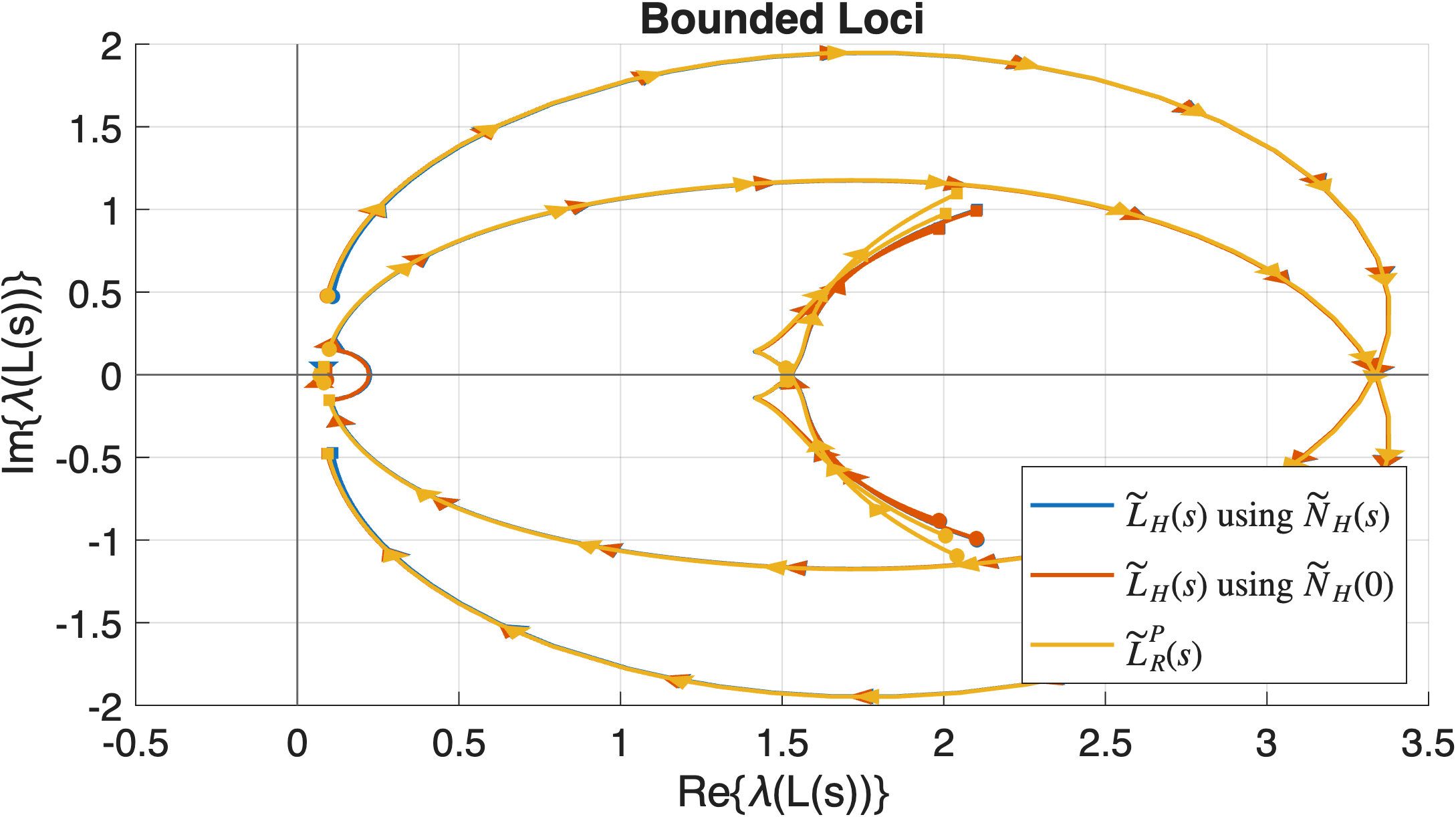}
    \caption{The Nyquist plot of $\widetilde{L}_H(s)$ showing only bounded characteristic loci on the portion of the contour where $\omega \leq \SI{3.28}{\radian\per\second}$, including $\Gamma_\epsilon$, with $\epsilon = 1 \times 10^{-2}$.}
    \label{fig:ny_bounded}
\end{figure}

The conditions of \cref{cor:low_freq} are therefore satisfied: the unbounded loci \lhc{diverge along the imaginary axis and sweep through the right half-plane on} $\Gamma_\epsilon$, while the bounded loci remain in the right half-plane.
\lhc{This is the behaviour anticipated in \cref{sec:low_freq_behaviour}, where the bounded
loci are governed by the reduced power-flow Jacobian (the eigenvalues of
$J_{qv} - J_{q\delta}J_{p\delta}^{\#}J_{pv}$ in \eqref{eq:reduced_jacobian} were
confirmed to lie in the right half-plane).}
\lhc{Provided operating points remain sensible, this low-frequency verification is an infrequent system-level check that need not be repeated every time a device connects.}

\lhc{Every condition of \cref{prop:stability} is therefore satisfied, certifying that $[\widetilde{G}_B(s),\widetilde{N}_H(s)]$ is stable. For higher frequencies, this conclusion was reached through decentralised analysis using conditions that lead to plug-and-play operation, providing a grid code for future modifications to this network.}

\section{Conclusion} \label{sec:conclusion}
Over two parts, this paper presented a decentralised framework for certifying small-signal stability in AC grids. \lhi{This is based on} quadratic constraints on the input--output properties of individual subsystems, \lhi{formulated as conditions on corresponding frequency response functions}. The framework makes no simplifications of the underlying dynamics, and, through the extensions of Part II, accommodates networks with unstable subsystems and without a stiff voltage source. 

Part II addressed two limitations identified in Part I. First, by linking the impedance ($IV$) and power $(PQ)$ representations, we showed analytically that unstable subsystem poles arise in many common scenarios, and that the two mechanisms are separated in timescale. A hybrid model exploiting that separation removes them, so that the stability framework of Part I applies directly and the extended-subsystem conditions that enable plug-and-play operation are recovered. 
Second, a direct characterisation of the low-frequency Nyquist plot showed that the characteristic loci decouple into bounded and unbounded branches along the classical $P$--$\delta$ / $Q$--$V$ separation. Sufficient conditions on each set --- when combined with a decentralised high-frequency analysis --- allow us to certify that the Nyquist criterion holds for the hybrid \lhi{representation}. These conditions relate to the reduced power-flow Jacobian already constrained during generator dispatch, so grid codes can concentrate on higher frequencies.

Both results were verified on the Kundur two-area system. Future work will address \lhg{grids containing} internal nodes with zero net current injection, practical selection of \lhi{specifications}, and further validation studies.

\appendix

\subsection{Proof of \cref{lemma:iv_unstable}} \label{app:iv_unstable_proof}
\begin{proof}
In what follows, we omit the subscript $(\cdot)_i$ and superscript $(\cdot)^\star$ in the system parameters to simplify the notation. As $s\rightarrow 0$, the Laurent expansion of $G_i(s)$ gives
\begin{equation*}
    G_i(s) = \begin{bmatrix}
        \frac{k_\omega}{s} - k_\omega \tau_\omega + k_\omega \tau_\omega^2 s& 0 \\ 0 & k_v-k_v\tau_v s
    \end{bmatrix} + \mathcal{O}(s^2).
\end{equation*}
The eigenvalues of $L_{Z,i}(s)$ are given by
\begin{equation} \label{eq:lemma_eig_formula}
    \begin{aligned}
        \lambda_\pm(s) = & \tfrac{1}{2} \left(\mathrm{tr}(L_{Z}(s)) \right. \\
        & \qquad \left. \pm \sqrt{\mathrm{tr}(L_{Z}(s))^2 - 4\det(L_{Z}(s))}\right).
    \end{aligned}
\end{equation}
We have
\begin{equation*}
    \begin{aligned}
        \mathrm{tr}(L_{Z,i}(s)) =
            &-Qk_\omega\frac{1}{s} + Q(k_\omega \tau_\omega + k_v) \\
            &\qquad- Q(k_\omega \tau_\omega^2+ k_v\tau_v)s  + \mathcal{O}(s^2), \\
        \mathrm{tr}(L_{Z,i}(s))^2 =
            & Q^2k_\omega^2\frac{1}{s^2} - 2Q^{2}k_\omega(k_\omega \tau_\omega + k_v)\frac{1}{s} \\
            &\qquad + 2Q^{2}k_\omega(k_\omega\tau_\omega^2 + k_v\tau_v) \\
            &\qquad + Q^2(k_\omega \tau_\omega + k_v)^2 + \mathcal{O}(s), \\
                                \det(L_{Z,i}(s)) =
            & - \left(Q^2 + P^2\right)\left( k_\omega k_v \frac{1}{s} \right. \\
            & \qquad \left.- k_\omega k_v(\tau_\omega + \tau_v) \right) + \mathcal{O}(s).
    \end{aligned}
\end{equation*}
Therefore,
\begin{equation*}
    \begin{aligned}
        \mathrm{tr}&(L_{Z,i}(s))^2  - 4\det(L_{Z,i}(s)) \\
        & = Q^2 k_\omega^2 \frac{1}{s^2} - 2 Q^2 k_\omega(k_\omega\tau_\omega - k_v)\frac{1}{s} + 4P^2 k_\omega k_v \frac{1}{s} \\
            &\qquad  + Q^2(k_\omega \tau_\omega - k_v)^2 +2Q^2 k_\omega(k_\omega\lhc{\tau_\omega^2} - k_v\tau_v) \\
            & \qquad -4P^2k_\omega k_v(\tau_\omega+\tau_v) + \mathcal{O}(s) \\
        & = Q^2\left(k_\omega \frac{1}{s} - (k_\omega \tau_\omega - k_v) \right)^2 + 4P^2k_\omega k_v \frac{1}{s} + \mathcal{O}(1).
    \end{aligned}
\end{equation*}
Let
$
    h(s) = Q\left(k_\omega \frac{1}{s} - (k_\omega \tau_\omega - k_v) \right)
$.
Then
\begin{equation*}
    \begin{aligned}
        \mathrm{tr}(L_{Z,i}&(s))^2  - 4\det(L_{Z,i}(s)) \\
        & = h(s)^2 + 4P^2k_\omega k_v \frac{1}{s} + \mathcal{O}(1) \\
        & = h(s)^2 \left[1 + h(s)^{-2} \left(4P^2k_\omega k_v \frac{1}{s} + \mathcal{O}(\lhc{1}) \right) \right].
    \end{aligned}
\end{equation*}
Now
$
    h(s)^{-2} = \frac{s^2}{Q^2 k_\omega^2} + \mathcal{O}(s^3)
$,
so
\begin{equation*}
    \begin{aligned}
        \mathrm{tr}(L_{Z,i}(s))^2  &- 4\det(L_{Z,i}(s)) \\
        & = h(s)^2 \left[1 + \frac{4P^2 k_v}{Q^2 k_\omega}s + \mathcal{O}(s^2) \right].
    \end{aligned}
\end{equation*}
Therefore
\begin{equation*}
    \begin{aligned}
        & \sqrt{ \mathrm{tr}(L_{Z,i}(s))^2  - 4\det(L_{Z,i}(s)) } \\
        & \qquad = h(s) \left[1 + \frac{2P^2 k_v}{Q^2 k_\omega}s + \mathcal{O}(s^2) \right] \\
        & \qquad = Qk_\omega \frac{1}{s} - Q k_\omega\tau_\omega + Q k_v + \frac{2P^2 k_v}{Q} + \mathcal{O}(s).
    \end{aligned}
\end{equation*}
Now, evaluating $\lambda_\pm(s)$ using \eqref{eq:lemma_eig_formula} gives \eqref{eq:instab_iv_loci}.
\end{proof}

\subsection{Proof of \cref{lemma:pq_unstable}} \label{app:pq_unstable_proof}
\begin{proof}
By direct calculation, we get
\begin{equation*}
    \check{Z}_i^p(s) = \frac{1}{ {V_i^\star}^2 } \begin{bmatrix}
        -b(s) & -a(s) \\ a(s) & -b(s)
    \end{bmatrix}.
\end{equation*}
Therefore
\begin{equation*}
    L_{G,i}(s) = \frac{-1}{ {V_i^\star}^2 } \begin{bmatrix}
        - a(s)P_i^\star +b(s)Q_i^\star  & -a(s)Q_i^\star-b(s)P_i^\star  \\ -a(s)Q_i^\star - b(s)P_i^\star & a(s)P_i^\star -b(s)Q_i^\star
    \end{bmatrix}.
\end{equation*}
The eigenvalues of $L_{G,i}(s)$ are given by the solutions to $\det(\lambda I-L_{G,i}(s)) = 0$. This gives
\begin{equation*}
    \begin{aligned}
        &\lambda^2 \!-\! \frac{1}{ {V_i^\star}^4 } \left[(a(s)P_i^\star \!-\! b(s)Q_i^\star)^2 \!+\! (a(s)Q_i^\star \!+\! b(s)P_i^\star)^2 \right] = 0 \\
        &\implies \lambda^2 - \frac{{S_i^\star}^2}{ {V_i^\star}^4 } \left[ a(s)^2 + b(s)^2 \right] = 0.
    \end{aligned}
\end{equation*}
Solving the above gives \eqref{eq:instab_pq_loci}.
\end{proof}

\subsection{Proof of \cref{cor:iqc_decentralised_imped}} \label{app:iqc_decentralised_imped_proof}
\begin{proof}
In what follows, we omit the argument $s$ to simplify the notation. As \eqref{eq:iqc_decentralised_network_Y} holds, we have by [Part I, \cref{pt1-cor:iqc_convex}] that 
\begin{equation*}
    \tau^2 Y_N^\ast \Pi_{22} Y_N - \tau \Pi_{12}^\ast Y_N - 
    \tau Y_N^\ast \Pi_{12} + \Pi_{11} \leq 0,
\end{equation*}
for all $\tau \in [0,1]$. Now, defining $V^\star:= (\oplus_{\nu_i \in \mathcal{V}_B} V_i^\star)\otimes I_2$ and using \eqref{eq:U_squared}, we can write
\begin{equation*}
    \begin{aligned}
        & \tau^2 {V^\star}^{-2} U^\ddagger U^\ddagger Y_N^\ast U^\dagger U^\dagger {V^\star}^{-2}  \Pi_{22}  {V^\star}^{-2} U^\dagger U^\dagger Y_N  U^\ddagger U^\ddagger {V^\star}^{-2} \\
        & - \tau \Pi_{12}^\ast {V^\star}^{-2} U^\dagger U^\dagger Y_N U^\ddagger U^\ddagger {V^\star}^{-2} \\ 
        & - \tau {V^\star}^{-2} U^\ddagger U^\ddagger Y_N^\ast  U^\dagger U^\dagger {V^\star}^{-2} \Pi_{12}
        + \Pi_{11} \leq 0.
    \end{aligned}
\end{equation*}
Definitions \eqref{eq:hybrid_net_def} and \eqref{eq:W_HL_def} give $U^\dagger Y_N U^\ddagger = \widetilde{N}_H-W_L$ and pre- and post-multiplying by $U^\ddagger$ gives 
\begin{equation*}
    \begin{aligned}
        & \tau^2 (\widetilde{N}_H^\ast- W_L^\ast) U^\dagger {V^\star}^{-2} \Pi_{22} {V^\star}^{-2} U^\dagger (\widetilde{N}_H-W_L) \\
        & - \tau {V^\star}^{-2} U^\ddagger \Pi_{12}^\ast U^\dagger (\widetilde{N}_H-W_L) \\  
        & - \tau (\widetilde{N}_H^\ast- W_L^\ast) U^\dagger \Pi_{12} U^\ddagger {V^\star}^{-2}
        + U^\ddagger \Pi_{11} U^\ddagger \leq 0.
    \end{aligned}
\end{equation*}
Using $U^\dagger W_L = {V^\star}^2 \check{W}_L U^\ddagger$ (from \eqref{eq:W_HL_def}), this can be written in the form \eqref{eq:iqc_decentralised_network} where 
\begin{equation} \label{eq:mult_blocks_transformed}
    \begin{aligned}
        &\begin{aligned}
            \widetilde{\Pi}_{11}^i = & \ U_i^\ddagger \left[ \Pi_{11}^i + \tau  \check{W}_{L,i}^\ast \Pi_{12}^i  + \tau  {\Pi_{12}^i}^\ast  \check{W}_{L,i} \right.  \\
            & \left. + \tau^2  \check{W}_{L,i}^\ast   \Pi_{22}^i \check{W}_{L,i} \right] U_i^\ddagger,
        \end{aligned} \\
        &\widetilde{\Pi}_{12}^i = {V_i^\star}^{-2} U_i^\dagger \left[ \Pi_{12}^i  + \tau   \Pi_{22}^i \check{W}_{L,i} \right] U_i^\ddagger ,\\
        &\widetilde{\Pi}_{22}^i = {V_i^\star}^{-4} U_i^\dagger  \Pi_{22}^i  U_i^\dagger.
    \end{aligned}
\end{equation}

Now, by \cref{lemma:iqc_decentralised}, \eqref{eq:loci_condition} holds if, for each $\tau \in [0,1]$, \eqref{eq:iqc_decentralised_buses} holds at every $\nu_i \in \mathcal{V}_B$ with the multiplier blocks above. Substituting these blocks into the left-hand side of \eqref{eq:iqc_decentralised_buses}, pre- and post-multiplying by $U_i^\dagger$ and using \eqref{eq:hybrid_impd_def} gives
\begin{equation*}
\begin{aligned}
     g_i(\tau)  
     :=&\, \widetilde{Z}_i^\ast \Pi_{11}^i \widetilde{Z}_i + (I_2 + \tau \check{W}_{L,i}\widetilde{Z}_i)^\ast \Pi_{12}^i \widetilde{Z}_i \\
        &\quad  + \widetilde{Z}_i^\ast {\Pi_{12}^i}^\ast (I_2 + \tau \check{W}_{L,i}\widetilde{Z}_i) \\
        &\quad + (I_2 + \tau \check{W}_{L,i}\widetilde{Z}_i)^\ast \Pi_{22}^i (I_2 + \tau \check{W}_{L,i}\widetilde{Z}_i) \\
    = &\, \begin{bmatrix}
        \widetilde{Z}_i \\ I + \tau \check{W}_{L,i} \widetilde{Z}_i
    \end{bmatrix}^\ast \begin{bmatrix}
        \Pi_{11}^i & {\Pi_{12}^i}^\ast \\ \Pi_{12}^i & \Pi_{22}^i
    \end{bmatrix} \begin{bmatrix}
        \widetilde{Z}_i \\ I + \tau \check{W}_{L,i} \widetilde{Z}_i
    \end{bmatrix}.
        \end{aligned} 
\end{equation*}

The matrix polynomial $g_i(\tau)$ is quadratic in $\tau$, and we require $g_i(\tau) > 0 $ for all $\tau \in [0,1]$. By collecting powers of $\tau$, we obtain the identity
\begin{equation} \label{eq:chord}
    g_i(\tau) = (1-\tau) g_i(0) + \tau g_i(1) - \tau(1-\tau) A_i,
\end{equation}
where $A_i = \widetilde{Z}_i^\ast \check{W}_{L,i}^\ast \Pi_{22}^i \check{W}_{L,i} \widetilde{Z}_i$. Completing the square in the final term gives 
\begin{equation*} 
    \begin{aligned}
        g_i(\tau) 
        &= (1-\tau) g_i(0) + \tau g_i(1) - \left[\tfrac{1}{4}-(\tau - \tfrac{1}{2})^2 \right] A_i \\
        &= (1-\tau) \left[g_i(0) - \tfrac{1}{4}A_i \right] + \tau \left[g_i(1) - \tfrac{1}{4}A_i\right]  \\
            &\qquad + (\tau - \tfrac{1}{2})^2 A_i.
    \end{aligned}
\end{equation*}
As $\Pi_{22}^i \geq 0$, we have $A_i \geq 0$. Therefore, $g_i(\tau) > 0$ for all $\tau \in [0,1]$ if both $g_i(0) - \tfrac{1}{4}A_i > 0$ and $g_i(1) - \tfrac{1}{4}A_i > 0$. 

The first of these constraints gives
\begin{equation*}
    \widetilde{Z}_i^\ast \left(\Pi_{11}^i - M_i\right)\widetilde{Z}_i
    + \Pi_{12}^i \widetilde{Z}_i + \widetilde{Z}_i^\ast {\Pi_{12}^i}^\ast
    + \Pi_{22}^i > 0,
\end{equation*}
which is \eqref{eq:iqc_decentralised_buses_Z_tilde} with $M_i$ in \eqref{eq:M_def}. Similarly, the second constraint gives 
\begin{equation} \label{eq:lemma_Y_Z_deriv}
\setlength{\arraycolsep}{0.1cm}
    \begin{bmatrix}
        \widetilde{Z}_i \\ I +  \check{W}_{L,i} \widetilde{Z}_i
    \end{bmatrix}^\ast \begin{bmatrix}
        \Pi_{11}^i \!-\! M_i & \!\! {\Pi_{12}^i}^\ast \\ \Pi_{12}^i & \!\! \Pi_{22}^i
    \end{bmatrix} \begin{bmatrix}
        \widetilde{Z}_i \\ I + \check{W}_{L,i} \widetilde{Z}_i
    \end{bmatrix} > 0.
\end{equation}
Now, factoring the outer matrix gives
\begin{equation*}
    \begin{bmatrix}
        \widetilde{Z}_i \\ I + \check{W}_{L,i} \widetilde{Z}_i
    \end{bmatrix} = \begin{bmatrix}
        \widetilde{Z}_i(I + \check{W}_{L,i} \widetilde{Z}_i)^{-1} \\ I
    \end{bmatrix} (I + \check{W}_{L,i} \widetilde{Z}_i),
\end{equation*}
and, using \eqref{eq:U_squared}, \eqref{eq:hybrid_impd_def} and \eqref{eq:W_HL_def}, the top block simplifies to
\begin{equation*}
    \begin{aligned}
        \widetilde{Z}_i(I + \check{W}_{L,i} \widetilde{Z}_i &)^{-1} 
         = U_i^\ddagger \widetilde{G}_i U_i^\dagger ( I + {V_i^\star}^{-2} U_i^\dagger W_{L,i} \widetilde{G}_i U_i^\dagger)^{-1} \\
        & = U_i^\ddagger \widetilde{G}_i (I+ W_{L,i} \widetilde{G}_i)^{-1}  U_i^\dagger \\
        & = U_i^\ddagger G_i (I+ W_{H,i} G_i)^{-1} (I+ W_{L,i} \widetilde{G}_i)^{-1}  U_i^\dagger,
    \end{aligned}
\end{equation*}
where in the last equality, we used \eqref{eq:hybrid_bus_def} and the push-through identity $(I+G_i W_{H,i})^{-1} G_i = G_i (I + W_{H,i} G_i)^{-1}$. Then
\begin{equation*}
    \begin{aligned}
        (I&+W_{H,i}  G_i)^{-1} ( I + W_{L,i} \widetilde{G}_i)^{-1} \\
        & = [ ( I + W_{L,i} \widetilde{G}_i)(I+W_{H,i}G_i)]^{-1} \\
        & = [ ( I + W_{L,i} G_i(I+W_{H,i}G_i)^{-1})(I+W_{H,i}G_i)]^{-1} \\
        & = [I+W_{H,i}G_i + W_{L,i} G_i ]^{-1} \\
        & = [I + W_i G_i]^{-1},
    \end{aligned}
\end{equation*}
using $W_{H,i} + W_{L,i} = W_i$, so 
\begin{equation*}
    \widetilde{Z}_i(I + \check{W}_{L,i} \widetilde{Z}_i)^{-1} = U_i^\ddagger G_i (I+ W_i G_i)^{-1}  U_i^\dagger = Z_i
\end{equation*}
by \eqref{eq:iv_from_pq}. Therefore, \eqref{eq:lemma_Y_Z_deriv} holds if and only if \eqref{eq:iqc_decentralised_buses_Z_no_tilde} holds by congruence. 
\end{proof}

\subsection{Proof of \cref{lemma:char_poly}} \label{app:char_poly_proof}
\begin{proof}
The characteristic polynomial is given by 
\begin{equation*}
    \det \begin{bmatrix}
    \frac{1}{s} \widetilde{L}_{\omega \delta}(s) -\lambda I & \frac{1}{s} \widetilde{L}_{\omega v}(s) \\ \widetilde{L}_{v \delta}(s) & \widetilde{L}_{vv}(s) - \lambda I
    \end{bmatrix} = 0.
\end{equation*}
Using the Schur complement gives $\det(\frac{1}{s}\widetilde{L}_{\omega \delta}(s) -\lambda I)\det(\widehat{L}_R^P(s,\lambda) - \lambda I) = 0$, where
\begin{equation} \label{eq:open_loop_reduced}
    \widehat{L}_R^P(s,\lambda) = \widetilde{L}_{vv}(s) - \widetilde{L}_{v\delta}(s) (\widetilde{L}_{\omega\delta}(s) - s\lambda I)^{-1} \widetilde{L}_{\omega v}(s).
\end{equation}
Define the projector $P_0$ for $\widetilde{L}_{\omega\delta}(s)$ as in \eqref{eq:group_inv_proj}. Under \cref{assump:simple_eigenvalue}, $P_0$ takes the form
\begin{equation} \label{eq:P0_form}
    P_0 = \mathbf{1} \xi^T, \qquad \text{where} \quad 
    \xi^T \widetilde{L}_{\omega\delta}(s) = 0, \quad \xi^T \mathbf{1} = 1,
\end{equation}
i.e., $\xi$ is the normalised left null vector of $\widetilde{L}_{\omega\delta}(s)$. Hence $P_0$ is rank one and every column is proportional 
to $\mathbf{1}$. Now, Theorem 2.1 of \cite{rose_LaurentExpansion_78} applied to $(\widetilde{L}_{\omega\delta}(s)-s\lambda I)^{-1}$ near $s = 0$ gives 
\begin{equation*}
    \begin{aligned}
        (\widetilde{L}_{\omega\delta}&(s)-s\lambda I)^{-1} \\
        & = \widetilde{L}_{\omega\delta}(s)^\# 
        - (I - \widetilde{L}_{\omega\delta}(s) \widetilde{L}_{\omega\delta}(s)^\#) \frac{1}{s\lambda} + \mathcal{O}(s\lambda) \\
        & = \widetilde{L}_{\omega\delta}(s)^\#  - \frac{1}{s\lambda} \mathbf{1}\xi^T + \mathcal{O}(s\lambda),
    \end{aligned}
\end{equation*}
where we used \eqref{eq:group_invser_proj_prop} and \eqref{eq:P0_form}. Substitution into \eqref{eq:open_loop_reduced} near $s = 0$ and using \eqref{eq:null_identities} gives \eqref{eq:char_poly_decoupled}.
\end{proof}

\subsection{Proof of \cref{cor:low_freq}} \label{app:low_freq_proof}
\begin{proof}
    \lhc{Condition \eqref{eq:loci_condition} requires each branch of the characteristic loci to avoid $(-\infty, -1]$. The zero branch clearly satisfies this, and the bounded branch satisfies it by \eqref{eq:bounded_cond}. For the unbounded branches, let \eqref{eq:unbounded_cond} hold. Along $\Gamma_{j\omega}^{\epsilon-}$, $\Im(\lambda) > 0$ and along $\Gamma_{j\omega}^{\epsilon+}$, $\Im(\lambda) < 0$. Along $\Gamma_\epsilon$, $\lambda$ traces a semicircle of radius $\frac{|\mu|}{\epsilon}$ whose argument decreases from $\arg\mu + \tfrac{\pi}{2}$ to $\arg\mu - \tfrac{\pi}{2}$. Since $\Re(\mu) > 0$ gives $\lvert\arg\mu\rvert < \tfrac{\pi}{2}$, the argument never reaches $\pm\pi$. Hence $\lambda \notin (-\infty,-1]$ throughout.}
\end{proof}

\subsection{Proof of \cref{prop:stability}} \label{app:stability_proof}
\begin{proof}
\lhc{By \cref{lemma:iqc_decentralised,lemma:iqc_ext,cor:low_freq}, each condition gives \eqref{eq:loci_condition} on $\Gamma_{j\omega}^{\epsilon+}$. As in [Part I, \cref{pt1-prop:stability}], we can ignore $\Gamma_{j\omega}^{\epsilon-}$ and $\Gamma_R$.
For $\Gamma_\epsilon$, note that the non-zero eigenvalues of $\widetilde{L}_{\omega\delta}(s)$ and the eigenvalues of $\widetilde{L}_R^P(s)$ vary continuously with $s$ (the rank of $\widetilde{L}_{\omega\delta}(s)$ is constant by \cref{assump:simple_eigenvalue}, so $\widetilde{L}_{\omega\delta}(s)^\#$ is continuous). 
As \lhi{the conditions} \eqref{eq:low_freq_conditions} hold at $s = 0$, 
there \lhi{hence} exists $\bar{\epsilon} > 0$ such that they hold for all $\lvert s \rvert \leq \bar{\epsilon}$. Therefore, \cref{cor:low_freq} gives \eqref{eq:loci_condition} on $\Gamma_\epsilon$ for every $\epsilon \in (0,\bar{\epsilon}]$. 
\lhg{Finally, let $A(s) = \mathrm{diag}(sI,I) (I+\widetilde{L}_H^P(s))$, with $\widetilde{L}_H^P(s)$ in \eqref{eq:open_loop_permuted}. $A(s)$ is analytic near $s=0$ and, using a similar approach as in the proof of \cref{lemma:char_poly}, we get $\det(A(s)) = \det(sI+\widetilde{L}_{\omega\delta}(s)) \det(I + \widetilde{L}_R^P(s) + \mathcal{O}(s))$. The first determinant has a simple zero at $s=0$ by \cref{assump:simple_eigenvalue} and \eqref{eq:unbounded_cond}, and the second is non-zero there by \eqref{eq:bounded_cond}. Hence $\det(A(s))$ has a simple zero at the origin and $(I+\widetilde{L}_H(s))^{-1}$ has at most one pole there, meaning}
stability follows from [Part I, \cref{pt1-lemma:nyquist_sufficient}].}
\end{proof}

\subsection{Case Study Controllers} \label{app:case_study_params}
\textbf{Speed-governing turbine:}
\begin{equation*}
    \Delta P_m(s) = \frac{1}{R} \frac{1}{1 + sT_{a}} \frac{1+sT_{b}}{1 + sT_{c}} \Delta \omega(s)
\end{equation*}
where $\Delta P_m(s)$ is the mechanical power and $\Delta \omega(s)$ is the frequency, with time constants $T_{a} = \SI{0.5}{s}$, $T_{b}= \SI{2.1}{s}$ and $T_{c} = \SI{7}{s}$, and droop constant $R = \{0.04,0.05,0.04,0.06 \}$ for generators $\{G1, G2, G3, G4\}$. 

\textbf{Power system stabiliser (PSS):}
\begin{equation*}
    \Delta v_{PSS}(s) = K_{PSS} \frac{sT_W}{1+sT_W} \frac{1+sT_1}{1+sT_2}  \frac{1+sT_3}{1+sT_4} \Delta \omega(s), 
\end{equation*}
where $\Delta v_{PSS}(s)$ is the PSS output, with $K_{PSS} = 20$, $T_W = \SI{10}{s}$, $T_1 = \SI{0.05}{s}$, $T_2 = \SI{0.02}{s}$, $T_3 = \SI{3}{s}$, and $T_4 = \SI{5.4}{s}$.

\textbf{Thyristor exciter:}
\begin{equation*}
    \Delta E_{fd}(s) = K_A  \left( \frac{1}{1+sT_R} \Delta E_t(s) + \Delta v_{PSS}(s) \right) 
\end{equation*}
where $\Delta E_{fd}(s)$ is the field voltage and $\Delta E_t(s)$ is the terminal voltage magnitude, with $K_A = 200$ and $T_R = \SI{0.01}{s}$.

\section*{References}
\bibliographystyle{IEEEtran}
\bibliography{./Files/references.bib}
\end{document}